\documentclass{lmcs} 
\pdfoutput=1

\usepackage{lastpage}
\lmcsdoi{20}{4}{24}
\lmcsheading{}{\pageref{LastPage}}{}{}%
{Dec.~12,~2022}{Dec.~12,~2024}{}

\usepackage[utf8]{inputenc}%
\usepackage[british]{babel}%

\keywords{Lambda calculus, big-step evaluators, definitional interpreters,
  evaluation strategies, operational semantics, program transformation,
  program equivalence, functional programming.}

\usepackage{amsmath}%
\usepackage{url}%
\usepackage{amssymb}%
\usepackage{mathpartir}%
\usepackage{alltt}%
\usepackage{graphicx}
\usepackage[sort]{cite}
\usepackage{tikz}%
\usetikzlibrary{calc,arrows,matrix}%


\newif\ifusecolour    
\usecolourtrue

\newif\ifrevision     
\revisiontrue


\newcommand{\ie}{\emph{i.e.}}
\newcommand{\eg}{\emph{e.g.}}

\newcommand{\CH}[1]{\mathbf{#1}} 
\newcommand{\GOTO}{\textsc{goto}}
\newcommand{\SECD}{SECD}

\ifusecolour
\definecolor{dimgray}{gray}{0.50}
\definecolor{myredcol}{rgb}{0.9,0.17,0.31}   
\definecolor{mygreencol}{rgb}{1.0,0.65,0.0}  
\definecolor{mybluecol}{rgb}{0.0,0.58,0.71}  
\newcommand{\myred}{myredcol}
\newcommand{\mygreen}{mygreencol}
\newcommand{\myblue}{mybluecol}
\newcommand{\R}[1]{\textcolor{\myred}{#1}}
\newcommand{\E}[1]{\textcolor{\mygreen}{#1}}
\newcommand{\B}[1]{\textcolor{\myblue}{#1}}
\newcommand{\D}[1]{\textcolor{dimgray}{#1}}
\else
\newcommand{\R}[1]{#1}
\newcommand{\E}[1]{#1}
\newcommand{\B}[1]{#1}
\newcommand{\D}[1]{#1}
\fi


\newcommand{\stgy}[1]{{\mathsf{#1}}}
\newcommand{\st}{\stgy{st}} 
\newcommand{\id}{\stgy{id}} 
\newcommand{\bn}{\stgy{bn}} 
\newcommand{\bv}{\stgy{bv}} 
\newcommand{\no}{\stgy{no}} 
\newcommand{\ao}{\stgy{ao}} 
\newcommand{\hr}{\stgy{hr}} 
\newcommand{\he}{\stgy{he}} 
\newcommand{\hn}{\stgy{hn}} 
\newcommand{\ha}{\stgy{ha}} 
\newcommand{\sn}{\stgy{sn}} 
\newcommand{\am}{\stgy{am}} 
\newcommand{\ho}{\stgy{ho}} 
\newcommand{\so}{\stgy{so}} 
\newcommand{\bs}{\stgy{bs}} 
\newcommand{\rn}{\stgy{rn}} 
\newcommand{\su}{\stgy{su}} 
\newcommand{\hy}{\stgy{hy}} 
\newcommand{\ea}{\stgy{ea}} 
\newcommand{\ev}{\stgy{ev}} 
\newcommand{\rb}{\stgy{rb}} 
\newcommand{\ar}{\stgy{args}}   
\newcommand{\bo}{\stgy{bodies}} 

\newcommand{\GL}[1]{\mathcal{#1}}

\newcommand{\la}{\R{{\stgy{la}}}}
\newcommand{\opone}{\stgy{op_1}}
\newcommand{\arone}{\E{\stgy{ar_1}}}
\newcommand{\optwo}{\stgy{op_2}}
\newcommand{\artwo}{\B{\stgy{ar_2}}}

\newcommand{\rel}{\to}

\newcommand{\cas}[3]{[#1/#2](#3)}
\newcommand{\OMEGA}{\mathbf{\Omega}}
\newcommand{\hole}{\scalebox{.6}[1]{\ensuremath\Box}}
\newcommand{\hyb}{\mathbin{\Diamond}}
\newcommand{\ctx}{\mathcal{C}}

\newcommand{\NT}[1]{\mathsf{#1}}
\newcommand{\Var}{\rotatebox[origin=c]{180}{$\Lambda$}}
\newcommand{\Neu}{\NT{Neu}}
\newcommand{\NF}{\NT{NF}}
\newcommand{\HNF}{\NT{HNF}}
\newcommand{\WHNF}{\NT{WHNF}}
\newcommand{\WNF}{\NT{WNF}}

\newcommand{\III}{\bn}
\newcommand{\IIS}{\stgy{IIS}}
\newcommand{\ISI}{\stgy{ISI}}
\newcommand{\ISS}{\bv}
\newcommand{\SII}{\he}
\newcommand{\SIS}{\stgy{SIS}}
\newcommand{\SSI}{\ho}
\newcommand{\SSS}{\ao}

\newcommand{\IIH}{\stgy{IIH}}
\newcommand{\SIH}{\stgy{SIH}}
\newcommand{\HII}{\stgy{HII}}
\newcommand{\HIS}{\stgy{HIS}}
\newcommand{\HIH}{\stgy{HIH}}

\newcommand{\ISH}{\stgy{ISH}}
\newcommand{\SSH}{\stgy{SSH}}
\newcommand{\HSI}{\stgy{HSI}}
\newcommand{\HSS}{\stgy{HSS}}
\newcommand{\HSH}{\stgy{HSH}}

\newcommand{\IHH}{\stgy{IHH}}
\newcommand{\SHH}{\stgy{SHH}}
\newcommand{\HHI}{\stgy{HHI}}
\newcommand{\HHS}{\stgy{HHS}}
\newcommand{\HHH}{\stgy{HHH}}

\newcommand{\RE}{\stgy{RE}}

\newcommand{\I}{\stgy{I}}

\newcommand{\Ev}{\stgy{E}}
\newcommand{\Rb}{\stgy{R}}

\newenvironment{code}%
{\vspace{4pt}\begin{alltt}\small}%
{\end{alltt}\vspace{4pt}}%


\newcounter{pi}
\newcounter{ai}







\begin{document}

\title[Equivalence of eval-readback and eval-apply evaluators]%
{Equivalence of eval-readback and eval-apply big-step evaluators by
  structuring the lambda-calculus's strategy space}%

\author[P.~Nogueira]{%
  Pablo Nogueira\lmcsorcid{0000-0002-8706-0027}%
}%
\address{UDIT University of Design, Innovation and Technology, Madrid, Spain.}%
\email{pablo.nogueira@udit.es}%

\author[\'A.~Garc\'ia-P\'erez]{\'Alvaro
  Garc\'ia-P\'erez\lmcsorcid{0000-0002-9558-6037}%
}%
\address{Université Paris-Saclay, CEA, List, F-91120, Palaiseau, France.}%
\email{alvaro.garciaperez@cea.fr}%

\begin{abstract}
  We study the equivalence between eval-readback and eval-apply big-step
  evaluators in the general setting of the pure lambda calculus. We study
  `one-step' equivalence (same strategy) and also discuss `big-step'
  equivalence (same final result). One-step equivalence extends for free to
  evaluators in other settings (calculi, programming languages, proof
  assistants, etc.) by restricting the terms (closed, convergent) while
  maintaining the strategy. We present a proof that one-step equivalence holds
  when (1) the `readback' stage satisfies straightforward well-formedness
  provisos, (2) the `eval' stage implements a `uniform' strategy, and (3) the
  eval-apply evaluator implements a `balanced hybrid' strategy that has `eval'
  as a subsidiary strategy. The proof proceeds by the `lightweight fusion by
  fixed-point promotion' program transformation on evaluator implementations
  to fuse readback and eval into the balanced hybrid. The proof can be
  followed with no previous knowledge of the transformation. We use Haskell
  2010 as the implementation language, with all evaluators written in monadic
  style to guarantee semantics (strategy) preservation, but the choice of
  implementation language is immaterial.

  To illustrate the large scope of the equivalence, we provide an extensive
  survey of the strategy space using canonical eval-apply evaluators in code
  and big-step `natural' operational semantics. We discuss the strategies'
  properties, some of their uses, and their abstract machines. We improve the
  formal definition of uniform and hybrid strategy, use it to structure the
  strategy space, and to obtain generic higher-order evaluators which are used
  in the equivalence proof. We introduce a systematic notation for both
  evaluator styles and use it to summarise strategy and evaluator
  equivalences, including (non-)equivalences within a style and
  (non-)equivalences between styles not proven by the transformation.
\end{abstract}
\maketitle
\newpage
\tableofcontents
\section*{Extended abstract}
Big-step evaluators are ubiquitous in formal semantics and implementations of
higher-order languages based on the lambda calculus, whether research
languages or core languages of functional programming languages and
proof-assistants. A big-step evaluator is an operational semantics device that
defines an evaluation strategy (an evaluation order of redexes) by specifying
how the redexes are located and contracted. It is `big-step' because it
delivers the final result of iterated small-step reduction. (Because an
evaluator defines a unique strategy, it is common to blur the distinction and
call the evaluator a strategy.) A big-step evaluator is typically written as a
recursive function in natural semantics or in code. In the latter, the
operational semantics of the implementation language must be taken into
account to guarantee semantics (strategy) preservation.

Whether in code or natural semantics, big-step evaluators are typically
written in two definitional styles, namely, \emph{eval-apply} and
\emph{eval-readback}. In eval-apply style, the evaluator is defined by a
single natural semantics or a single `eval' function in code. The `eval'
function can have a local `apply' function for application terms which can be
obviated with \texttt{let}-clauses or pattern-matching expressions. In
eval-readback style, the evaluator is defined by the composition of two
functions (two natural semantics): an `eval' function that evaluates terms up
to a certain intermediate form, and an `readback' function that pushes down
eval directly or through an eval-readback composition on specific subterms of
the intermediate form.

The problem we address is the \emph{one-step equivalence} (`define the same
strategy') between eval-readback and eval-apply big-step evaluators, and
address `big-step' equivalence (`deliver the same final result') in
passing. We use the pure lambda calculus as the most general setting with more
terms (closed, open, neutral, divergent, convergent, etc.) and strategies
(weak-reducing, head-reducing, full-reducing, strict, non-strict, etc.), where
the different evaluation criteria and their interaction give rise to different
kinds of final results (weak, head, weak head, value head, and plain normal
forms, etc.) and properties (completeness, spineness, etc.).  One-step
equivalence extends for free to evaluators in other settings (programming
languages, proof assistants, etc.) by restricting the set of terms (\eg\
type-erased closed terms for programming languages, convergent terms for proof
assistants) while the strategy is maintained.

In previous work, we showed that the one-step equivalence between one
eval-readback and one eval-apply evaluator can be proven \emph{in code} using
the `lightweight fusion by fixed-point promotion' program transformation (LWF)
which fuses the eval-readback composition into the single eval-apply evaluator
by means of strategy-preserving, simple and syntactic program-transformation
steps that are embeddable in a compiler's inlining optimisation. However, this
method has limitations. LWF has to be deployed for every pair of eval-readback
and eval-apply evaluators, some strategies have known evaluators only in one
of the two styles, and LWF is not invertible, that is, given an eval-apply
with no known eval-readback, the latter must be conjectured and must
LWF-transform to the eval-apply evaluator.  Whether an eval-readback evaluator
exists for a given eval-apply evaluator is precisely the open question of
expressive power derived from the open question of equivalence.

The main contribution of this paper is a single proof of one-step equivalence
by a novel application of LWF on an eval-readback and an eval-apply
\emph{arbitrary} evaluators which respectively call (instantiate) a
\emph{generic} readback evaluator and a \emph{generic} eval-apply evaluator
that implement as fixed-points all the concrete readback and eval-apply
evaluators for the standard (history-free, deterministic, non-parallel,
left-to-right) strategies in the literature and more. The pair of arbitrary
evaluators have type-checked but undefined (\ie\ existentially quantified)
parameters whose values are constrained by the well-formedness provisos of the
generic evaluators and by the LWF steps. The constraints are elaborated during
the transformation and collected at the end in a list of equations that
capture the conditions for one-step equivalence.

We provide an extensive overview of the eval-apply and eval-readback styles
using canonical examples from the literature. To illustrate the large scope of
the equivalence proof, we provide an extensive survey of the strategy space
using canonical eval-apply evaluators in natural semantics and code. We
discuss the strategies' diverse properties, abstract machines, and uses,
particularly in relation to evaluation of recursive functions in direct versus
continuation-passing style and thunking. We introduce new strategies in the
survey, in particular, an eager strategy that evaluates recursive functions in
delimited continuation-passing style with a non-strict fixed-point combinator
and thunking `protecting by variable' rather than the usual `by lambda'.

We motivate and arrive at an improved formal definition of uniform/hybrid
strategy that we use to orthogonally split the diverse strategy space. We
structure each uniform and hybrid strategy subspace according to the weakness,
strictness, and headness properties of strategies. The uniform space makes up
a lattice of eight strategies we call Gibbons's Beta Cube. The hybrid space is
larger and is further split by introducing the `balanced' concept (a hybrid
strategy is balanced when it is non-strict or when it uses its subsidiary
strategy, not itself, to evaluate operands of redexes). We introduce the
well-formedness provisos that make uniform and hybrid eval-apply evaluators
define uniform and hybrid strategies.

We introduce systematic notations for encoding uniform, hybrid, and
eval-readback evaluators, and use the notations to introduce new strategies,
study (non-)equivalences and other properties such as absorption, and provide
an extensive table summary of strategies, evaluators, and equivalences. We end
discussing the relevance of our contributions in the wider context of
correspondences among operational semantics devices.

We use Haskell 2010 as the implementation language, but the choice of language
is immaterial. We ignore efficiency issues because in the pure lambda calculus
there are no optimal strategies, and because efficiency can be studied
afterwards in a strategy space constrained by efficiency considerations.

We have addressed the paper to non-expert readers which may be unfamiliar with
background material. We are explanatory in the text (or in the appendices when
appropriate) for information or reminder. Expert readers can jump directly to
the sections of their interest. The introduction elaborates this extended
abstract with details, citations, and a list of contributions that serves as a
roadmap of the paper.

\section{Introduction}
\subsection{The background}
Big-step evaluators are ubiquitous in formal semantics and implementations of
higher-order languages based on the lambda calculus. A big-step evaluator is
an intensional function definition (a \emph{definiens}) for an evaluation
strategy (the unique \emph{definiendum}). An evaluation strategy (or just
`strategy') is synonymous with the evaluation order of the reducible
expressions (`redexes') of terms. A strategy is defined generically as a
function subset of the calculus's reduction relation. We are interested in
history-free sequential strategies where the choice of redex is based on fixed
positional criteria independent of previous choices. A big-step evaluator is
an operational semantics device that specifies how the redexes are located and
evaluated. It is `big-step' because it delivers the \emph{final} result of
iterated small-step reduction. A strategy can also be defined in small-step
operational semantics styles such as structural~\cite{Plo81},
context-based~\cite{Fel87,FH92,FFF09}, or abstract
machines~\cite{Lan64,DHS00}. Because an evaluator defines a unique strategy,
it is common to blur the distinction and call the evaluator a strategy. A
big-step evaluator is also called a `definitional interpreter'
\cite{Rey72}. Some authors reserve the latter for evaluators with shallow
(meta-level) embeddings, but `interpreter' (symbolic evaluation) predates the
deep versus shallow embedding dichotomy.

A big-step evaluator is typically written as a recursive function, in
code~\cite{McC60} or in natural semantics \cite{Kah87}.  In code, the
evaluator works on some data-type representation of the mathematical
terms. The operational semantics of the implementation language must be taken
into account to guarantee semantics (\ie~strategy) preservation
\cite{Rey72}. In natural semantics, the evaluator works on the mathematical
terms and is defined by a set of syntax-directed and deterministic inference
rules describing the function's behaviour recursively on subterms. A natural
semantics provides a compact definition, requires no knowledge of specific
programming languages, and serves as a specification of the evaluator's
implementation in code. In a partiality setting, a big-step evaluator may
terminate (converge) on a final result, or non-terminate (diverge) because it
cannot deliver a final result when it exists (which may be delivered by other
strategies) or because no final result exists (which may be delivered by no
strategy).

Whether in code or natural semantics, big-step evaluators are typically
written in two definitional \emph{styles}.

In \emph{eval-apply} style, the evaluator is defined by a \emph{single
  function}. This style appeared early in programming and formal semantics
\cite{McC60,Lan64,Rey72,Plo75,Sto77,SS78} and was the main style until the
1990s \cite{All78,ASS85,FH88,Hen87,Rea89,Mac90}. In its traditional code
definition, the \texttt{eval} function evaluates subterms recursively and
delegates the specific evaluation of application terms to a mutually-recursive
local \texttt{apply} function. This function can be obviated within
\texttt{eval} using \texttt{let}-clauses and pattern-matching expressions in
modern functional languages. In natural semantics, the eval-apply evaluator is
defined by a single set of inference rules for every case of term: variables,
abstractions (function bodies), and redex and non-redex applications. The
\texttt{apply} function corresponds to the inference rules for applications.
Section~\ref{sec:eval-apply} explains the eval-apply style with examples.

In \emph{eval-readback} style, the evaluator is defined by a \emph{composition
  of two functions} (also called stages), namely, \texttt{readback} after
\texttt{eval}. The name `eval-readback' comes from \cite{GL02} but the style
is older. Two classic evaluators in eval-readback style are \texttt{byName}
and \texttt{byValue} in \cite[p.\,390]{Pau96}.  The eval-readback style is
related to normalisation-by-evaluation \cite{BS91} but should not be confused
with it. In normalisation-by-evaluation, a term is evaluated by first
constructing its representative in a different target domain (evaluation
stage) and then reading the representative back to a canonical result term
(reification stage). In eval-readback there is only the domain of terms. The
\emph{less-reducing} \texttt{eval} function contracts redexes to an
intermediate result with unevaluated subterms. The \emph{further-reducing}
\texttt{readback} does not contract redexes, it `reads back' the intermediate
result to a further-reduced result by recursively distributing eval down the
unevaluated subterms to reach further redexes \cite[p.\,236]{GL02}. Readback
is a sort of `selective-iteration-of-eval' function that calls eval directly
or in an eval-readback composition on specific unevaluated subterms.
Section~\ref{sec:eval-readback} explains the eval-readback style with
examples. The separation of evaluation in two stages can be of help when
writing an evaluator. The separation also enables independent
optimisations. For instance, in \cite{GL02} the eval stage is implemented by
an abstract machine and its associated compilation scheme \cite{Ler91}.

\subsection{The problem}
\label{sec:the-problem}
We address the question of equivalence between eval-readback and eval-apply
evaluators, which partly answers the question of their expressive power,
\ie~whether a strategy can or cannot be defined in both styles. We are
primarily interested in whether the evaluators define the same strategy
(evaluate redexes in exactly the same order) and secondarily in whether the
evaluators do not define the same strategy but deliver identical final
results. We refer to the former equivalence as \emph{one-step equivalence} and
to the latter as \emph{big-step equivalence}. We aim for a \emph{single}
one-step equivalence proof for a large collection of strategies.

In previous work \cite{GPN14,GNS14}, we showed that given an eval-readback
evaluator \texttt{rb}\,$\circ$\,\texttt{ev}, its one-step equivalence with a
given eval-apply evaluator \texttt{ea} can be proven \emph{in code} using the
`lightweight fusion by fixed-point promotion' program transformation of
\cite{OS07}, which we hereafter abbreviate as LWF. (We also abbreviate
evaluator names using two letters (\eg\ \texttt{rb}, \texttt{ev}, etc.) like
\cite{Ses02}, to fit their code and natural semantics definitions in running
text and figures.)  The LWF transformation fuses
\texttt{rb}\,$\circ$\,\texttt{ev} into \texttt{ea} by means of
strategy-preserving, simple and syntactic program-transformation steps that
are embeddable in a compiler's inlining optimisation. The LWF transformation
had been previously used to prove equivalences of operational semantics
devices, but in so-called `syntactic-correspondence' proofs between small-step
evaluators and abstract machines, where LWF is used to fuse the decomposition
and recomposition steps, \eg\ \cite{Dan05,DM08,DJZ11}.

Using LWF to prove eval-apply and eval-readback equivalence has two
problems. First, using LWF for every possible strategy and pair of evaluators
is menial work.  Second, only one evaluator style is known for particular
strategies.  Third, LWF is not invertible. An eval-readback evaluator is not
obtained by `de-fusing' a given eval-apply evaluator. The eval-readback
evaluator must be conjectured, and then it must LWF-transform to the given
eval-apply evaluator. Whether an eval-readback evaluator exists for a given
eval-apply evaluator is precisely the open question of expressive power.


\subsection{The setting}
\label{sec:the-setting}
We address the problem in the setting of the \emph{pure lambda calculus} for
the following reasons. First, there are too many applied and typed lambda
calculi\footnote{\label{fn:applied-typed} An \emph{applied} extension of the
  pure lambda calculus adds primitive terms and reduction rules for them. A
  \emph{typed} extension adds type terms and type rules, and possibly modifies
  terms and reduction rules with type information. Most extensions combine
  both.} with significant differences among them. There is no principled
reason to pick one in particular. Second, the pure lambda calculus is the most
general calculus, with more terms, strategies, and concomitant big-step
evaluators. Third, we can restrict the set of terms (\eg\ closed terms,
convergent terms obtained by type-checking and type-erasure, etc.) as required
in extended settings such as programming languages and proof assistants.  We
define strategies and evaluators for arbitrary pure terms, prove equivalences
between evaluator styles, and then we can assume a restricted input term
subset so that the identical or one-step equivalent programming or
proof-assistant evaluator is \emph{de facto} included in the equivalence
(Section~\ref{sec:prelim:cbv}).

In the same line, we represent terms in code with a direct data-type
representation (a `deep embedding' \cite{BGGHHV92}), and use a direct
implementation of the capture-avoiding substitution operation of the
calculus. The results presented in this paper extend to other term
representations provided the evaluation order of redexes is preserved. The
results also extend to calculi where the strategies we study can be simulated,
\eg~\cite[Sec.\,6]{Plo75}.

We ignore efficiency issues because in the pure lambda calculus there are no
optimal strategies---there exist terms for which any strategy duplicates work
\cite{Lev78,Lev80}. Our goal is to structure the strategy space to prove
one-step equivalence. Efficiency can be studied afterwards, constraining the
strategy space by efficiency considerations. Also, efficient versions of
strategies may be undefinable in the pure lambda calculus. For example,
call-by-need is an efficient version of call-by-name that evaluates copies of
the same redex once but requires sharing and memoisation \cite{FH88}.  The
results in this paper can inform further work for such strategy spaces (see
contribution \ref{C:survey}). For the wide-range of issues regarding
efficiency see \cite{Lev80,FH88,Lam89,GAL92,AG98,RP04,AdL16,AdLV21}.

The pure lambda calculus allows free variables and hence \emph{open} terms,
with the particular case of \emph{neutral terms} (\emph{neutrals}, for
short).\footnote{We thank Noam Zeilberger for pointing us to the `neutral'
  terminology.} A free variable is not a parameter of an abstraction. The
notion is relative to a term: $x$ may be free in $B$ but bound (not free) in
$\lambda x.B$. Neutral terms are applications of the form $x\,N_1\cdots N_n$
where $n\geq1$. A neutral is not a redex, and cannot evaluate to a redex
unless a suitable term can be substituted for the leftmost variable operator
(called the `head variable'). When the head variable is free, it may be seen
as a data-type value constructor that is applied to the $n$ operands.

The pure lambda calculus allows strategies with diverse evaluation properties,
in particular:
\begin{itemize}
\item Weak vs non-weak evaluation: respectively, the non-evaluation or
  evaluation of abstractions (function bodies) as in programming
  languages. Mind the opposites: weak = non-evaluation of abstractions;
  non-weak = evaluation of abstractions.\footnote{Weak evaluation is called
    `lazy evaluation' by several authors by association with the `lazy lambda
    calculus' \cite{Abr90}. But that calculus is weak \emph{and} non-strict,
    two properties of lazy functional programming languages.}

\item Strict vs non-strict evaluation: respectively, the evaluation or
  non-evaluation of operands of redexes before substitution. Technically,
  strictness is a denotational property of functions but it can be `abused' as
  an evaluation property.

\item Head vs non-head evaluation: respectively, the non-evaluation or
  evaluation of (operands of) neutrals, which gives the name to strategies
  such as `head reduction' or `head spine'
  (Section~\ref{sec:strategies}). Mind the opposites: head = non-evaluation of
  neutrals; non-head = evaluation of neutrals.

\item Full evaluation (hereafter, `full-reduction'): evaluation to a result
  term with no redexes. Full-reduction with open terms is fundamental in
  optimisation by partial evaluation and type-checking in proof-assistants and
  automated reasoning, \eg~\cite{GL02,Cre07,ACPPW08,SR15,ABM15}. Many authors
  use `strong reduction' rather than `full-reduction'.  However, `strong' is
  the antonym of `weak' which only refers to the non-evaluation of
  abstractions.  Full-reduction `goes under lambda' (non-weak) and also under
  neutrals (non-head). Also, strong reduction may be confused with `strong
  normalisation' which is a property of terms.\footnote{A term is
    strongly-normalising when all its evaluation sequences converge on a
    normal form \cite{Bar84}.} Thus, we use `full-reduction' \cite{SR15} with
  a dash to make it a technical name.
\end{itemize}
These evaluation criteria \emph{and their interaction} give rise to different
kinds of final results and properties. For example, a weak and strict strategy
does not evaluate the abstractions within the operands of redexes it does
evaluate. A non-weak and head strategy does not evaluate the neutrals within
the abstractions it does evaluate.

\subsection{The contributions}
\label{sec:contribs}
The last contribution of this paper is a single proof of the one-step
equivalence of eval-readback and eval-apply evaluators. The proof proceeds by
a novel application of LWF on a plain but arbitrary eval-readback evaluator
and eval-apply evaluator. They are `plain' in that they respectively call
(instantiate) a \emph{generic} readback evaluator and a \emph{generic}
eval-apply evaluator that respectively implement as fixed-points all the plain
and concrete readback and eval-apply evaluators for the standard
(history-free, deterministic, non-parallel, left-to-right) strategies in the
literature and more. They are `arbitrary' in that some parameters to the
generic evaluators have undefined (but type-checked) values, which are thus
`existentially quantified' in the sense of logic programming. The LWF steps
are applied verbatim on the two plain and arbitrary evaluators. The concrete
values substitutable for the undefined parameters are constrained by
well-formedness provisos of the generic evaluators and by the LWF steps, and
are collected at the end in a list of equations that capture the conditions
for one-step equivalence.

This last contribution strongly depends on other significant contributions
that we list below in increasing order of importance and of presentation in
the paper. The list of contributions below also serves as a roadmap summary of
the paper. In Section~\ref{sec:conclusions} we provide a detailed technical
summary of the contributions that can be understood after reading the paper.
\begin{enumerate}[label=\textbf{C\arabic*}]
\item \label{C:survey} \textbf{Eval-apply style and strategy survey.} We
  provide an extensive survey of pure lambda calculus strategies defined by
  their canonical eval-apply evaluators in natural semantics
  (Section~\ref{sec:strategies}). We previously overview the eval-apply style
  in code (Section~\ref{sec:eval-apply-monadic}) and natural semantics
  (Section~\ref{sec:eval-apply-natural-sem}) using as a first example the
  call-by-value (or call-by-weak-normal-form) strategy of the pure lambda
  calculus. We discuss this strategy in detail, compare it to other
  call-by-value strategies, and justify why we can call it `call-by-value'
  (Section~\ref{sec:prelim:cbv}).  We use the Haskell 2010 standard as the
  implementation language, with all evaluators (plain and generic) written in
  monadic style to guarantee semantics (strategy) preservation. The reasons
  for choosing Haskell are discussed in Section~\ref{sec:eval-apply-monadic},
  but the choice of implementation language is ultimately immaterial

  We discuss in the survey strategies that are well-known, less known, and
  interesting novel ones. We recall their properties, their abstract machines,
  and some of their uses, particularly in the evaluation of general recursive
  functions, where we use simple and well-known Church numerals as data for
  examples. We focus on this particular use for some strategies because it
  illustrates their evaluation properties in relation to thunking
  \cite{Ing61,DH92} and to vanilla and delimited continuation-passing style
  \cite{HD94,BBD05}. We want to draw attention to the novel strategy discussed
  in contribution~\ref{C:so}.

  The survey excludes history-informed, parallel, non-deterministic, and
  right-to-left\footnote{A right-to-left strategy evaluates the operand of a
    redex before the abstraction when it is non-weak, and the operands of a
    neutral from right-to-left when it is non-head.} strategies. The survey
  also excludes strategies that cannot be defined in the pure lambda calculus
  (\eg\ call-by-need which requires sharing and memoisation).  Finally, the
  survey excludes simulations of strategies that belong to other calculi (\eg\
  Section~\ref{sec:prelim:cbv}). The survey is nonetheless extensive and
  informs future work on other strategy spaces. Strategy equivalence must be
  studied under equivalent conditions: in the same calculus, and with either
  the same term representation or with the same order of redexes (strategy)
  under different term representations.

\item \label{C:so} \textbf{Introduction of `spine applicative order'.} We
  introduce this strategy in the survey which evaluates general recursive
  functions in delimited continuation-passing style, \emph{eagerly} but with a
  \emph{non-strict} fixed-point combinator, and with thunking
  `\emph{protecting by variable}' instead of the common `protecting by lambda'
  in the literature on thunking. This and the other strategies illustrate the
  diversity of the strategy space.

\item \textbf{Characterisation of eval-readback style.} We overview the
  eval-readback style in code (Section~\ref{sec:monadic-eval-readback}) and
  natural semantics (Section~\ref{sec:natural-sem-eval-readback}) using
  well-known examples from the literature, and relate them to the eval-apply
  evaluators in the survey. We characterise the style precisely by means of
  two well-formedness provisos (Section~\ref{sec:eval-readback:provisos}), and
  discuss the limit on the amount of evaluation that can be moved between eval
  and readback, which determines equivalences within the style
  (Section~\ref{sec:eval-readback:equiv}).

\item \textbf{Generic templates and evaluators.} We present the generic
  eval-apply and the generic readback evaluators by means of natural semantics
  templates and higher-order parametric evaluator implementations in code. The
  generic evaluators deliver plain evaluators as fixed-points by
  instantiation. The generic evaluators are obtained by parametrising on all
  the \emph{variability points} of their respective style. The generic
  eval-apply evaluator delivers all the evaluators in the survey. The generic
  readback evaluator delivers all the readbacks that satisfy the provisos. The
  generic evaluators let us express the evaluator space in a single
  definition. We employ them in the equivalence proof by LWF.

\item \label{C:hybrid} \textbf{Improved definition of uniform/hybrid
    strategy.} We improve the formal definition of uniform vs hybrid strategy
  that we presented in \cite[Sec.\,4]{GPN14} and that we use in this paper to
  structure the strategy space in contribution \ref{C:struct}. Intuitively, a
  hybrid (or layered, or stratified) strategy inextricably \emph{depends} on
  other less-reducing subsidiary strategies to evaluate particular subterms
  such as operators or operands.  When there is no dependency, the strategy is
  uniform. The dependency shows in the evaluation sequences which include
  whole evaluation sequences of the subsidiaries for those subterms. The
  dependency also usually shows in the \emph{definitional device} used for
  defining the hybrid strategy (code or operational semantics, whether
  big-step, small-step, or abstract machines) which explicitly includes the
  definitional device of one or more subsidiary strategies. In this latter
  case we say the hybrid \emph{strategy} is defined by a definitional device
  in hybrid \emph{style}.  For instance, a hybrid strategy's big-step
  evaluator is in hybrid style when it calls a subsidiary strategy's big-step
  evaluator. As explained in \cite{GPN14}, and as we elaborate in
  Section~\ref{sec:uniform-vs-hybrid}, the distinction is essential because,
  depending on the type of device, it may be possible to write uniform-style
  devices for hybrid strategies by hiding the subsidiary devices, or to write
  hybrid-style devices for uniform strategies by concocting spurious
  subsidiary devices.

  As we detail in Sections~\ref{sec:hybrid-normal-order}
  and~\ref{sec:hybrid-applicative-order}, the uniform/hybrid notion was
  introduced by \cite{Ses02} in the context of big-step evaluators in natural
  semantics, to respectively indicate the in/dependence on subsidiary big-step
  evaluators---what we call \emph{style}.  We relied on the style concept in
  \cite{GNG10,GPN13,GNMN13,GNS14} and distinguished it from the
  style-independent property of a strategy in \cite{GPN14}, redundantly
  referring to a uniform/hybrid strategy as `uniform/hybrid in its
  nature'. Hereafter, we use `style' for definitional devices and use
  `uniform/hybrid' by itself for a property of a strategy.

  In \cite{GPN14} we formalised uniform/hybrid strategy using small-step
  context-based reduction semantics \cite{Fel87,FH92,FFF09}. Since we wish to
  improve on \cite{GPN14}, in Section~\ref{sec:uniform-vs-hybrid} we switch
  temporarily to that small-step definitional device and arrive at an improved
  definition from illustrative examples, rather than define the concept first
  and then illustrate it with examples.

  We introduce in this paper the concept of a \emph{balanced} (opposite,
  \emph{unbalanced}) hybrid strategy (Definition~\ref{def:balanced}). This
  concept is required for the equivalence proof by LWF. A hybrid strategy
  always uses a subsidiary to evaluate the operator $M$ in applications $MN$
  to obtain a redex $(\lambda x.B)N$. A balanced hybrid is either a non-strict
  hybrid (does not evaluate $N$) or is a strict hybrid that uses the
  subsidiary to evaluate $N$.

\item \label{C:struct} \textbf{Structuring of the strategy space.} We
  structure the strategy space using the uniform/hybrid dichotomy as an
  orthogonal criterion, and structure the uniform and hybrid subspaces
  according to the weakness, strictness, and headness properties of strategies
  considered as a triple.

  The uniform strategy space makes up a lattice of eight strategies we call
  \emph{Gibbons's Beta Cube}.\footnote{We name the cube with permission after
    Jeremy Gibbons who suggested to us the use of booleans to construct a cube
    lattice (Section~\ref{sec:sub:cube}). Other unrelated named cubes in the
    lambda calculus literature are J.\,J.\,L{\'e}vy's cube of residuals
    \cite{Lev78} and H.\,Barendregt's `Lambda Cube' of typed lambda calculi
    \cite{Bar91}.}  The cube contains the well-known uniform strategies
  (call-by-value, call-by-name, applicative order, head spine) and four novel
  strategies. We introduce the well-formedness provisos that characterise
  uniform eval-apply evaluators and show that they define uniform
  strategies. We also introduce a systematic notation for encoding uniform
  evaluators in a triple, and study the property of absorption among uniform
  strategies.

  The hybrid space is larger. We introduce the well-formedness provisos that
  characterise hybrid eval-apply evaluators and show they are required to
  define hybrid strategies.  The hybrid evaluators rely on \emph{one} uniform
  subsidiary evaluator to satisfy the desired properties of the strategy
  (completeness, spineness, etc.). In particular, the subsidiary is always
  used to evaluate operators. The hybrid evaluator evaluates more redexes than
  the subsidiary evaluator. We introduce a systematic notation for encoding
  hybrid evaluators by means of two triples: the Beta Cube triple of the
  uniform evaluator used as subsidiary by the hybrid evaluator, and the triple
  encoding the hybrid's weakness, strictness, and headness. We use the
  notation to introduce novel strategies and discuss \mbox{(non-)}equivalences
  among hybrid evaluators.

\item \label{C:equiv} \textbf{Equivalence proof by LWF.} Finally, we present a
  proof of equivalence between eval-readback evaluators and \emph{balanced
    hybrid} eval-apply evaluators. The proof assumes the provisos for both
  styles and proceeds by application of the LWF transformation. We start from
  a \emph{plain but arbitrary} eval-readback evaluator
  \texttt{rb}\,$\circ$\,\texttt{ev} where \texttt{ev} is a uniform evaluator
  defined as a fixed-point of the generic eval-apply evaluator, and
  \texttt{rb} is defined as a fixed-point of the generic readback
  evaluator. Some parameters of the generic evaluators are type-checked but
  undefined and thus \emph{arbitrary}. The LWF transformation fuses
  \texttt{rb} and \texttt{ev} into the plain but arbitrary balanced hybrid
  eval-apply evaluator \texttt{hy} that relies on a uniform subsidiary
  evaluator \texttt{su}, where both \texttt{hy} and \texttt{su} are
  fixed-points of the generic eval-apply evaluator. The type-checked but
  undefined parameters to the generic evaluators have their substitutable
  values constrained by the provisos and the LWF steps. The possible
  substitutions are collected in equations. In particular, some of the
  equations state that \texttt{su} = \texttt{ev}. When the equations hold then
  one-step equivalence holds and thus
  \texttt{rb}\,$\circ$\,\texttt{ev}\,$=$\,\texttt{hy}.
  \begin{center}\small
    \begin{tikzpicture}
    \node[draw,circle,minimum size=3cm] (left) {};
    \node[draw,circle,xshift=5cm,minimum size=3cm] (right) {};
    \node[draw,dashed,circle,xshift=5cm,minimum size=1cm, text width=1.5cm,
    align=center]
          (small) {balanced hybrid};
    \node[below of=left,node distance=55] {eval-readback evaluators};
    \node[below of=right,node distance=55] {eval-apply evaluators};
    \draw[-stealth]
    (tangent cs:node=left,point={(small.south)},solution=2) --
    (tangent cs:node=small,point={(left.south)});
    \draw[-stealth]
    (tangent cs:node=left,point={(small.north)},solution=1) --
    (tangent cs:node=small,point={(left.north)},solution=2)
    node[pos=0.38,above,rotate=-7.1] {{\small LWF transformation}};
    \end{tikzpicture}
  \end{center}
  An expected corollary of the proof is that \texttt{hy} absorbs
  \texttt{su}/\texttt{ev}, that is,
  \texttt{hy}\,$\circ$\,\texttt{su}\,$=$\,\texttt{hy}. Absorption can be used
  to fuse an evaluator composition or to defuse an evaluator into a
  composition.

  The LWF proof does not work for \emph{unbalanced} hybrid strategies. These
  can be defined in eval-apply style but not in an LWF-equivalent
  eval-readback style. Recall from contribution \ref{C:hybrid} above that an
  unbalanced hybrid \texttt{hy} is such that its subsidiary \texttt{su}
  evaluates applications that evaluate to a redex more than \texttt{hy}.
  Thus, \texttt{hy} cannot absorb \texttt{su}. It might be possible to obtain
  an equivalence between an eval-readback evaluator and an unbalanced hybrid
  eval-apply evaluator, but LWF cannot find that equivalence.

  For a few strategies the equivalence by LWF holds \emph{modulo commuting
    redexes} due to the particular steps of LWF. Commuting (\ie\
  non-overlapping) redexes occur in the operands $N_i$ of a neutral $x
  N_1\cdots N_n$. An equivalence modulo commuting redexes means the strategies
  are big-step equivalent but differ on the point of divergence, \ie~one
  strategy evaluates $N_j$ first and diverges and the other evaluates $N_k$
  first and diverges, with $j < k$.

\item \textbf{Summary and discussion of (non-)equivalences.} We provide an
  extensive summary of strategies, evaluators, and equivalences using the
  systematic notation, and also discuss the relevance of our contributions in
  the wider context of correspondences among operational semantics devices
  (Section~\ref{sec:conclusions}).
\end{enumerate}
These contributions are listed in increasing order of importance, but all are
necessary and interdependent, and justify the content and length of the
paper. The LWF proof goes through thanks to the generic evaluators and natural
semantics templates. The latter are obtained after structuring the strategy
space according to the improved definition of uniform/hybrid strategy. And the
strategy space is presented in a survey which justifies the importance of each
strategy and shows the scope of the equivalence proof. The LWF proof confirms
the structuring is adequate. We also provide an extensive summary and discuss
the relevance of the results in the wider context of operational semantics.

The paper is divided into four main sections (Sections~\ref{sec:eval-apply} to
\ref{sec:hybrid-evalreadback}) which can be read at separate times before the
conclusions, and the related and future work sections. The survey
(Section~\ref{sec:strategies}) can be read cursorily and its details fetched
on demand when reading about the structuring
(Section~\ref{sec:regimentation}).  The equivalence proof by LWF
(Section~\ref{sec:hybrid-evalreadback}) can be followed with no previous
knowledge of LWF, but we provide an overview of LWF in Appendix~\ref{app:LWF}.
The code has been type-checked and tested, and can be obtained on request.

We have addressed the paper to non-expert readers which may be unfamiliar with
some background material and thus we are explanatory in the text (or in the
appendices when appropriate) for information or reminder. Expert readers can
jump directly to the sections of their interest, with particular attention to
Sections~\ref{sec:regimentation} to \ref{sec:conclusions}.

\section{Technical preliminaries}
\label{sec:prelim:lambda}
We overview the pure lambda calculus notation and concepts required for this
paper in Appendix~\ref{app:prelim:lambda}, namely, syntax and meta-notation of
terms, reduction, evaluation sequence, evaluation strategy, completeness, and
`completeness for' a final result other than normal form. For the most part we
adhere to the standard references \cite{CF58,Bar84,HS08}. We use the notation
for capture-avoiding substitution of \cite{Plo75} (and justify why in
Appendix~\ref{app:prelim:lambda}). We use grammars in Extended Backus-Naur
Form to define sets of terms.  Figure~\ref{fig:lam-sets} lists the main sets
of terms we use in this paper, which play a special role in evaluation. They
are proper subsets of the set $\Lambda$ of pure lambda calculus terms.

\begin{figure}[htb]
  \begin{tabular}[t]{lllcll}
    $\Lambda$ & ::= & \multicolumn{3}{l}{$\Var\ |\ \lambda \Var.\Lambda\ |\
       \Lambda \; \Lambda$}
       & Terms over a set of variables $\Var$. \\
    $\Neu$ & ::= & \multicolumn{3}{l}{$\Var \ \Lambda \ \{\Lambda\}^*$}
       & $\NT{Neu}$trals: applications $x\,N_1\cdots N_n$ where $n\geq1$. \\
    $\NF$ & ::= &  $\lambda \Var.\NF$ & $|$ & $\Var \; \{\NF\}^*$
       & $\NT{N}$ormal $\NT{F}$orms. \\
    $\WNF$ & ::= & $\lambda \Var.\Lambda$ & $|$ & $\Var \; \{\WNF\}^*$
       & $\NT{W}$eak $\NT{N}$ormal $\NT{F}$orms. \\
    $\HNF$ & ::= & $\lambda \Var.\HNF$ & $|$ & $\Var \; \{\Lambda\}^*$
       & $\NT{H}$ead $\NT{N}$ormal $\NT{F}$orms. \\
    $\WHNF$ & ::= & $\lambda \Var.\Lambda$ & $|$ & $\Var \; \{\Lambda\}^*$
       & $\NT{W}$eak $\NT{H}$ead $\NT{N}$ormal $\NT{F}$orms. \\
    $\NT{VHNF}$ & ::= & $\lambda \Var.\NT{VHNF}$ & $|$ & $\Var \; \{\WNF\}^*$
       & $\NT{V}$alue $\NT{H}$ead $\NT{N}$ormal $\NT{F}$orms.
  \end{tabular}
  \caption{Main sets of terms in Extended BNF where $\{\NT{X}\}^*$ denotes
    zero or more occurrences of non-terminal $\NT{X}$, that is, $\NT{X}_1
    \cdots \NT{X}_n$ with $n\geq0$.}
\label{fig:lam-sets}
\end{figure}

Appendix~\ref{app:LWF} provides an overview of LWF.
Appendix~\ref{app:haskell-semantics} provides an overview of Haskell's
operational semantics in relation to
LWF. Appendix~\ref{app:completeness-leftmost-spine} explains why strict
strategies are incomplete and when non-strict strategies are complete,
particularly the leftmost and spine strategies discussed in the survey. For
the moment we assume the intuitive definition of uniform/hybrid strategy and
of un/balanced hybrid strategy provided in contribution \ref{C:hybrid}.
(Alternatively, the reader may assume the formal definition of uniform/hybrid
strategy in \cite{GPN14}.)  We also assume that all the uniform-style and
hybrid-style eval-apply evaluators in the survey respectively define uniform
and hybrid strategies. The validity of this assumption will be proven in
Section~\ref{sec:regimentation} where we give the improved formal definition
of uniform/hybrid strategy. The definition of one-step equivalence `modulo
commuting redexes' has been introduced in contribution \ref{C:equiv}.

\section{Call-by-value policy, strategies, and calculi}
\label{sec:prelim:cbv}
To discuss an evaluator style we must first pick some strategy. In this
section we define a strategy in words to avoid the circularity of using an
evaluator to define the strategy.

\paragraph{Definition of call-by-\texorpdfstring{$\WNF$}{WNF}}
We pick the `call-by-weak-normal-form' strategy that evaluates terms to weak
normal form ($\WNF$ in Figure~\ref{fig:lam-sets}) by choosing the
leftmost-innermost redex not inside an abstraction. Recall that $\WNF$
consists of arbitrary (unevaluated) abstractions, variables, and neutrals with
operands in $\WNF$.

The strategy is weak (does not evaluate abstractions), strict (evaluates
operands of redexes to $\WNF$), and non-head (evaluates operands of neutrals
to $\WNF$).  Like all strategies, it is an identity on variables. It is also
an identity on abstractions because it is weak. And it evaluates applications
$MN$ by first evaluating $M$:
\begin{itemize}
\item If $M$ evaluates to an abstraction $\lambda x.B$, then the application
  $(\lambda x.B)N$ is a redex. The strategy evaluates the operand $N$ to $N'$
  and then evaluates the contractum $\cas{N'}{x}{B}$.

\item If $M$ does not evaluate to an abstraction, then it either evaluates to
  a variable $x$ or to a neutral $M'$. In both cases, the application ($xN$ or
  $M'N$) is a neutral and the strategy evaluates the operand $N'$ as if the
  head variable (either $x$ or the head variable of $M'$) were a strict
  data-type value constructor.
\end{itemize}
The strategy is incomplete for $\WNF$ (and therefore for $\NF$) because it is
strict (Appendix~\ref{app:completeness-leftmost-spine}). For example, it
diverges on $(\lambda x.y)\OMEGA$ whereas other strategies converge on $y$
which is a $\WNF$ (and also a $\NF$).

Note that the strategy contracts redexes when the operand is a neutral in
$\WNF$.  For example, given a redex $(\lambda x.B) (z N)$, if the evaluation
of $N$ converges on $N'$, then the strategy evaluates the contractum $\cas{(z
  N')}{x}{B}$, in accordance with the beta-rule of the pure lambda calculus
which is unrestricted about what can be the operand of a redex.

\paragraph{Call-by-$\WNF$ and the call-by-value policy}
The name `call-by-value' refers to the parameter-passing \emph{policy} in
programming languages for applications of functions and value constructors to
operands. The policy decrees that functions are not evaluated, and that
operands are evaluated to a `value' whose definition depends on the
programming language. The value is then passed to the function or attached to
the value constructor. In the pure lambda calculus the notion of `value'
cannot be the expected one of normal form because the abstractions within the
operand would be evaluated to normal form, but abstractions (functions) must
not be evaluated. (The strategy that embodies call-by-normal-form is
`applicative order' discussed in the survey.)

In the pure lambda calculus the call-by-value strategy is call-by-$\WNF$ which
is weak, strict, and non-head. However, this strategy does not uphold
\emph{operational equivalence} when terms are open (concretely, when redexes
have neutrals in $\WNF$ as operands) \emph{and also} divergent. Operational
equivalence is also called `contextual equivalence' because it is defined
conveniently using closing contexts, namely, terms with a placeholder hole for
placing another term and closing all its free variables (see
Section~\ref{sec:context-based-red-semantics} for a precise definition of
context).  Two terms $M$ and $N$ are operationally equivalent when in any
closing context $\ctx$ then $\ctx(M)$ and $\ctx(N)$ both converge to the same
result or both diverge. Operational equivalence is expected when one term
evaluates to the other. However, take the redex $R\equiv(\lambda x.\lambda
y.y)(z z)$, which is an instantiation of the aforementioned redex form
$(\lambda x.B)(z N)$.  Call-by-$\WNF$ evaluates $R$ to the normal form
$\lambda y.y$ because the neutral $(z z)$ is a $\WNF$. But $R$ and $\lambda
y.y$ are not operationally equivalent under call-by-$\WNF$: with the context
$\ctx = (\lambda z.\hole)(\lambda x.x x)$, the call-by-$\WNF$ evaluation of
$\ctx(R)$ diverges whereas the evaluation of $\ctx(\lambda y.y)$
converges.\footnote{$\ctx(R) = (\lambda z.(\lambda x.\lambda y.y)(z
  z))(\lambda x.x x)$ evaluates to $(\lambda x.\lambda y.y)\OMEGA$ which
  diverges. $\ctx(\lambda y.y) = (\lambda z.\lambda y. y)(\lambda x.x x)$
  converges to $\lambda y.y$.}

A statement of the problem, its explanation and a solution, were notoriously
introduced in \cite{Plo75}.  The explanation is that the neutral must be
\emph{given the opportunity to diverge before contraction}, as is the case in
$\ctx(R)$. The solution is to block the contraction in case the head variable
might be eventually substituted, and thus to restrict the notion of `value' to
variables and abstractions so that $R$ is no longer a redex.

The problem, however, is not with the strategy itself but with the pure lambda
calculus and its beta-reduction relation on which call-by-$\WNF$ is based. The
blocking must occur in the beta rule for a beta-reduction with blocking to be
confluent. The confluent and blocking `lambda-value' calculus was proposed
where evaluation to a value upholds operational equivalence
\cite[pp.\,135--136,\,142--145]{Plo75}. The lambda-value calculus is the basis
of other improved call-by-value calculi
\cite{FF86,Mog91,Wad03,CH00,KMRDR21,AG22}. Readers interested in further
technical observations about lambda-value are referred to
Appendix~\ref{app:lambda-value}.

\paragraph{Call-by-$\WNF$ as a call-by-value strategy}
In this paper we refer to call-by-$\WNF$ as `call-by-value' for a formal and a
pragmatic reason. The formal reason is that it is the call-by-value strategy
of the \emph{pure} lambda calculus.  To study blocking strategies with open
and divergent terms requires a lambda-value or related calculus with blocking
in the beta rule. Proving evaluator equivalences for such strict strategies
(and for the non-strict strategies that can be simulated in that calculi
\cite[Sec.\,6]{Plo75}) is future work
(Section~\ref{sec:future-work}).

The pragmatic reason is that operational equivalence holds for \emph{closed}
terms under weak evaluation and for \emph{convergent} terms under weak or full
evaluation. As stated in Section~\ref{sec:the-setting}, studying strategies
and evaluators for arbitrary pure terms allows us to then restrict to closed
and/or convergent subsets to include strategies and evaluators in programming
and proof-assistant settings.

With closed terms and weak evaluation, neutrals are given the opportunity to
diverge because evaluation does not go under lambda and every neutral is given
the binding for its free head variable. (In applied calculi, applications
$c\,N_1 \cdots N_n$ with value constructor $c$ are permissible operands.) Many
so-called `call-by-value' evaluators and abstract machines in the literature
assume closed input terms: the \SECD\ abstract machine \cite{Lan64}
\cite{Lan65} \cite[Chp.\,10]{FH88} (its evaluator defined in
\cite[p.\,316]{Lan64}), \cite{FF86} \cite[Chp.\,9]{FH88}
\cite[Chp.\,13]{Rea89} \cite[Chps.\,11--12]{Mac90}
\cite[pp.\,421,\,427]{Ses02} \cite[p.\,390]{Pau96} (named \texttt{eval}), and
\cite[p.\,17]{PR99} (named `inner machine'). These evaluators implement
call-by-$\WNF$ for closed input terms. Even the lambda-value big-step
evaluator `$\mathrm{eval}_V$' of \cite[p.\,130]{Plo75} assumes closed input
terms and implements call-by-$\WNF$ for closed input terms
(Appendix~\ref{app:lambda-value}).

With convergent terms, there is no opportunity to diverge, terms can be open
or closed, and evaluation under lambda and full-reduction are unproblematic.
Convergent pure terms are obtainable by strong type-checking and type-erasure
\cite{Pie02,GL02}.\footnote{Convergent terms are not exclusively
  strongly-normalising terms. A simply-typed lambda calculus extended with
  terminating structural recursion primitives has convergent terms because
  self-application does not type-check \cite{Pie02}.} Many programming
languages have weak evaluation and type systems based on System~F
\cite{Car97,Pie02}. However, these type systems are sound for full-reduction
\cite[p.\,3]{SR15}. Call-by-$\WNF$ and its full-reducing relatives we discuss
in the survey (Section~\ref{sec:strategies}) work verbatim with convergent
terms and uphold operational equivalence. In
Sections~\ref{sec:strict-normalisation} and \ref{sec:eval-readback} we discuss
the case in point of the weak-reducing (and referred to as `call-by-value')
evaluator $\GL{V}$ \cite[p.\,236]{GL02} that implements call-by-$\WNF$ for
convergent terms. We also study the full-reducing call-by-value eval-readback
evaluator named `strong reduction' (denoted by $\GL{N}$) \cite[p.\,237]{GL02}
that uses $\GL{V}$ for its eval stage.\footnote{To be precise, $\GL{V}$
  evaluates operands of neutrals right-to-left, so the equivalence with
  call-by-$\WNF$ is modulo commuting redexes. The result delivered by $\GL{V}$
  is a $\WNF$, not a $\HNF$ as stated in \cite[p.\,236]{GL02}. Both $\GL{V}$
  and $\GL{N}$ are proven correct for strongly-normalising terms
  \cite[Lems.\,5\,\&\,6,\,pp.\,237--9]{GL02}.}

\section{The eval-apply evaluator style}
\label{sec:eval-apply}
An evaluator in eval-apply style is defined by an \texttt{eval} function that
recursively evaluates subterms. In its traditional definition, \texttt{eval}
relies on a mutually-recursive local \texttt{apply} function for the specific
evaluation of application terms. In modern functional programming languages
the local \texttt{apply} can be obviated within \texttt{eval} using
\texttt{let}-clauses and pattern-matching expressions.

As explained in Section~\ref{sec:the-setting}, we use a deep embedding of
lambda-calculus terms into Haskell where lambda-calculus variables are
represented by strings.
\begin{code}
  data Term = Var String | Lam String Term | App Term Term
\end{code}
We implement capture-avoiding substitution directly (and thus inefficiently)
according to the textbook definition \cite{HS08}. The implementation details
are unimportant.\footnote{For interested readers, our string variables have
  trailing numbers (\eg\ \texttt{"x1"}, \texttt{"y5"}, etc.). Function
  \texttt{subst} pattern-matches on its third argument and obtains fresh
  variables by incrementing the largest trailing number of all the string
  variables of all its inputs.}
\begin{code}
  subst :: Term -> String -> Term -> Term
  subst n x b = ...  -- substitute n for x in b.
\end{code}
The following evaluator, named \texttt{bv}, implements in traditional
eval-apply style the call-by-value (call-by-$\WNF$) strategy described in
words in Section~\ref{sec:prelim:cbv}.
\begin{code}
  bv :: Term -> Term
  bv v@(Var _)   = v
  bv l@(Lam _ _) = l
  bv (App m n)   = apply (bv m) (bv n)
    where apply :: Term -> Term -> Term
          apply (Lam x b) n' = bv (subst n' x b)
          apply m'        n' = App m' n'
\end{code}
The evaluator pattern-matches on the input term. For variables (first clause)
and abstractions (second clause) it behaves as an identity. We use Haskell's
`as-patterns' to simplify the code. For applications (third clause), the local
\texttt{apply} function is called with recursive calls to \texttt{bv} on the
operator and the operand. The local \texttt{apply} pattern-matches on the
operator to evaluate it. Its first clause evaluates the redex and its second
clause evaluates the neutral. The local \texttt{apply} can be obviated within
\texttt{eval} using Haskell's \texttt{let} and \texttt{case} expressions:
\begin{code}
  bv :: Term -> Term
  bv v@(Var _)   = v
  bv l@(Lam _ _) = l
  bv (App m n)   = let m' = bv m in case m' of
                     (Lam x b) -> let n' = bv n in bv (subst n' x b)
                     _         -> let n' = bv n in App m' n'
\end{code}
The two \texttt{let} expressions for the operand \texttt{n'} could be
simplified or omitted (substituting \texttt{bv}~\texttt{n} for \texttt{n'})
but they will prove handy when we compare evaluators and generalise in
Section~\ref{sec:regimentation}.

Readers familiar with Haskell will notice that \texttt{bv} actually does
\emph{not} implement call-by-value due to the classic `semantics-preservation'
problem of evaluators \cite{Rey72}: the evaluation order of function and
value-constructor application (non-strict call-by-need) in the implementation
language (Haskell) does not match the intended evaluation order of function
and term application (strict) for the object language (pure lambda calculus).
Concretely, the operand \texttt{n'} (bound to the call \texttt{bv}~\texttt{n}
in the \texttt{let} expression) is passed \emph{unevaluated} by Haskell to
\texttt{subst}, which evaluates \texttt{n'} only if \texttt{x} occurs in
\texttt{b}.  Thus, for example, if we pass to \texttt{bv} the input term
$(\lambda x.z)\OMEGA$ as an expression of type \texttt{Term}, then the
evaluator wrongly delivers $z$ (as \texttt{Var\,"z"}).

In a strict implementation language, such as Standard ML or OCaml, where
function and value-constructor application are strict, the \texttt{n'}
parameter would be evaluated before being passed to \texttt{subst}. Such
strict evaluation would also work fine with evaluators that implement
non-strict strategies such as, say, call-by-name
(Section~\ref{sec:call-by-name}), because in that case \texttt{n'} would be
bound to a term (a data-type value) rather than to a call expression like
\texttt{bv}~\texttt{n}.

The problem with the strictness of \texttt{subst} is compounded by the
\emph{partially-defined} values of type \texttt{Term}. This type represents
terms of $\Lambda$ with the addition of the totally-undefined value $\bot$ and
of partially-defined values such as \texttt{Lam}~\texttt{"x"}~$\bot$,
\texttt{App}~$\bot~\bot$, etc. The totally-undefined value is necessary
because we want to consider non-terminating strategies. It is the case that
every evaluator \texttt{ev} is strict, that is, \texttt{ev}~$\bot$ = $\bot$,
because it pattern-matches on its input term. It is also the case that an
evaluator may not terminate for an input term \texttt{t}, namely,
\texttt{ev}~\texttt{t} = $\bot$. However, partially-defined values have no
counterpart in $\Lambda$. We must avoid \texttt{ev}~\texttt{t} = \texttt{p}
for a partially-defined term \texttt{p}. This can be done using strict
function application during evaluation.

\subsection{Monadic style eval-apply}
\label{sec:eval-apply-monadic}
Strict function application can be simulated in Haskell using the principled
and flexible monadic style. We use the Haskell 2010 standard which has a
well-defined static and operational semantics
(Appendix~\ref{app:haskell-semantics}). Evaluators are one of the motivating
applications of monads \cite{Wad92a,Wad92b}, and monadic evaluators can be
program-transformed to other operational semantics devices
\cite{ADM05}. Monads are provided in the Haskell 2010 library by means of the
\texttt{Monad} type class and its datatype instances. Type classes have a
trivial dictionary operational semantics \cite{HHPJW96}. The \texttt{Monad}
type class provides operators for monadic identity (\verb!return!), monadic
function application (\verb!=<<!), and monadic function composition
(\verb!<=<!), called Kleisli composition. The first two operators are defined
by programmers for the particular monad instance. Monadic function composition
is defined in the library in terms of monadic function application. For
comparison, we show in Figure~\ref{fig:monadic-ops} the type signatures
alongside ordinary identity, function application, and function composition.

\begin{figure}[htb]
  \centering
  \begin{code}
  -- Identity
  id     ::            a ->   a
  return :: Monad m => a -> m a

  -- Application
  (\$)   ::            (a ->   b) ->   a ->   b
  (=<<) :: Monad m => (a -> m b) -> m a -> m b

  -- Composition
  (.)   ::            (b ->   c) -> (a ->   b) -> (a ->   c)
  (<=<) :: Monad m => (b -> m c) -> (a -> m b) -> (a -> m c)

  (g . f) x   = let y = f x
                in g y

  (g <=< f) x = do y <- f x
                   g y
 \end{code}
 \caption{Type signatures of monadic operators alongside their ordinary
   counterparts. Monadic function composition is defined in terms of monadic
   function application. The \texttt{do}-notation is syntactic sugar for the
   latter.}
  \label{fig:monadic-ops}
\end{figure}

Strict function application can be simulated by monadic application
\cite[Sec.\,3.2]{Wad92b} when $\mathtt{return} \: \bot = \bot$ and $\mathtt{f}
\: \verb!=<<! \: \bot = \bot$ for function \texttt{f}, which is the case for
some strict monads.\footnote{The statement holds in Haskell 2010. However, at
  the time of writing, the latest \texttt{Monad} class is a subclass of
  \texttt{Functor} and \texttt{Applicative} which are not in the 2010
  standard, and the \texttt{Functor} laws use non-strict composition
  \cite[Sec.\,3.2]{Wad92b}. To use Haskell in this setting we would have to
  define an alternative strict monadic composition.} In particular, we will
use the \texttt{IO} monad because we are interested in printing the results of
evaluation. (We will use monadic composition later on when we discuss
eval-readback evaluators in Section~\ref{sec:eval-readback}).

The monadic style has several benefits. First, we can keep Haskell which is
our preferred choice of implementation language for, among other things, its
convenient syntax for writing the generic evaluators and their fixed-points
(Sections~\ref{sec:eval-apply-template} and~\ref{sec:eval-readback-template}).
Second, we can keep \texttt{subst} non-monadic. Third, we can toggle between
strict and non-strict Haskell evaluation by instantiating the evaluator's type
signature with strict and non-strict monad instances. The non-strict Haskell
evaluation of strict strategies like \texttt{bv} can be helpful in iterative
tracing up to particular depths \cite{Ses02} to locate divergence points,
known in vernacular as `chasing bottoms' \cite{DJ04}. The parametrisation on a
monad type parameter requires the `arbitrary rank polymorphism' type-checking
extension of the Glorious Haskell Compiler (GHC) which falls outside the
Haskell 2010 standard. Nevertheless, the extension has no run-time effect, as
it simply allows GHC to overcome the limitations of the default type-inference
algorithm by letting programmers provide type signature annotations for
perfectly type-checkable code. Hereafter, we write evaluators in monadic style
with a monad type parameter, but the parameter is optional and immaterial.

The transcription of the naive \texttt{bv} evaluator to monadic style is
immediate with Haskell's \texttt{do}-notation:
\begin{code}
  bv :: Monad m => Term -> m  Term
  -- may assume    Term -> IO Term
  bv v@(Var _)   = return v
  bv l@(Lam _ _) = return l
  bv (App m n)   = do m' <- bv m
                      case m' of
                        (Lam x b) -> do n' <- bv n
                                        bv (subst n' x b)
                        _         -> do n' <- bv n
                                        return (App m' n')
\end{code}
The type of \texttt{bv} is bounded-quantified on the monad type parameter
\texttt{m}, which can be instantiated to a strict or non-strict monad instance
using a type annotation when executing the evaluator, \eg\ \texttt{(bv}
\texttt{::} \texttt{IO} \texttt{Term)} \texttt{t}. Readers uninterested in the
parameter may assume \texttt{m} already instantiated to the \texttt{IO} monad,
as shown in the comment. Notice that \texttt{m} also occurs as a value
variable (operator \texttt{m}) on the third clause for applications. This is
usual Haskell coding style, as type and value variables have independent
scopes.

The use of Haskell, monadic style, and strict monad instances are not required
for the proof by LWF. Haskell 2010 is used as a vehicle for communication and,
at any rate, we could treat all Haskell code as syntactic sugar for a strict
implementation language. LWF is a simple, syntactic, and non-invertible
transformation that only requires evaluators to be strict on their first
parameter (Appendix~\ref{app:LWF}), which is the case for every evaluator
because the first parameter is the input term. LWF does not involve equational
reasoning, although monadic equational reasoning in Haskell is perfectly
possible, \eg\ \cite{GH11}. A LWF proof can also be obtained for non-monadic
or monadic code in a strict implementation language, as demonstrated in
\cite{GPN14} for two plain evaluators.

\subsection{Natural semantics style eval-apply}
\label{sec:eval-apply-natural-sem}
A natural semantics is a proof-theory of syntax-directed and deterministic
(exhaustive and non-overlapping) inference rules \cite{Kah87}. Every rule has
zero or more antecedent formulas above a line which are ordered from left to
right, and only one consequent formula shown below the line.  Every rule
states that the formula below is the case if all formulas above are the
case. A rule with no formulas above is an axiom. A formula holds iff a
stacking up of rules (an evaluation tree) ends in axioms.

An eval-apply evaluator in natural semantics consists of a set of inference
rules that specify the evaluator's behaviour recursively on the subterms. The
call-by-value evaluator $\bv$ below is defined in natural semantics by four
rules: one for evaluating \textsc{var}iables, one for evaluating
\textsc{abs}tractions, one for evaluating applications which unfold to a redex
that is \textsc{con}tracted, and one for evaluating applications which unfold
to \textsc{neu}trals. %
{\small\begin{mathpar}%
    \inferrule*[left=var] %
    { } %
    {\bv(x) = x} %
    \and %
    \inferrule*[left=abs] %
    { } %
    {\bv(\lambda x.B) = \lambda x.B} %
    \and %
    \inferrule*[left=con] %
    {\bv(M) = \lambda x.B \quad \bv(N) = N' \quad \bv(\cas{N'}{x}{B}) = B'} %
    {\bv(MN) = B'} %
    \and %
    \inferrule*[left=neu] %
    {\bv(M) = M' \quad M' \not\equiv \lambda x.B \quad \bv(N) = N'} {\bv(MN) =
      M'N'}%
  \end{mathpar}}%
Every formula of the form $\bv(M) = N$ reads `$N$ is the result in the finite
evaluation sequence of $M$ under $\bv$'. The formula is proven by the finite
evaluation tree above it. When $\bv(M)$ has an infinite evaluation sequence
($\bv$ diverges on $M$) then it has an infinite evaluation tree and we write
$\bv(M)=\bot$ to indicate it. (The symbol `$\bot$' is meta-notation because
$\bot\not\in\Lambda$.)

Let us explain the rules. The \textsc{var} axiom states that $\bv$ is an
identity on variables (they are normal forms). The \textsc{abs} axiom states
that $\bv$ is also an identity on abstractions ($\bv$ is weak). The
\textsc{con} rule states that the application $MN$ evaluates by $\bv$ to a
term $B'$ if $\bv$ evaluates the operator $M$ to an abstraction $\lambda x.B$,
and also evaluates the operand $N$ to a term $N'$, and also evaluates the
contractum of the redex $(\lambda x.B)N'$ to a result $B'$. The \textsc{neu}
rule states that the application $MN$ evaluates by $\bv$ to a neutral $M'N'$
if $\bv$ evaluates the operator $M$ to a non-abstraction $M$' (\ie~a variable
or a neutral), and also evaluates the operand $N$ to a result $N'$.

The eval-apply natural semantics corresponds with the eval-apply evaluator in
code, where the \textsc{con} and \textsc{neu} rules are implemented by
\texttt{apply}. In rule \textsc{con}, the contractum's occurrence within the
$\bv$ function is a notational convenience. The contractum can be placed as a
premise right before the last premise that evaluates it.
{\small\begin{mathpar}%
    \inferrule*[left=con]{%
      \inferrule*{\vdots}{\bv(M) = \lambda x.B} %
      \quad \inferrule*{\vdots}{\bv(N) = N'}%
      \quad \cas{N'}{x}{B} \equiv B'' \quad \inferrule*{\vdots}{\bv(B'') =
        B'}} %
    {\bv(MN) = B'} %
  \end{mathpar}%
}%
With this modification it is easier to illustrate the connection between an
evaluation tree and an evaluation sequence. The latter can be obtained by the
in-order traversal of the \textsc{con} rules in an evaluation tree. The
in-order traversal of the \textsc{con} rule is as follows: in-order traversal
of the stacked evaluation tree (vertical dots) above the first premise and
then above the second premise, followed by the contractum premise in the
middle, followed by the in-order traversal of the stacked evaluation tree
above the last premise.  Figure~\ref{fig:example-nat-sem} illustrates with an
example.

\begin{figure}[htb]
  \centering\small%
  \flushleft{\textbf{Derivation tree:}}
  {\scriptsize\begin{mathpar}%
    \inferrule*{%
      \bv(\lambda x.\CH{I}z) \equiv \lambda x.\CH{I}z
      \quad
      \inferrule*{%
        \bv ( \CH{I} ) = \CH{I} \quad \bv (z) = z \quad
        \boxed{\cas{z}{x}{x}}^1 \equiv z \quad \bv(z) = z
      }%
      { \bv(\CH{I}z) = z }%
      \quad %
      \boxed{\cas{z}{x}{\CH{I}z}}^2 \equiv \CH{I}z \quad
      \inferrule* %
      {\ldots \boxed{\cas{z}{x}{x}}^3 \ldots}%
      { \bv( \CH{I}z ) = z} }%
    { \bv((\lambda x.\CH{I}z)(\CH{I}z)) = z }%
  \end{mathpar}}%
  \flushleft{\textbf{Evaluation sequence:}}
  \begin{displaymath}
    (\lambda x.\CH{I}z)(\underline{\CH{I}z}) %
    \rel %
    (\lambda x.\CH{I}z)(\ \boxed{\cas{z}{x}{x}}^1 ) %
    \equiv %
    \underline{(\lambda x.\CH{I}z)z} %
    \rel %
    \boxed{\cas{z}{x}{\CH{I}z}}^2 %
    \equiv %
    \underline{\CH{I}z} %
    \rel %
    \boxed{\cas{z}{x}{x}}^3 %
    \equiv %
    z
  \end{displaymath}
  \caption{Fragment of the evaluation tree of the term $(\lambda
    x.\CH{I}z)(\CH{I}z)$ where $\CH{I}\equiv \lambda x.x$. The evaluation
    sequence (with redexes underlined) is obtained by collecting the redexes
    and contracta in in-order traversal.}
  \label{fig:example-nat-sem}
\end{figure}

\section{A survey of the pure lambda calculus's strategy space}
\label{sec:strategies}
This section provides a survey of pure lambda calculus strategies, their
properties (some of which are features of their definition and serve as
classification criteria), their abstract machines, and some uses that
illustrate the properties. The strategies are defined through their canonical
eval-apply evaluators in natural semantics. For ease of reference, all the
evaluators are collected in two figures. Figure~\ref{fig:strategies1}
(page~\pageref{fig:strategies1}) shows the evaluators for the main well-known
strategies in the literature.  Figure~\ref{fig:strategies2}
(page~\pageref{fig:strategies2}) shows the evaluators for less known but
equally important strategies. The last three are novel and of particular
interest. We discuss each evaluator/strategy in a separate section. With the
exception of the novel strategies, the details discussed in the survey are
known and can be found in the cited literature.

The survey can be read cursorily and particulars fetched on demand when
reading about the structuring of the strategy space
(Section~\ref{sec:regimentation}). The role of the survey is to illustrate a
varied strategy space that is nonetheless determined by a few variability
points. The survey is thus wide in strategies but not too deep in strategy
details beyond those needed to structure the space. A summary can be found at
the end in Section~\ref{sec:summary-survey}.

All evaluators are defined in natural semantics by four rules named
\textsc{var}, \textsc{abs}, \textsc{con} and \textsc{neu} which have been
explained in Section~\ref{sec:eval-apply-natural-sem}.  We omit the
\textsc{var} rule in the figures because every evaluator is an identity on
variables. We also omit the rule names because we always show them in order
(\textsc{abs}, \textsc{con}, and \textsc{neu}) from left to right, top to
bottom. The \textsc{con} and \textsc{neu} rules are related by their first
premises. In all the \textsc{con} rules the first premise has the form $\ea(M)
= \lambda x.B$ (here $\ea$ is a generic name for \texttt{e}val-\texttt{a}pply
evaluator) which states that the operator $M$ evaluates to an abstraction. In
all the \textsc{neu} rules the first and second premises respectively have the
form $\ea(M) = M'$ and $M' \not\equiv \lambda x.B$ which state that the
operator $M$ does not evaluate to an abstraction.

The first six evaluators in Figure~\ref{fig:strategies1} and the first two in
Figure~\ref{fig:strategies2} are discussed in \cite{Ses02} which classifies
strategies according to three criteria: weakness, strictness, and kind of
final result. The present survey can be seen as an extension of that work
where we add more strategies, properties, classification criteria, and
background. In particular, we indicate whether the evaluators are in uniform
or hybrid \emph{style}, and whether they are \emph{balanced} hybrid (recall
the latter's definition from contribution \ref{C:hybrid}).  In
Section~\ref{sec:regimentation} we will confirm that all the uniform-style and
hybrid-style evaluators in the survey respectively define uniform and hybrid
strategies.

\begin{figure}[p]
  \small
  \begin{flushleft}
    \textbf{Call-by-value} ($\bv$) or call-by-$\WNF$:
  \end{flushleft}
  \vspace{4pt}
  \begin{mathpar}
    \inferrule %
    { } %
    {\bv(\lambda x.B) = \lambda x.B} %
    \and %
    \inferrule %
    {\bv(M) = \lambda x.B \quad \bv(N) = N' \quad \bv(\cas{N'}{x}{B}) = B'} %
    {\bv(MN) = B'} %
    \and %
    \inferrule %
    {\bv(M) = M' \quad M' \not\equiv \lambda x.B \quad \bv(N) = N'} %
    {\bv(MN) = M'N'} %
  \end{mathpar}

  \vspace{5pt}

  \begin{flushleft}
    \textbf{Call-by-name} ($\bn$):
  \end{flushleft}
  \vspace{4pt}
  \begin{mathpar}
    \inferrule %
    { }{\bn(\lambda x.B) = \lambda x.B} %
    \and %
    \inferrule %
    {\bn(M)  = \lambda x.B \quad \bn(\cas{N}{x}{B}) = B'} %
    {\bn(MN) = B'} %
    \and %
    \inferrule %
    {\bn(M) = M' \quad M' \not\equiv \lambda x.B} %
    {\bn(MN) = M'N} %
  \end{mathpar}

  \vspace{10pt}

  \begin{flushleft}
    \textbf{Applicative order} ($\ao$) or call-by-$\NF$:
  \end{flushleft}
  \vspace{4pt}
  \begin{mathpar}
    \inferrule %
    {\ao(B) = B'}{\ao(\lambda x.B) = \lambda x.B'} %
    \and %
    \inferrule %
    {\ao(M) = \lambda x.B \quad \ao(N) = N' \quad \ao(\cas{N'}{x}{B}) = B'} %
    {\ao(MN) = B'} %
    \and %
    \inferrule %
    {\ao(M) = M' \quad M' \not\equiv \lambda x.B \quad \ao(N) = N'} %
    {\ao(MN) = M'N'} %
  \end{mathpar}

  \vspace{4pt}

  \begin{flushleft}
    \textbf{Normal order} ($\no$):
  \end{flushleft}
  \vspace{4pt}
  \begin{mathpar}
    \inferrule %
    {\no(B) = B'}{\no(\lambda x.B) = \lambda x.B'} %
    \and %
    \inferrule %
    {\bn(M) = \lambda x.B \quad \no(\cas{N}{x}{B}) = B'} %
    {\no(MN) = B'} %
    \and %
    \inferrule %
    {\bn(M) = M' \quad M' \not\equiv \lambda x.B \quad \no(M') = M'' \quad
      \no(N) = N'} %
    {\no(MN) = M'' N'} %
  \end{mathpar}

  \vspace{8pt}

  \begin{flushleft}
    \textbf{Head reduction} ($\hr$):
  \end{flushleft}
  \vspace{4pt}
  \begin{mathpar}
    \inferrule
    {\hr(B) = B'}{\hr(\lambda x.B) = \lambda x.B'} %
    \and %
    \inferrule %
    {\bn(M) = \lambda x.B \quad \hr(\cas{N}{x}{B}) = B'} %
    {\hr(MN) = B'} %
    \and %
    \inferrule %
    {\bn(M) = M' \quad M' \not\equiv \lambda x.B \quad \hr(M') = M''} %
    {\hr(MN) = M''N} %
  \end{mathpar}

  \vspace{4pt}

  \begin{flushleft}
    \textbf{Head spine} ($\he$):
  \end{flushleft}
  \vspace{8pt}
  \begin{mathpar}
    \inferrule %
    {\he(B) = B'}{\he(\lambda x.B) = \lambda x.B'} %
    \and %
    \inferrule %
    {\he(M) = \lambda x.B \quad \he(\cas{N}{x}{B}) = B'} %
    {\he(MN) = B'} %
    \and %
    \inferrule %
    {\he(M) = M' \quad M' \not\equiv \lambda x.B} %
    {\he(MN) = M'N} %
  \end{mathpar}

  \vspace{10pt}

  \begin{flushleft}
    \textbf{Strict normalisation} ($\sn$):
  \end{flushleft}
  \vspace{6pt}
  \begin{mathpar}
    \inferrule %
    {\sn(B) = B'} %
    {\sn(\lambda x.B) = \lambda x.B'} %
    \and %
    \inferrule %
    {\bv(M) = \lambda x.B \quad \bv(N) = N' \quad \sn(\cas{N'}{x}{B}) = B'} %
    {\sn(MN) = B'} %
    \and %
    \inferrule %
    {\bv(M) = M' \quad M' \not\equiv \lambda x.B \quad \sn(M') = M'' \quad
      \sn(N) = N'} %
    {\sn(MN) = M''N'} %
  \end{mathpar}
  \caption{Canonical eval-apply evaluators in natural semantics for the
    foremost strategies in the literature.}
  \label{fig:strategies1}
\end{figure}

\subsection{Call-by-value \texorpdfstring{$(\bv)$}{(bv)}}
\label{sec:call-by-value}
We have already discussed this strategy and its eval-apply evaluator $\bv$ in
Sections~\ref{sec:prelim:cbv} and~\ref{sec:eval-apply}. The strategy is
call-by-value where `value' means `weak normal form' (call-by-$\WNF$). It
chooses the leftmost-innermost redex not inside an abstraction.  It is weak,
strict, non-head, and incomplete (because it is strict,
Appendix~\ref{app:completeness-leftmost-spine}). It delivers weak normal forms
as results.  The call-by-value evaluator ($\bv$) is defined in uniform
style. The strategy is uniform.

As discussed in Section~\ref{sec:prelim:cbv}, the \SECD\ abstract machine
\cite{Lan64}\cite[Chp.\,10]{FH88} implements an applied version of
call-by-$\WNF$ when input terms are closed \cite[p\,421]{Ses02}. The CAM
abstract machine \cite{CCM84}\cite[Chp.\,13]{FH88} ultimately implements
call-by-$\WNF$ like \SECD. The ZAM abstract machine of the ZINC project
\cite{Ler91} implements the weak evaluator $\GL{V}$ mentioned in
Section~\ref{sec:prelim:cbv}, which evaluates neutral operands right-to-left
and is thus one-step equivalent to $\bv$ modulo commuting redexes. The CEK
abstract machine \cite{FF86,FH92} is related to the call-by-value strategy
defined by the `$\mathrm{eval}_V$' big-step evaluator of the lambda-value
calculus \cite{Plo75} (Appendix~\ref{app:lambda-value}). However, as pointed
out in \cite[p\,421]{Ses02}, the call-by-value strategy described in
\cite[Sec.\,2]{FH92} does not evaluate $(x y) ((\lambda z.z)v)$ with value $v$
to $(x y) v$ because it has no evaluation context for it.

The seminal framework for environment-based call-by-value abstract machines
for a calculus of closures can be found in \cite{Cur91,BD07}. In \cite{BD07},
the name `call-by-value' is reserved for the abstract machine whereas the
strategy is called `applicative-order'.  However, the name `applicative order'
is usually reserved for the full-reducing call-by-$\NF$ strategy of the pure
lambda calculus (Section~\ref{sec:applicative-order}). Traditionally,
environment-based abstract machines correspond with calculi with closures or
explicit substitutions \cite{Cur91,ACCL91}. In \cite{ABM14}, the authors use
the linear substitution calculus to vanish some machine transitions and
`distill' abstract machines from the calculus while preserving the time
complexity of the execution.

\subsection{Call-by-name \texorpdfstring{$(\bn)$}{(bn)}}
\label{sec:call-by-name}
This strategy chooses the leftmost-outermost redex not inside an
abstraction. It is weak, non-strict, head, and delivers (and is complete for)
weak head normal forms ($\WHNF$ in Figure~\ref{fig:lam-sets}). These consist
of variables and non-evaluated abstractions and neutrals.

The \textsc{abs} rule does not evaluate abstractions (weak).  The \textsc{con}
rule evaluates the operator to an abstraction (goes left) and then the operand
$N$ is passed unevaluated (non-strict `by name') to the substitution. The
\textsc{neu} rule also leaves the neutral's operand $N$ unevaluated (head), as
if the head variable were a non-strict data-type value constructor. The
call-by-name evaluator ($\bn$) is defined in uniform style. The strategy is
uniform.

The presence of rule \textsc{neu} marks the difference with call-by-name in
programming languages where open terms are disallowed. As pointed out in
\cite[p.\,425]{Ses02}, it is well-known that, with closed input terms,
call-by-name ($\bn$) is identical to those strategies and to the big-step
evaluator `$\mathrm{Eval}_N$' of \cite{Plo75}. The latter does not consider
neutrals. Interestingly, the small-step call-by-name evaluator `$\rel_N$' does
evaluate neutrals \cite[p.\,146,\,4th\,rule]{Plo75} but to a limited extent,
as also pointed out in \cite[p.\,421]{Ses02}. For example, $x \, ((\lambda
z.z)\,v)$ evaluates to $x\,v$ but $x \, y \, ((\lambda z.z)\,v)$ is a stuck
term. The `$\mathrm{Eval}_N$' and `$\rel_N$' evaluators are one-step
equivalent (proven on paper \cite[Thm.\,2,\,p.\,146]{Plo75}, not by program
transformation) only for closed input terms.

Call-by-name is the basis for the call-by-need evaluation mechanism of
non-strict functional programming languages (broadly, call-by-name with
sharing and memoisation \cite[Chp.\,11]{FH88}). The seminal framework for
environment-based call-by-name abstract machines for a calculus of closures
can be found in \cite{Cur91,BD07}. The Krivine abstract machine (KAM)
\cite{Kri07} implements call-by-name. It has been the inspiration for abstract
machines of proof-assistants that add capabilities for full-reduction on KAM.
The KN abstract machine is a well-known full-reducing improvement of KAM
(Section~\ref{sec:normal-order}).

\subsection{Applicative order \texorpdfstring{$(\ao)$}{(ao)}}
\label{sec:applicative-order}
This strategy chooses the leftmost-innermost redex (whether or not inside an
abstraction).  It is non-weak, strict, non-head, full-reducing, and
incomplete. It delivers normal forms as results ($\NF$ in
Figure~\ref{fig:lam-sets}).

The \textsc{abs} rule evaluates abstractions (non-weak). The \textsc{con} rule
evaluates the operator to an abstraction in $\NF$ (goes left), then the
operand to $\NF$ (strict), and then the contractum to $\NF$. The \textsc{neu}
rule evaluates the operator to $\NF$ (goes left), and then the operand to
$\NF$ (non-head). The applicative order evaluator ($\ao$) is defined in
uniform style. The strategy is uniform.

Because applicative order evaluates operands to normal form before
contraction, it is sometimes referred to as `call-by-value' where value means
`normal form' (call-by-$\NF$). It should not be called `call-by-value' in
applied versions of the pure lambda calculus for modelling programming
languages because functions are not evaluated and the corresponding strategy
is call-by-value as call-by-$\WNF$ (Section~\ref{sec:call-by-value}). As a
historical curiosity, applicative order is called `standard order' in
\cite[Chp.\,6]{FH88} perhaps because evaluating operands before contraction
was somehow considered standard. However, `standard' has a technical meaning
in reduction (Section~\ref{sec:normal-order}) which does not apply to
applicative order. An early abstract machine for applicative order based on a
modification of the \SECD\ abstract machine can be found in \cite{McG70}.

Applicative order illustrates that weakness and strictness are separate
classification criteria, and that the notion of `value' is determined by the
properties of the strategy that determine how operands of redexes are
evaluated.

\subsection{Normal order \texorpdfstring{$(\no)$}{(no)}}
\label{sec:normal-order}
This strategy chooses the leftmost-outermost redex. It is non-weak,
non-strict, non-head, full-reducing, and delivers (and is complete for) normal
forms ($\NF$). In the lambda-calculus literature, a leftmost-outermost redex
is simply called `leftmost' and normal order is called `leftmost reduction'
\cite{CF58,HS08,Bar84}. Note that call-by-name ($\bn$) is another leftmost
strategy, but does not evaluate leftmost redexes inside abstractions.

The \textsc{abs} rule evaluates abstractions (non-weak). The \textsc{con} rule
is more sophisticated. Given an application $MN$, normal order evaluates the
operator $M$ first (goes left), but when the operator evaluates to an
abstraction $\lambda x.B$ then the application $(\lambda x.B)N$ is the
leftmost-outermost redex to be contracted. Normal order must not evaluate the
abstraction body $B$. That would be going innermost, like applicative order.
Normal order contracts the outermost redex $(\lambda x.B)N$ passing $N$
unevaluated to the substitution (non-strict). To avoid the full evaluation of
abstractions in operator position, normal order relies on the weak
call-by-name strategy ($\bn$) as subsidiary. Finally, the \textsc{neu} rule
evaluates the operator further to $\NF$ and also evaluates the operand
(non-head) further to $\NF$, as the aim is to deliver a final $\NF$.

The normal order evaluator ($\no$) is hybrid in style, and it is balanced, as
it evaluates applications $MN$ in rule \textsc{con} before contraction as much
as its subsidiary call-by-name ($\bn$). The strategy is balanced hybrid.

Normal order is the `standard' full-reducing and complete strategy of the pure
lambda calculus, both in the colloquial sense (it is the \emph{de facto}
strategy used for full evaluation in the calculus) and in the technical sense
(roughly, it does not evaluate a redex to the left of the one just contracted
\cite{Bar84,HS08}). Due to the possibility of non-termination, normal order is
a semi-decision procedure for the `has-a-normal-form' property of terms.
Because normal order is non-weak, it is inaccurate to refer to it as
call-by-name in applied versions of the pure lambda calculus for modelling
programming languages where functions are unevaluated. To our knowledge, no
programming language has normal order as its evaluation strategy. It would
evaluate functions non-strictly and value constructors strictly (no
partially-defined terms). The opposite, strict function evaluation with
non-strict value constructors was implemented by the Hope programming language
\cite{FH88}.

The abstract machine KN \cite{Cre07} implements normal order in lockstep
\cite{GPN19}. KN is the full-reducing version of the KAM machine mentioned in
Section~\ref{sec:call-by-name}. In \cite{ABM15}, the authors present a
KN-based full-reducing abstract machine with a global environment (the Strong
Milner Abstract Machine) using the `distillery' approach from the linear
substitution calculus mentioned in Section~\ref{sec:call-by-value}. The
underlying strategy, `linear leftmost-outermost reduction' is a refinement of
normal order and an extension of linear head reduction \cite{DR04}.

As motivated in contribution \ref{C:survey}, we illustrate normal order's
evaluation properties in the evaluation of general recursive functions on
Church numerals. Because normal order is full-reducing and complete, it
delivers Church numeral results in their original representation as normal
forms.  For illustration we use the typical factorial example.\footnote{The
  factorial function is actually \emph{primitive} recursive, but the
  discussion applies to any \emph{general} recursive function.} The boldface
terms denote the Church-encoded intended arithmetic functions and constants.
\begin{displaymath}
  \CH{F} \equiv \lambda f.\lambda n.\, \CH{Cond} \, (\CH{IsZero}\,n) \,
    \CH{One} \, (\CH{Mult} \, n \, ( f \, (\CH{Pred} \, n)))
\end{displaymath}
Given a Church numeral $\CH{n}$, the evaluation of $(\CH{Y}\,\CH{F}\,\CH{n})$
under $\no$ delivers the Church numeral for the factorial of $\CH{n}$. The
term $\CH{Y}$ is the standard non-strict fixed-point combinator
(Appendix~\ref{app:prelim:lambda}). The weak-reducing strategies we have
discussed can also be used but they deliver terms with unevaluated `recursive'
calls. Applicative order cannot be used because it evaluates every subterm
fully to normal form, including the fixed-point combinators which have no
normal form. Strict strategies require other combinators and encodings. We
defer this discussion to Section~\ref{sec:strict-normalisation}.

\subsection{Head reduction \texorpdfstring{$(\hr)$}{(hr)}}
\label{sec:head-reduction}
This strategy, defined in \cite[p.\,504]{Wad76}, chooses the head redex, \ie\
the redex that occurs in head position (Appendix~\ref{app:prelim:lambda}). It
is non-weak, non-strict, and head, and delivers (and is complete for) head
normal forms ($\HNF$ in Figure~\ref{fig:lam-sets}). These are approximate
normal forms where neutrals are unevaluated. The head redex is in left
position and thus head reduction is also a `leftmost' strategy, concretely, an
approximation of leftmost normal order.

The \textsc{abs} rule evaluates abstractions (non-weak).  Like normal order,
head reduction relies on weak call-by-name ($\bn$) as subsidiary in rules
\textsc{con} and \textsc{neu} to avoid the evaluation of abstractions in
operator position.  Unlike normal order, head reduction does not evaluate
neutrals (head). Indeed, normal order and head reduction differ only in rule
\textsc{neu}. The difference can be expressed in jest: unlike head reduction,
normal order lost is head in pursuit of the normal form.  Like normal order
($\no$), head reduction ($\hr$) is a balanced hybrid evaluator and strategy.

Head reduction is a semi-decision procedure for the `useful-as-a-function'
property of terms called `solvability'~\cite{Wad76,Bar84}.  Solvable terms
consist of terms that evaluate to $\NF$ as well as terms that do not evaluate
to $\NF$ but may deliver a $\NF$ when used as functions on suitable operands.

\subsection{Head spine \texorpdfstring{$(\he)$}{(he)}}
\label{sec:head-spine}
This strategy, defined in \cite[p.\,209]{BKKS87}, is the first \emph{spine}
strategy in the survey. Spine strategies are non-strict and go under lambda
just enough to uphold completeness
(Appendix~\ref{app:completeness-leftmost-spine}). They evaluate the term's
spine when depicting the term as a tree. They are big-step equivalent to a
leftmost strategy but evaluate operators more eagerly to (and yet are complete
for) $\HNF$. Concretely, spine strategies evaluate operators to $\HNF$ whereas
their big-step equivalent leftmost strategies evaluate operators to less
evaluated $\WHNF$. The leftmost counterpart of head spine ($\he$) is head
reduction ($\hr$). Both evaluate only needed redexes in the same number of
different steps.

Head spine is non-weak, non-strict, and head, and delivers (and is complete
for) $\HNF$s. The head spine evaluator ($\he$) is uniform in style. The
strategy is uniform. The classic eval-apply evaluator \texttt{headNF} in
\cite[p.\,390]{Pau96} literally implements the head spine $\he$ evaluator in
code (Section~\ref{sec:eval-readback}).

The more eager evaluation of an abstraction operator to $\HNF$ may in some
cases eliminate copies of the bound variable in the body and consequently
require less copies of the operand after contraction. The trivial example is
$(\lambda x.(\lambda y.\lambda z.y) \, x \, x) N$. Head reduction copies the
operand $N$ twice to the two occurrences of $x$. Head spine evaluates the
operator to $(\lambda x.x)$ and consequently copies $N$ once.  There are terms
for which copying is the same, \eg~ $(\lambda x.\CH{Y} x) N$.  The cost of
copying is implementation dependent (Section~\ref{sec:summary-survey}).

\subsection{Strict normalisation \texorpdfstring{$(\sn)$}{(sn)}}
\label{sec:strict-normalisation}
This strategy chooses the leftmost-innermost redex but it evaluates operators
to $\WNF$ using call-by-value ($\bv$) as subsidiary. If the result is an
abstraction (rule \textsc{con}) then it evaluates the redex's operand to
$\WNF$ also using call-by-value. The contractum is evaluated fully to normal
form.  If the result is not an abstraction (rule \textsc{neu}) then it
evaluates the neutral fully to normal form. Finally, it evaluates abstractions
fully to normal form (rule \textsc{abs}).

Strict normalisation is non-weak, strict, non-head, full-reducing, and
incomplete for arbitrary terms. Like normal order ($\no$), strict
normalisation ($\sn$) is a balanced hybrid evaluator and strategy.

We introduced this strategy and its natural semantics evaluator in
\cite[pp.\,195-6]{GPN14} but under the name \texttt{byValue} because it is
equivalent to the classic eval-readback evaluator \texttt{byValue} in
\cite[p.\,390]{Pau96}. In turn, the \texttt{byValue} evaluator antedates and
is identical to the `strong reduction' ($\GL{N}$) eval-readback evaluator in
\cite[p.\,237]{GL02}, save for the evaluation order of operands of neutrals
(left-to-right in \texttt{byValue}, right-to-left in $\GL{N}$). We discuss and
compare \texttt{byValue} and $\GL{N}$ in Section~\ref{sec:eval-readback}.
Recall from Section~\ref{sec:prelim:cbv} that $\GL{N}$ is complete for
convergent terms. Thus, \texttt{byValue} and strict normalisation ($\sn$) are
also complete for convergent terms. The LWF proof
(Section~\ref{sec:hybrid-evalreadback}) will show that strict normalisation
($\sn$) and \texttt{byValue} are one-step equivalent modulo commuting redexes.

Abstract machines for full-reducing call-by-value are scarce. The \SECD\
\cite{Lan64}\cite[Chp.\,10]{FH88}, CAM \cite{CCM84}\cite[Chp.\,13]{FH88} and
ZAM \cite{GL02} abstract machines are weak-reducing. An abstract machine for a
full-reducing call-by-value strategy of the pure lambda calculus with
right-to-left evaluation of operands of neutrals \emph{and} of redexes is
presented in \cite{BBCD20}, with its performance improved in
\cite{BCD21}. Interestingly, the abstract machine is obtained by
deconstructing the normal order KN machine using program transformation and
the concept of hybrid style.  The machine uses a right-to-left weak
call-by-value subsidiary. The relation with right-to-left versions of strict
normalisation ($\sn$), \texttt{byValue}, and $\GL{N}$ is not explored.

A seminal full-reducing call-by-value strategy for the lambda-value calculus
of \cite{Plo75} is defined by the abstract machine obtained as a instance of
the `principal reduction machine' of \cite[p.\,70]{RP04}. A related
full-reducing call-by-value strategy, called `value normal order', is
presented in \cite[Sec.\,7.1]{GPN16}. The latter differs from the former in
that blocks are evaluated left-to-right.

Strict normalisation can be used to evaluate general recursive functions on
Church numerals using the strict fixed-point combinator $\CH{Z}$ and thunking
`protecting by lambda'. Thunking \cite{Ing61,DH92} consists in delaying the
evaluation of subterms that have recursive calls which may diverge. These
subterms are `serious' in opposition to the `trivial' subterms that converge
\cite[p.\,382]{Rey72}. The serious subterms are delayed by placing them inside
an abstraction, hence `protecting by lambda'. To illustrate, take the
following thunked definition of the factorial function:
\begin{displaymath}
  \CH{F} \equiv \lambda f.\lambda n.\, \CH{Cond} \, (\CH{IsZero}\,n) \,
  (\lambda v.\, \CH{One}) \, (\lambda v.\, \CH{Mult} \, n \, ( f \,
  (\CH{Pred}\,n) ) )
\end{displaymath}
Given a Church numeral $\CH{n}$, the term $(\CH{Z}\,\CH{F}\,\CH{n}\,\CH{T})$
evaluates under $\sn$ to the Church numeral for the factorial of $\CH{n}$. An
arbitrary closed term $\CH{T}$, \eg~the identity $\CH{I}$, can be passed to
unthunk. The serious subterm $(\CH{Z}\,\CH{F})\,(\CH{Pred}\, n)$ is protected
by the binding $\lambda v$.

A continuation-passing style encoding is also possible using the identity as
initial continuation:
\begin{displaymath}
  \CH{F} \equiv \lambda f.\lambda n.\, \CH{Cond} \,
  (\CH{IsZero}\,n) \, (\lambda k. \, k\, \CH{One}) \,
  (\lambda k. \,f \, (\CH{Pred}\,n) \, (\lambda x. k \, (\CH{Mult} \, n \, x)))
\end{displaymath}
The serious subterm $(\CH{Z}\,\CH{F})(\CH{Pred}\,n)(\lambda
x.k\,(\CH{Mult}\,n\,x))$ is protected by $\lambda k$. At each step the
passed continuation $k$ is composed with the multiplication which is passed
as a new continuation $\lambda x.k\,(\CH{Mult}\,n\,x)$.


\begin{figure}[p]
  \small
  \begin{flushleft}
    \textbf{Hybrid normal order} ($\hn$):
  \end{flushleft}
  \vspace{4pt}
  \begin{mathpar}
    \inferrule %
    {\hn(B) = B'}{\hn(\lambda x.B) = \lambda x.B'} %
    \and %
    \inferrule %
    {\he(M) = \lambda x.B \quad \hn(\cas{N}{x}{B}) = B'} %
    {\hn(MN) = B'} %
    \and %
    \inferrule %
    {\he(M) = M' \quad M' \not\equiv \lambda x.B \quad \hn(M') = M'' \quad
      \hn(N) = N'} %
    {\hn(MN) = M''N'} %
  \end{mathpar}

  \vspace{4pt}

  \begin{flushleft}
    \textbf{Hybrid applicative order} ($\ha$):
  \end{flushleft}
  \vspace{4pt}
  \begin{mathpar}
    \inferrule %
    {\ha(B) = B'}{\ha(\lambda x.B) = \lambda x.B'} %
    \and %
    \inferrule %
    {\bv(M) = \lambda x.B \quad \ha(N) = N' \quad \ha(\cas{N'}{x}{B}) = B'} %
    {\ha(MN) = B'} %
    \and %
    \inferrule %
    {\bv(M) = M' \quad M' \not\equiv \lambda x.B \quad \ha(M') = M'' \quad
      \ha(N) = N'} %
    {\ha(MN) =  M''N'} %
  \end{mathpar}

  \vspace{4pt}

  \begin{flushleft}
    \textbf{Ahead machine} ($\am$):
  \end{flushleft}
  \vspace{4pt}
  \begin{mathpar}
    \inferrule %
    {\am(B) = B'}{\am(\lambda x.B) = \lambda x.B'} %
    \and %
    \inferrule %
    {\bv(M) = \lambda x.B \quad \bv(N) = N' \quad \am(\cas{N'}{x}{B}) = B'} %
    {\am(MN) = B'} %
    \and %
    \inferrule %
    {\bv(M) = M' \quad M' \not\equiv \lambda x.B \quad \am(M') =  M'' \quad
      \bv(N) = N'} %
    {\am(MN) = M''N'} %
  \end{mathpar}

  \vspace{4pt}

  \begin{flushleft}
    \textbf{Head applicative order} ($\ho$):
  \end{flushleft}
  \vspace{4pt}
  \begin{mathpar}
    \inferrule %
    {\ho(B) = B'}{\ho(\lambda x.B) = \lambda x.B'} %
    \and %
    \inferrule %
    {\ho(M) = \lambda x.B \quad \ho(N) = N' \quad \ho(\cas{N'}{x}{B}) = B'} %
    {\ho(MN) = B'} %
    \and %
    \inferrule %
    {\ho(M) = M' \quad M' \not\equiv \lambda x.B} %
    {\ho(MN) = M'N} %
  \end{mathpar}

  \vspace{4pt}

  \begin{flushleft}
    \textbf{Spine applicative order} ($\so$):
  \end{flushleft}
  \vspace{4pt}
  \begin{mathpar}
    \inferrule %
    {\so(B) = B'}{\so(\lambda x.B) = \lambda x.B'} %
    \and %
    \inferrule %
    {\ho(M) = \lambda x.B \quad \so(N) = N' \quad \so(\cas{N'}{x}{B}) = B'} %
    {\so(MN) = B'} %
    \and %
    \inferrule %
    {\ho(M) = M' \quad M' \not\equiv \lambda x.B \quad \so(M') = M'' \quad
      \so(N) = N'} %
    {\so(MN) = M''N'} %
  \end{mathpar}
  \begin{flushleft}
    \textbf{Balanced spine applicative order} ($\bs$):
  \end{flushleft}
  \vspace{4pt}
  \begin{mathpar}
    \inferrule %
    {\bs(B) = B'}{\bs(\lambda x.B) = \lambda x.B'} %
    \and %
    \inferrule %
    {\ho(M) = \lambda x.B \quad \ho(N) = N' \quad \bs(\cas{N'}{x}{B}) = B'} %
    {\bs(MN) = B'} %
    \and %
    \inferrule %
    {\ho(M) = M' \quad M' \not\equiv \lambda x.B \quad \bs(M') = M'' \quad
      \bs(N) = N'} %
    {\bs(MN) = M''N'} %
  \end{mathpar}
  \caption{Canonical eval-apply evaluators in natural semantics for less known
    strategies.}
  \label{fig:strategies2}
\end{figure}

\subsection{Hybrid normal order \texorpdfstring{$(\hn)$}{(hn)}}
\label{sec:hybrid-normal-order}
Hereafter we discuss the evaluators in Figure \ref{fig:strategies2}
(page~\pageref{fig:strategies2}). Hybrid normal order is non-weak, non-strict,
non-head, full-reducing, and delivers (and is complete for) normal forms
($\NF$). This strategy is presented in \cite[p.\,429]{Ses02} and given its
name because it `is a hybrid of head spine reduction and normal order
reduction'. Normal order is itself hybrid, so `hybrid normal order' can be
confusing, but it is the original name given in \cite[p.\,429]{Ses02}. Here
`hybrid' is used informally but closely to the concept of dependence on a
subsidiary evaluator, what we call hybrid \emph{style} (recall contribution
\ref{C:hybrid}). What the phrase means is that hybrid normal order ($\hn$) is
obtained from normal order ($\no$) by replacing from the latter the subsidiary
call-by-name ($\bn$) by head spine ($\he$). A more precise description is that
hybrid normal order is the spine counterpart of leftmost normal order because
it relies on a spine subsidiary to evaluate operators up to $\HNF$, whereas
normal order relies on a leftmost subsidiary to evaluate operators to $\WHNF$
(Section~\ref{sec:head-spine}). Hybrid normal order ($\hn$) is a balanced
hybrid evaluator and strategy.

The eval-apply hybrid normal order evaluator ($\hn$) is one-step equivalent to
the classic eval-readback evaluator \texttt{byName} in \cite[p.\,390]{Pau96}
(Section~\ref{sec:eval-readback}). This is proven by the LWF proof
(Section~\ref{sec:hybrid-evalreadback}). Recall from
Section~\ref{sec:head-spine} that hybrid normal order's subsidiary, head spine
($\he$), is literally implemented in code by \texttt{headNF} which is the eval
stage of the eval-readback evaluator \texttt{byName}
(Section~\ref{sec:eval-readback}).

\subsection{Hybrid applicative order \texorpdfstring{$(\ha)$}{(ha)}}
\label{sec:hybrid-applicative-order}
Hybrid applicative order is non-weak, strict, non-head, full-reducing,
incomplete, and delivers normal forms ($\NF$). This strategy is presented in
\cite[p.\,428]{Ses02} and given its name because it `is a hybrid of
call-by-value and applicative order'. The word `hybrid' is used again
informally and closely to the concept of dependence on a subsidiary evaluator,
what we call hybrid \emph{style}. Hybrid applicative order ($\ha$) relies on
weak call-by-value ($\bv$) as subsidiary to evaluate operators, just like
normal order ($\no$) relies on weak call-by-name ($\bn$) as subsidiary for
that. In short, hybrid applicative order ($\ha$) differs from applicative
order ($\ao$) in that it does not evaluate abstractions in operator position.
Although hybrid applicative order ($\ha$) is incomplete, it can find more
normal forms than applicative order ($\ao$) because it is less eager (less
undue divergence) and in fewer steps than normal order ($\no$) because it is
strict \cite{Ses02}.

Hybrid applicative order ($\ha$) is a hybrid evaluator and strategy, \emph{but
  it is unbalanced}: it evaluates operands of redexes \emph{more} than its
subsidiary $\bv$ because it calls itself recursively on the operand rather
than calling $\bv$. The assertion in \cite[p.\,428]{Ses02} that hybrid
applicative order is equivalent to the classic eval-readback \texttt{byValue}
evaluator is thus incorrect: hybrid applicative order does not evaluate
operands of redexes using $\bv$ (Section~\ref{sec:eval-readback}).

Hybrid applicative order can be used to evaluate general recursive functions
on natural numbers with the same fixed-point combinator and thunking style
than strict normalisation ($\sn$). But hybrid applicative order is more eager
on operands.

\subsection{Ahead machine \texorpdfstring{$(\am)$}{(am)}}
\label{sec:ahead-machine}
This strategy is defined in \cite[Def.\,5.1]{PR99}. It uses call-by-value
($\bv$) as subsidiary to evaluate operators and operands in redexes, like
strict normalisation ($\sn$), but it also uses call-by-value to evaluate
neutrals. (Recall from Section~\ref{sec:prelim:cbv} that in
\cite[p.\,17]{PR99} call-by-value ($\bv$) is called `inner machine'.) Ahead
machine evaluates abstractions using itself recursively.

Ahead machine is non-weak, strict, non-head, incomplete, and delivers
so-called `value head normal forms' ($\NT{VHNF}$ in Figure~\ref{fig:lam-sets})
which are terms in $\WNF\cap\HNF$. Ahead machine ($\am$) is a balanced hybrid
evaluator and strategy. Although it is incomplete in the pure lambda calculus,
it is a semi-decision procedure for an approximate notion of solvability in
the pure lambda-value calculus \cite{PR99}.

\subsection{Head applicative order \texorpdfstring{$(\ho)$}{(ho)}}
\label{sec:ho}
This is a novel strategy. It is the natural head variation of applicative
order ($\ao$). Its rule structure is similar to $\ao$'s save for the
\textsc{neu} rule where neutrals are not evaluated. Head applicative order
($\ho$) is non-weak, strict, head, incomplete, and delivers head normal forms
($\HNF$). It is a uniform evaluator and strategy.

\subsection{Spine applicative order  \texorpdfstring{$(\so)$}{(so)}}
\label{sec:so}
This is a novel strategy. It is non-weak, strict, non-head, full-reducing, and
incomplete, and delivers $\NF$s. It is the spine-ish counterpart of hybrid
applicative order ($\ha$). Recall from Section~\ref{sec:head-spine} that spine
strategies, which live in the non-strict subspace, evaluate the term's spine
and operators to $\HNF$. A spine strategy is big-step equivalent to a leftmost
strategy that evaluates operators to $\WHNF$. In the strict space we cannot
speak of spineness literally, because evaluated operands are not in the term's
spine, but we think the spine analogy is pertinent when operators are
evaluated to $\HNF$ by a strict subsidiary. Spine applicative order ($\so$)
evaluates operators to $\HNF$ using head applicative order ($\ho$), whereas
hybrid applicative order ($\ha$) evaluates operators to $\WNF$ using
call-by-value ($\bv$). Spine applicative order ($\so$) is a hybrid strategy
\emph{but it is unbalanced}: it evaluates operands of redexes \emph{more} than
its subsidiary by calling itself recursively.

Spine applicative order ($\so$) is more eager than hybrid applicative order
($\ha$).  It has several interesting features. First, it is the \emph{most
  eager} hybrid strategy in the survey. Second, it evaluates general recursive
functions on Church numerals with thunking `protecting by variable' (rather
than `by lambda'), with delimited CPS (continuation-passing style)
\cite{BBD05}, and with the \emph{non-strict} fixed-point combinator $\CH{Y}$.
(Notice the contrast: `most eager' and `non-strict fixed-point'.) To
illustrate, take the delimited-CPS definition of the factorial function:
\begin{displaymath}
  \CH{F} \equiv \lambda f.\lambda n.\lambda
  k.\, \CH{Cond} \, (\CH{IsZero}\,n) \, (k\,\CH{One}) \, (k\, (f \,
  (\CH{Pred}\,n) \, (\CH{Mult}\,n)))
\end{displaymath}
Given a Church numeral $\CH{n}$, the term $(\CH{Y}\,\CH{F}\,\CH{n}\, \CH{I})$
evaluates under $\so$ to the Church numeral for the factorial of $\CH{n}$. The
identity $\CH{I}$ is passed as the initial continuation. The variable $k$,
which stands for the delimited continuation, protects the `serious' subterm
$(\CH{Y}\,\CH{F})\,(\CH{Pred}\,n)\, (\CH{Mult}\,n)$. In the evaluation under
$\so$ there are no administrative evaluation steps resulting from a strict
fixed-point combinator and from the composition of continuations because they
are used at each call, passing the multiplication $(\CH{Mult}\,n)$ as a
continuation to each serious subterm.

\subsection{Balanced spine applicative order %
  \texorpdfstring{$(\bs)$}{(bs)}}
\label{sec:bs}
This is a novel strategy. It is a balanced hybrid variation of spine
applicative order ($\so$) where the subsidiary, head applicative order
($\ho$), is employed to evaluate operands of redexes to recover the balanced
property. It is non-weak, strict, non-head, full-reducing, incomplete, and
delivers $\NF$s. Although it is less eager than spine applicative order
($\so$), balanced spine applicative order ($\bs$) can be used to evaluate
general recursive functions in the same fashion as $\so$.

\subsection{Summary of the survey}
\label{sec:summary-survey}
Figure~\ref{fig:summary-survey} (page~\pageref{fig:summary-survey}) summarises
the strategies, their properties, and their use in evaluating general
recursive functions. Applicative order ($\ao$) is missing from the bottom
table because, as explained in Section~\ref{sec:normal-order}, it evaluates
every subterm fully to normal form, including the fixed-point combinators
which have no normal form, and thus it defeats any thunking mechanism.

The four criteria of weakness, strictness, headness, and uniform/hybrid are
the few variability points in the natural semantics. Their configurations
determine the strategy space. The staged evaluation of the operator in rule
\textsc{neu} is not a variability point as it is subsumed by the
uniform/hybrid configuration: only hybrid strategies use the staged evaluation
because they use the subsidiary to evaluate operators.

\begin{figure}[htb]\small
  \begin{tabular}[t]{llllllll}
    \hline
    \multicolumn{8}{c}{Strategies and properties} \\
    \hline

    $\bv$ & weak & strict & non-head & to $\WNF$ & incomplete & &
    uniform \\

    $\bn$ & weak & non-strict & head & to $\WHNF$ & complete-for &
    leftmost & uniform \\

    $\ao$ & non-weak & strict & non-head & to $\NF$ & incomplete & & uniform \\

    $\no$ & non-weak & non-strict & non-head & to $\NF$ & complete-for &
    leftmost & hybrid bal. \\

    $\hr$ & non-weak & non-strict & head & to $\HNF$ & complete-for &
    leftmost & hybrid bal.\\

    $\he$ & non-weak & non-strict & head & to $\HNF$ & complete-for &
    spine & uniform \\

    $\sn$ & non-weak & strict & non-head & to $\NF$ & incomplete & & hybrid bal.\\

    $\hn$ & non-weak & non-strict & non-head & to $\NF$ & complete-for &
    spine & hybrid bal.\\

    $\ha$ & non-weak & strict & non-head & to $\NF$ & incomplete & & hybrid \\

    $\am$ & non-weak & strict & non-head & to $\NT{VHNF}$ & incomplete &
    & hybrid bal. \\

    $\ho$ & non-weak & strict & head & to $\HNF$ & incomplete &  & uniform \\

    $\so$ & non-weak & strict & non-head & to $\NF$ & incomplete & & hybrid \\

    $\bs$ & non-weak & strict & non-head & to $\NF$ & incomplete & & hybrid
    bal. \\

    \hline
  \end{tabular}

  \vspace{7pt}
  \begin{tabular}[t]{cc}
    \hline
    \multicolumn{2}{c}{Big-step equivalence} \\
    \hline
    Leftmost & Spine \\
    \hline
    $\no$ & $\hn$ \\
    $\hr$ & $\he$ \\
    \hline
  \end{tabular}

  \vspace{7pt}

  \begin{tabular}[t]{clcll}
    \hline
    \multicolumn{5}{c}{Usage in evaluating general recursive functions} \\
    \hline
    Strategy & Result & Combinator & Style & Thunking \\
    \hline
    $\bn$ & $\WHNF$ & $\CH{Y}$ & direct or CPS & none \\
    $\hr$, $\he$ & $\HNF$ & $\CH{Y}$ & direct or CPS & none \\
    $\no$, $\hn$ & $\NF$ & $\CH{Y}$ & direct or CPS & none \\

    $\bv$ & $\WNF$ & $\CH{Z}$ & direct or CPS & protecting by lambda \\
    $\am$ & $\NT{VHNF}$ & $\CH{Z}$ & direct or CPS & protecting by lambda \\
    $\sn$, $\ha$ & $\NF$ & $\CH{Z}$ & direct or CPS & protecting by lambda \\

    $\ho$ & $\HNF$ & $\CH{Y}$ & delimited CPS & protecting by variable \\
    $\so$, $\bs$ & $\NF$ & $\CH{Y}$ & delimited CPS & protecting by variable \\
    \hline
  \end{tabular}
  \caption{Summary of strategies, properties, and use in evaluating general
    recursive functions.}
  \label{fig:summary-survey}
\end{figure}

Each strategy has its uses. As mentioned in Section~\ref{sec:the-setting},
there is no optimal strategy in terms of time (\eg\ contraction counts) or
space (\eg\ term size) because there are terms for which any strategy
duplicates work \cite{Lev78,Lev80,Lam89,AG98}. Optimisations are
implementation dependent, particularly on term representation. For example, we
implement evaluators in Haskell where we can share our deeply-embedded terms:
\begin{code}
    m :: Term
    m = App n n
    where n :: Term
          n = ... -- some large term
\end{code}
There is only a copy of \texttt{n} in \texttt{m}. The evaluation of
\texttt{ev} \texttt{m} does not overwrite \texttt{m} because they are
different Haskell expressions. The first may be memoised but the second
remains intact. We have left efficiency considerations aside because the
particulars constrain the strategy space.
\begin{rem}
  We end this summary drawing the reader's attention to a symmetry which may
  be related to the notion of duality \cite{CH00}. Call-by-name ($\bn$) and
  call-by-value ($\bv$) are uniform and weak-reducing. These strategies are
  respectively used as subsidiaries by hybrid and full-reducing normal order
  ($\no$) and strict normalisation ($\sn$). Normal order uses call-by-name as
  subsidiary on operators. Strict normalisation uses call-by-value on
  operators and on operands. Call-by-name is dual to lambda-value's
  call-by-value \cite{Wad03}. We conjecture that, with convergent terms,
  call-by-value as call-to-$\WNF$ is dual to call-by-name in the weak-reducing
  space, and strict normalisation is dual to normal order in the full-reducing
  space.
\end{rem}

\section{The eval-readback evaluator style}
\label{sec:eval-readback}
An evaluator in eval-readback style is defined as the composition of two
functions: a less-reducing `eval' function that contracts redexes to an
intermediate result with no redex in outermost position, and a
further-reducing `readback' function that `reads back' the intermediate result
to distribute `eval' down the appropriate unevaluated subterms of the
intermediate result to reach further redexes. Readback is a sort of
`selective-iteration-of-eval' function that calls eval directly or in an
eval-readback composition on specific unevaluated subterms.

The style is related to the staged evaluation of normalisation-by-evaluation
\cite{BS91} but without recourse to an external domain. The separation of
evaluation in eval and readback stages can be of help in writing
evaluators. For example, the eval stage can be simply weak or non-head and
written in uniform eval-apply style. The readback stage can then distribute
eval under lambda, or within neutral operands, or both. The separation also
enables independent optimisations, as mentioned in the introduction.

We illustrate the style using first the two classic eval-readback evaluators
\texttt{byValue} and \texttt{byName} in \cite[p.\,390]{Pau96}. We discuss more
eval-readback evaluators later in the section. The \texttt{byValue} and
\texttt{byName} evaluators are both full-reducing and deliver normal forms.
The first is strict and incomplete. The second is non-strict and complete for
$\NF$. We first show and discuss them in their original non-monadic
definitions in the strict Standard ML programming language, but adapted to our
data-type \texttt{Term} and \texttt{subst} function (in the originals, terms
have strings only for free variables and de~Bruijn indices for bound
variables). We want readers to recognise the originals rather than show only
the monadic Haskell versions (shown in
Section~\ref{sec:monadic-eval-readback}). Furthermore, the original
non-monadic definition of \texttt{byValue} can be compared directly with the
also non-monadic definition of the `strong reduction' evaluator
\cite[p.\,237]{GL02} which we have claimed in
Section~\ref{sec:strict-normalisation} to be identical to \texttt{byValue}
modulo commuting redexes.

The \texttt{byValue} evaluator is the composition of an eval evaluator, simply
called \texttt{eval}, and a readback evaluator called \texttt{bodies}. The
\texttt{eval} evaluator is literally the Standard ML version of the
non-monadic call-by-value evaluator \texttt{bv} discussed in
Section~\ref{sec:call-by-value}, which works fine in strict Standard ML.
\begin{code}
  fun eval (App(m,n)) =
    (case eval m of
               Lam(x,b) => eval (subst x (eval n) b)
             | m'       => App(m', eval n))
    | eval t          = t

  fun byValue t = bodies (eval t)
  and bodies (Lam(x,b)) = Lam(x, byValue b)
    | bodies (App(m,n)) = App(bodies m, bodies n)
    | bodies t          = t
\end{code}
The readback evaluator \texttt{bodies} delivers a $\NF$ by fully evaluating
the abstraction bodies (hence its name `\texttt{bodies}') in the $\WNF$
delivered by \texttt{eval}. As the shape of a $\WNF$ indicates, the
unevaluated abstractions occur at the top level or within the operands of
neutrals.

In its first clause, \texttt{bodies} evaluates an abstraction to $\NF$ by
calling the eval-readback composition (\texttt{byValue}) that recursively
distributes \texttt{eval} down the body. In its second clause, \texttt{bodies}
calls itself recursively on the operator of the neutral to distribute itself
down nested applications, and on the operand to reach the unevaluated
abstractions on which it will eventually call the eval-readback composition
(\texttt{byValue}) of the first clause. Finally, \texttt{bodies} is the
identity on variables. It has no clause for redexes because it is a readback
and merely distributes evaluation.

As anticipated in Section~\ref{sec:strict-normalisation}, \texttt{byValue} is
the precursor of the eval-readback `strong reduction' evaluator $\GL{N}$ in
\cite[p.\,237]{GL02}.  The latter only differs from the former in that
neutrals have their operands flattened and evaluated right-to-left. This is
illustrated by the following definition of $\GL{N}$ that mirrors the original
in \cite[p.\,237]{GL02} but with left-to-right evaluation. The $W_i$ terms
stand for $\WNF$s.
\begin{displaymath}
  \begin{array}{rcl}
    \GL{N} & = &\GL{R}\circ\GL{V} \\
    \GL{R}(\lambda x.B) & = & \lambda x.\GL{N}(B) \\
    \GL{R}(x\, W_1 \cdots W_n) & = & x\,\GL{R}(W_1)\cdots\GL{R}(W_n)
      \qquad n \geq 0
  \end{array}
\end{displaymath}
Substitute \texttt{byValue} for $\GL{N}$, \texttt{bodies} for $\GL{R}$, and
\texttt{eval} for $\GL{V}$, and the result is \texttt{byValue}.

As discussed in Sections~\ref{sec:call-by-value}
and~\ref{sec:strict-normalisation}, strong reduction is incomplete for
arbitrary terms but complete for convergent terms, as would any strategy in
that setting.

The \texttt{byName} evaluator is the composition of an eval evaluator called
\texttt{headNF} and a readback evaluator called \texttt{args}. The eval
evaluator \texttt{headNF} literally implements in code the head spine ($\he$)
eval-apply evaluator of Section~\ref{sec:head-spine}.
\begin{code}
  fun headNF (Lam(x,b)) = Lam(x, headNF b)
    | headNF (App(m,n)) =
        (case headNF m of
            Lam(x,b) => headNF (subst x n b)
          | m'       => App(m', n))
    | headNF t          = t

  fun byName t = args (headNF t)
  and args (Lam(x,b))   = Lam(x, args b)
    | args (App(m,n))   = App(args m, byName n)
    | args t            = t
\end{code}
The readback evaluator \texttt{args} delivers a $\NF$ by fully evaluating the
unevaluated (operands of) neutrals (`arguments', hence its name
`\texttt{args}') in the $\HNF$ delivered by \texttt{headNF}. As the shape of a
$\HNF$ indicates, the unevaluated neutrals occur at the top level or within
abstraction bodies.

In its first clause, \texttt{args} evaluates an abstraction to $\NF$ by
calling itself recursively to reach the unevaluated neutrals on whose operands
it will eventually call the eval-readback composition (\texttt{byName}) of the
second clause. In the second clause, \texttt{args} calls itself recursively on
the operator of the neutral to distribute itself down nested applications, and
calls the eval-readback composition (\texttt{byName}) that recursively
distributes \texttt{headNF} down the operand. Finally, \textrm{args} is the
identity on variables. It has no clause for redexes because it is a readback
and merely distributes evaluation.

Recall from Section~\ref{sec:hybrid-normal-order} that eval-readback
\texttt{byName} is one-step equivalent to the eval-apply hybrid normal order
evaluator ($\hn$) discussed in that section. This can be gleaned from their
respective definitions and is proven by the LWF proof
(Section~\ref{sec:hybrid-evalreadback}).

\subsection{Monadic style eval-readback}
\label{sec:monadic-eval-readback}
We show the monadic definitions of \texttt{byValue} and \texttt{byName} in
Haskell. Recall from the previous section that \texttt{eval} is identical to
\texttt{bv} and \texttt{headNF} is identical to \texttt{he}. Hereafter we use
\texttt{bv} and \texttt{he} and assume their eval-apply definitions given in
Sections~\ref{sec:eval-apply-monadic} and~\ref{sec:head-spine} respectively.

In monadic style, \texttt{byValue} and \texttt{byName} are defined by a
monadic (Kleisli) composition. We have respectively replaced the calls to
\texttt{byValue} and \texttt{byName} in \texttt{bodies} and \texttt{args} by
their respective definitions as Kleisli compositions.
\begin{code}
  byValue :: Monad m => Term -> m Term
  byValue =  bodies <=< bv

  bodies :: Monad m => Term -> m Term
  bodies v@(Var _)   = return v
  bodies (Lam x b)   = do b' <- (bodies <=< bv) b
                          return (Lam x b')
  bodies (App m n)   = do m' <- bodies m
                          n' <- bodies n
                          return (App m' n')

  byName :: Monad m => Term -> m Term
  byName =  args <=< he

  args :: Monad m => Term -> m Term
  args v@(Var _)   = return v
  args (Lam x b)   = do b' <- args b
                        return (Lam x b')
  args (App m n)   = do m' <- args m
                        n' <- (args <=< he) n
                        return (App m' n')
\end{code}

\bigskip

\noindent We turn to a different example, namely, the eval-readback version of
normal order presented in \cite[p.\,183]{GPN14}. This evaluator is the
composition of eval-apply call-by-name \texttt{bn}
(Section~\ref{sec:call-by-name}) with the following readback evaluator
\texttt{rn} (from \underline{r}eadback \underline{n}ormal order):
\begin{code}
  no :: Monad m => Term -> m Term
  no =  rn <=< bn

  rn :: Monad m => Term -> m Term
  rn v@(Var _)   = return v
  rn (Lam x b)   = do b' <- (rn <=< bn) b
                      return (Lam x b')
  rn (App m n)   = do m' <- rn m
                      n' <- (rn <=< bn) n
                      return (App m' n')
\end{code}
As a proper readback, \texttt{rn} has no clause for redexes because it simply
distributes evaluation (the Kleisli composition) on the unevaluated
abstractions and neutrals of the $\WHNF$ delivered by \texttt{bn}. A $\WHNF$
has no redex in outermost position.

\bigskip

\noindent The eval-readback examples discussed so far implement full-reducing
strategies. The style can also be used for less-reducing strategies. We show
two examples which are variations of \texttt{byValue} obtained by modifying
the readback \texttt{bodies}. The first example is the `ahead machine' ($\am$)
discussed in Section~\ref{sec:ahead-machine}. In its eval-apply definition the
strategy uses weak, strict and non-head call-by-value \texttt{bv} to evaluate
operators and operands. Its eval-readback version uses \texttt{bv} for eval,
like \texttt{byValue}, but its readback does not distribute evaluation on
operands. We call this readback \texttt{bodies2} for easy comparison with
\texttt{bodies}. We use the identity strategy \texttt{id} for the
non-evaluation of operands.
\begin{code}
  am :: Monad m => Term -> m Term
  am =  bodies2 <=< bv

  bodies2 :: Monad m => Term -> m Term
  bodies2 v@(Var _) = return v
  bodies2 (Lam x b) = do b' <- (bodies2 <=< bv) b
                         return (Lam x b')
  bodies2 (App m n) = do m' <- bodies2 m
                         n' <- id n
                         return (App m' n')
\end{code}
The second example does not appear in the literature. It uses \texttt{bv} for
eval but the readback calls eval (\texttt{bv}) directly on abstraction bodies
rather than via the Kleisli composition. The strategy delivers a $\WNF$ where
the abstractions in the spine are evaluated. We refrain from naming the
strategy and will use the systematic notation that we introduce in
Section~\ref{sec:regimentation-eval-readback} to identify it as
$\Ev\Rb\circ\bv$. We call the readback \texttt{bodies3} for easy comparison
with \texttt{bodies}.
\begin{code}
  \textit{unnamed} :: Monad m => Term -> m Term
  \textit{unnamed} =  bodies3 <=< bv

  bodies3 :: Monad m => Term -> m Term
  bodies3 v@(Var _) = return v
  bodies3 (Lam x b) = do b' <- bv b
                         return (Lam x b')
  bodies3 (App m n) = do m' <- bodies3 m
                         n' <- bodies3 n
                         return (App m' n')
\end{code}

\subsection{Natural semantics style eval-readback}
\label{sec:natural-sem-eval-readback}
In natural semantics, an eval-readback evaluator requires two separate sets of
rules, one for eval and one for readback. Readback \emph{has no} \textsc{con}
\emph{rule} as it simply distributes evaluation in its \textsc{abs} and
\textsc{neu} rules to reach the redexes within the intermediate form which are
evaluated by eval (which does have a \textsc{con} rule).
Figure~\ref{fig:nat-sem-readback} (page~\pageref{fig:nat-sem-readback}) shows
the natural semantics for the readbacks discussed in code in
Section~\ref{sec:monadic-eval-readback}. We omit \textsc{var} rules because
readbacks are identities on variables. The compositions are meta-notation to
abbreviate a sequence of two premises, \eg~$(\bo \circ \bv)(B) = B''$
abbreviates the sequence of premise $\bv(B) = B'$ followed by premise $\bo(B')
= B''$. A proper natural semantics needs separate premises on which to stack
evaluation trees, but we use compositions to abbreviate and to match the
compositions in code. Compositions will be also handy when we generalise in
Section~\ref{sec:generic-reducer}. The identity strategy $\id$ is used for
non-evaluation instead of an omitted premise.

\begin{figure}[htbp]
 \small
  \begin{flushleft}
    \texttt{byValue}'s readback ($\bo$):%
  \end{flushleft}
  \vspace{4pt}
  \begin{mathpar}
    \inferrule*[left=abs] %
    {(\bo \circ \bv)(B) = B''}%
    {\bo(\lambda x.B) = \lambda x.B''} %
    \and %
    \inferrule*[left=neu] %
    {\bo(M) = M' \quad \bo(N) = N'} %
    {\bo(MN) = M'N'} %
  \end{mathpar}

  \vspace{5pt}

  \begin{flushleft}
    \texttt{byName}'s readback ($\ar$):%
  \end{flushleft}
  \vspace{4pt}
  \begin{mathpar}
    \inferrule*[left=abs] %
    {\ar(B') = B'} %
    {\ar(\lambda x.B) = \lambda x.B'} %
    \and %
    \inferrule*[left=neu] %
    {\ar(M) = M' \quad (\ar \circ \he)(N) = N''} %
    {\ar(MN) = M'N''} %
  \end{mathpar}

  \vspace{5pt}

  \begin{flushleft}
    Normal order's readback ($\rn$):%
  \end{flushleft}
  \vspace{4pt}
  \begin{mathpar}
    \inferrule*[left=abs] %
    {(\rn\circ\bn)(B) = B''} %
    {\rn(\lambda x.B) = \lambda x.B''} %
    \and %
    \inferrule*[left=neu] %
    {\rn(M) = M' \quad (\rn\circ\bn)(N) = N''} {\rn(MN) = M'N''} %
  \end{mathpar}

  \vspace{5pt}

  \begin{flushleft}
    Ahead machine's readback ($\bo_2$):
  \end{flushleft}
  \vspace{4pt}
  \begin{mathpar}
    \inferrule*[left=abs] %
    {(\bo_2 \circ \bv)(B) = B''} %
    {\bo_2(\lambda x.B) = \lambda x.B''} %
    \and %
    \inferrule*[left=neu] %
    {\bo_2(M) = M' \quad \id(N) = N} %
    {\bo_2(MN) = M' N} %
  \end{mathpar}

  \vspace{5pt}

  \begin{flushleft}
    A novel strategy's readback ($\bo_3$); the strategy is $\Ev\Rb\circ\bv$
    (Section~\ref{sec:regimentation-eval-readback}):
  \end{flushleft}
  \vspace{4pt}
  \begin{mathpar}
    \inferrule*[left=abs] %
    {\bv(B) = B'} %
    {\bo_3(\lambda x.B) = \lambda x.B'} %
    \and %
    \inferrule*[left=neu] %
    {\bo_3(M) = M' \quad \bo_3(N) = N'} %
    {\bo_3(MN) = M' N'} %
  \end{mathpar}

  \caption{Natural semantics of readback evaluators discussed in
    Section~\ref{sec:monadic-eval-readback}. The \textsc{var} rules are
    omitted because readbacks are identities on variables. There is no
    \textsc{con} rule because readbacks merely distribute evaluation and do
    not contract redexes. The identity strategy $\id$ is used for
    non-evaluation instead of an omitted premise. The compositions are
    meta-notation.}
  \label{fig:nat-sem-readback}
\end{figure}

\subsection{Characterisation of the style}
\label{sec:eval-readback:provisos}
The examples so far illustrate some of the variations within the eval-readback
style. But we must rule out spurious eval-readback evaluators defined by
concocting vacuous readbacks. For example, the uniform eval-apply $\bv$ used
as an eval stage can be defined spuriously in eval-readback style as the
composition $\rb\circ\bv$ where $\rb$ is the following vacuous readback:
{\small\begin{mathpar} %
  \inferrule*[left=abs] %
  {\id(B) = B} %
  {\rb(\lambda x.B) = \lambda x.B} %
  \and %
  \inferrule*[left=neu] %
  {\id(M) = M \quad \id(N) = N} %
  {\rb(MN) = M N}
\end{mathpar}}%
Substituting $\rb$ for the identity strategy $\id$ in any of the premises
obtains a merely recursive readback that is also vacuous.

Readback must be called after eval and it must evaluate more than eval by
calling eval either directly or in an eval-readback composition over some
terms that were unevaluated by eval. For that, readback must also call itself
recursively or in a composition. The following provisos characterise authentic
readbacks in natural semantics:
\begin{enumerate}[label=(\textsc{er}$_\arabic*$)]
\item \label{prov:ER-optwo} Readback must call itself recursively on the
  operator $M$ to distribute itself down nested applications. (Notice that
  eval also calls itself recursively on operators.)
\item \label{prov:ER-la-artwo} In the other two premises, namely, evaluation
  of the body $B$ (non-weakness) and evaluation of the operand $N$
  (non-headness):
  \begin{enumerate}
  \item[(rb-ev)] Readback must call \emph{eval} on at least one of the two
    premises where eval was the identity, calling eval directly or in an
    eval-readback composition.
  \item [(rb-rb)] Readback must call \emph{readback} recursively on at least one
    of the two premises. In the case where eval was the identity on the
    premise, readback must call the eval-readback composition.
  \end{enumerate}
\end{enumerate}
The proviso \ref{prov:ER-la-artwo} holds when readback calls the eval-readback
composition on one of the two premises and the identity on the other premise.
It also holds when readback calls eval on a premise unevaluated by eval and
calls readback on the other premise evaluated by eval. Other combinations are
possible, all of which call eval at least once and readback at least once,
directly on in a composition. It is trivial to check that the provisos
guarantee $\rb \circ \ev \not= \ev$ and $\rb \circ \ev \not= \rb$.

\subsection{Equivalences within the style}
\label{sec:eval-readback:equiv}
The splitting into eval and readback stages provides a modular approach but
the stages are tied-in by eval's evaluation of redexes which restricts the
amount of evaluation that can be moved from eval to readback and
vice-versa. For illustration, consider the weak, non-strict, and non-head
strategy implemented by the following eval-\emph{apply} evaluator $\ea$: 
{\small\begin{mathpar} %
    \inferrule %
    {\id(B) = B } %
    {\ea(\lambda x.B) = \lambda x.B} %
    \and %
    \inferrule %
    {\ea(M) = \lambda x.B \quad \ea(\cas{N}{x}{B}) = B'} %
    {\ea(MN) = B'} %
    \and %
    \inferrule %
    {\ea(M) = M' \quad M'\not\equiv \lambda x.B \quad \ea(N) = N'} %
    {\ea(MN) = M'N'}%
  \end{mathpar}}%
This novel strategy has an interesting property and will be discussed and
named $\IIS$ in Section~\ref{sec:sub:cube}. The point now is that a one-step
equivalent eval-\emph{readback} evaluator is easily defined by moving the
evaluation of neutrals to a readback $\rb$ so that $\ea = \rb \circ \bn$ where
$\bn$ (call-by-name) is the head version of $\ea$. We show the definition of
$\rb$. (The definition of $\bn$ is in Figure~\ref{fig:strategies1}.)
{\small\begin{mathpar} %
    \inferrule %
    { \id(B) = B } %
    {\rb(\lambda x.B) = \lambda x.B} %
    \and %
    \inferrule %
    {\rb(M) = M'\and (\rb\circ\bn)(N) = N'} %
    {\rb(MN) = M'N'}
  \end{mathpar}}%
How much evaluation can be moved between eval and readback to obtain
equivalent eval-readback evaluators depends on the equivalence of
eval-readback and eval-apply evaluators, because eval is naturally defined in
eval-apply style as a uniform evaluator and the question relates to how much
evaluation can be moved between the uniform subsidiary and the hybrid.

\section{Structuring the pure lambda calculus's strategy space}
\label{sec:regimentation}

\subsection{Evaluation templates and generic evaluators}
\label{sec:generic-reducer}
The four criteria of weakness, strictness, headness, and uniform/hybrid are
manifested in the natural semantics by the presence or absence of a formula
for evaluation. For instance, strictness is manifested in the \textsc{con}
rule by the presence of a formula for the evaluation of the operand
$N$. Non-strictness is manifested by the absence of the formula. Both formulas
can be generalised into a single formula $\arone(N) = N'$ where $\arone$ is a
parameter. For strict evaluation, $\arone$ is instantiated to a recursive call
or to a subsidiary call. For non-strict evaluation, $\arone$ is instantiated
to $\id$.

\begin{figure}[p]
  \small
  \begin{flushleft}
    \textbf{Eval-apply evaluation template:}
  \end{flushleft}
  \vspace{4pt}
  \begin{mathpar}
    \inferrule*[left=abs] %
    {\la(B) = B'}%
    {\ea(\lambda x.B) = \lambda x.B'} %
    \and %
    \inferrule*[left=con] %
    {\opone(M) = \lambda x.B \quad \arone(N) = N' \quad \ea(\cas{N'}{x}{B}) = B'} %
    {\ea(MN) = B'} %
    \and %
    \inferrule*[left=neu] %
    {\opone(M) = M' \quad M' \not\equiv \lambda x.B \quad \optwo(M') = M''
      \quad \artwo(N) = N'} %
    {\ea(MN) = M''N'} %
  \end{mathpar}
  \begin{flushleft}
      \textbf{Eval-apply instantiation table:}
  \end{flushleft}
  \vspace{4pt}
  \begin{displaymath}
  \begin{array}{lc|ccccc}
    \hline
    \text{Name} & \text{Eval-apply}~(\ea) & \la & \opone & \arone & \optwo &
      \artwo \\
    \hline
    \text{Call-by-value}     & \bv     & \id & \bv    & \bv    & \id    & \bv \\
    \text{Call-by-name}      & \bn     & \id & \bn    & \id    & \id    & \id \\
    \text{Applicative order} & \ao     & \ao & \ao    & \ao    & \id    & \ao \\
    \text{Normal order}      & \no     & \no & \bn    & \id    & \no    & \no \\
    \text{Head reduction}    & \hr     & \hr & \bn    & \id    & \hr    & \id \\
    \text{Head spine}        & \he     & \he & \he    & \id    & \id    & \id  \\
    \text{Strict normalisation} & \sn     & \sn & \bv    & \bv    & \sn    & \sn \\
    \text{Hybrid normal order}  & \hn     & \hn & \he    & \id    & \hn    & \hn \\
    \text{Hybrid applicative order} & \ha & \ha & \bv    & \ha    & \ha    & \ha \\
    \text{Ahead machine}     & \am     & \am & \bv    & \bv    & \am    & \bv \\
    \text{Head applicative order} & \ho & \ho & \ho    & \ho    & \id    & \id \\
    \text{Spine applicative order} & \so & \so & \ho    & \so    & \so    & \so \\
    \text{Balanced spine applicative order} & \bs & \bs & \ho & \ho & \bs & \bs \\
    \hline
  \end{array}\label{pag:eval-apply-instantiation}
  \end{displaymath}
  \vspace{0.3pt}
  \begin{flushleft}
    \textbf{Eval-apply generic evaluator:}
  \end{flushleft}
\begin{code}
        type Red = forall m. Monad m => Term -> m Term
        gen_eval_apply :: Red -> Red -> Red -> Red -> Red -> Red
        gen_eval_apply \R{la} op1 \E{ar1} op2 \B{ar2} t = case t of
          v@(Var _) -> return v
          (Lam x b) -> do b' <- \R{la} b
                          return (Lam x b')
          (App m n) -> do m' <- op1 m
                          case m' of (Lam x b) -> do n' <- \E{ar1} n
                                                     this (subst n' x b)
                                     _         -> do m'' <- op2 m'
                                                     n'  <- \B{ar2} n
                                                     return (App m'' n')
          where this = gen_eval_apply \R{la} op1 \E{ar1} op2 \B{ar2}

        id :: Red
        id =  return

        bal_hybrid_sn :: Red
        bal_hybrid_sn =  sn
          where bv :: Red
                bv =  gen_eval_apply \R{id} bv \E{bv} id \B{bv}
                sn :: Red
                sn =  gen_eval_apply \R{sn} bv \E{bv} sn \B{sn}
\end{code}
\caption{Eval-apply evaluation template, instantiation table, and generic
  evaluator.}
  \label{fig:generic-eval-apply-3}
\end{figure}

\subsubsection{Eval-apply evaluation template and generic evaluator}
\label{sec:eval-apply-template}
The eval-apply evaluation template shown at the top of
Figure~\ref{fig:generic-eval-apply-3}
(page~\pageref{fig:generic-eval-apply-3}) generalises every formula with a
parameter. The template defines an eval-apply evaluator $\ea$ with evaluator
parameters $\la$, $\opone$, $\arone$, $\optwo$ and $\artwo$. The coloured
parameters respectively determine the evaluator's \R{weakness},
\E{strictness}, and \B{headness}. (We explain the need for colours in a
moment.)  The use of parameters also permits the instantiation of subsidiaries
and the control of staged evaluation in rule \textsc{neu} ($\opone$ followed
by $\optwo$). Concretely, in the \textsc{abs} rule, the $\la$ parameter
determines the evaluation of ($\la$mbda) abstractions. In the leftmost formula
of \textsc{con} and \textsc{neu}, the \emph{shared} $\opone$ parameter
determines the evaluation of operators in applications. In the \textsc{con}
rule, the $\arone$ parameter determines the evaluation of operands
(\E{ar}guments) of redexes. In the \textsc{neu} rule, the $\optwo$ parameter
determines the further evaluation of operators of neutrals. The $\artwo$
parameter determines the evaluation of operands of neutrals. The values
provided for all these parameters, and their interaction, determine the type
of evaluation and form of final result ($\WHNF$, $\WNF$, etc.). The meaning of
the numbering of parameters differ. The $\opone$ and $\optwo$ parameters are
stages (\textsc{neu}) whereas $\arone$ and $\artwo$ are independent parameters
for respectively evaluating operands in a redex (\textsc{con}) and in a neutral
(\textsc{neu}). Thus, we use colours for the important parameters $\la$,
$\arone$, and $\artwo$. The colours are also a visual aid and will be most
useful from Section~\ref{sec:sub:cube} onwards.

Recall from Section~\ref{sec:summary-survey} that the staged evaluation of the
operator in rule \textsc{neu} with $\opone$ and $\optwo$ is not a variability
point because it is subsumed by the uniform/hybrid configuration. Only hybrid
strategies use $\optwo$ in \textsc{neu} because they use the subsidiary on
$\opone$ in \textsc{neu} and \textsc{con}. Consequently, there is no $\optwo$
parameter in rule \textsc{con}. It would be nonsensical because the operator
abstraction $\lambda x.B$ would be further evaluated using $\optwo$ but the
redex $(\lambda x.B)N'$ would be contracted with the abstraction.

The instantiation table in the middle of Figure~\ref{fig:generic-eval-apply-3}
shows that the evaluation template expresses all the eval-apply evaluators in
Figures~\ref{fig:strategies1} and~\ref{fig:strategies2}. More evaluators can
be obtained by playing with the parameters. This is how we originally obtained
head applicative order ($\ho$), spine applicative order ($\so$), and balanced
spine applicative order ($\bs$). We will obtain more novel evaluators in this
fashion throughout this Section~\ref{sec:regimentation}.

The code at the bottom of Figure~\ref{fig:generic-eval-apply-3} shows a
generic (higher-order, parametric) eval-apply evaluator
\texttt{gen\_eval\_apply} that implements the eval-apply evaluation
template. It takes five plain evaluator parameters and delivers a plain
evaluator. The type \texttt{Red} (from \texttt{Red}ucer) is the type of a
plain evaluator (Section~\ref{sec:eval-apply}). The local identifier
\texttt{this} is an abbreviation to get a fixed-point definition. Plain
evaluators are obtained as fixed-points of the generic evaluator. The figure
shows the example for strict normalisation, which we call
\texttt{bal\_hybrid\_sn} in order to define \texttt{bv} and \texttt{sn} as
fixed points of \texttt{gen\_eval\_apply}.

\subsubsection{Eval-readback evaluation template and generic readback}
\label{sec:eval-readback-template}
An evaluation template for eval-readback evaluators can be similarly provided.
Actually, only a template for readback is needed because eval is naturally
expressed in eval-apply style. Rarely will it be expressed in eval-readback
style, and rarely still with the latter's eval again in eval-readback style,
etc.  Eventually, the last eval is written in eval-apply style.

Considering the readback rules and provisos in
Sections~\ref{sec:eval-readback} and~\ref{sec:normal-order}, and assuming the
abbreviation of compositions introduced in Section~\ref{sec:eval-readback},
there are two possible parameters for a readback evaluator $\rb$ in its
\textsc{abs} and \textsc{neu} rules, namely $\la$ and $\artwo$. Readback has
no \textsc{con} rule (no $\opone$ and $\arone$ parameters) because the
intermediate results delivered by eval have no redexes in outermost
position. Also, according to proviso \ref{prov:ER-optwo}, the $\optwo$
parameter must be a recursive call to readback to distribute it down nested
applications.

\begin{figure}
  \small
  \begin{flushleft}
    \textbf{Readback evaluation template:}
  \end{flushleft}
  \begin{mathpar}
    \\
    \inferrule*[left=abs] %
    {\la(B) = B'} %
    {\rb(\lambda x.B) = \lambda x.B'} %
    \and %
    \inferrule*[left=neu] %
    {\rb(M) = M' \quad \artwo(N) = N'} %
    {\rb(MN) = M'N'}%
    \\
  \end{mathpar}
  \begin{flushleft}
      \textbf{Eval-readback instantiation table:}
  \end{flushleft}
  \begin{displaymath}
  \begin{array}{lcc|ccc}
    \\
    \hline
    \text{Name} & \text{Eval} & \text{Readback}~(\rb) & \la & \artwo \\
    \hline
    \mathtt{byValue}     & \bv & \bo   & \bo\circ\bv  & \bo \\
    \mathtt{byName}      & \he & \ar   & \ar          & \ar\circ\he \\
    \text{Normal order}  & \bn & \rn   & \rn\circ\bn  & \rn\circ\bn \\
    \text{Ahead machine} & \bv & \bo_2 & \bo_2\circ\bv & \id \\
    \mathit{unnamed}     & \bv & \bo_3 & \bv          & \bo_3 \\
    \hline
  \end{array}
\end{displaymath}
  \vspace{0.3pt}
  \begin{flushleft}
    \textbf{Readback generic evaluator:}
  \end{flushleft}
\begin{code}
        type Red = forall m. Monad m => Term -> m Term
        gen_readback :: Red -> Red -> Red
        gen_readback \R{la} \B{ar2} t = case t of
          v@(Var _) -> return v
          (Lam s b) -> do b' <- \R{la} b
                          return (Lam s b')
          (App m n) -> do m' <- this m
                          n' <- \B{ar2} n
                          return (App m' n')
          where this = gen_readback \R{la} \B{ar2}

        eval_readback_byValue :: Red
        eval_readback_byValue =  bodies <=< bv
          where bv     :: Red
                bv     =  gen_eval_apply \R{id} bv \E{bv} id \B{bv}
                bodies :: Red
                bodies =  gen_readback \R{(bodies <=< bv)} \B{bodies}
\end{code}
\caption{Readback evaluation template, instantiation table, and generic
  readback evaluator.}
  \label{fig:generic-readback-3}
\end{figure}

Figure~\ref{fig:generic-readback-3} (page~\pageref{fig:generic-readback-3})
shows the readback evaluation template, an instantiation table that expresses
all the readback evaluators discussed in Section~\ref{sec:eval-readback}, and
a generic (higher-order, parametric) readback evaluator that implements the
readback evaluation template. Plain eval-readback evaluators are obtained by
monadic composition of an eval defined using \texttt{gen\_eval\_apply} and a
readback defined using \texttt{gen\_readback}. The figure shows an example for
\texttt{byValue}, which we call \texttt{eval\_readback\_byValue} in order to
define \texttt{bv} and \texttt{bodies} as fixed points. We will define more
eval-readback evaluators in Section~\ref{sec:regimentation-eval-readback} by
playing with the parameters.

\subsection{Uniform/Hybrid as structuring criteria}
\label{sec:uniform-vs-hybrid}
The interaction of weakness, strictness and headness affects the amount of
evaluation. Hence, these criteria do not provide an orthogonal classification
of the strategy space. An orthogonal classification is provided by the
uniform/hybrid criterion that splits the strategy space in two and is
independent of the other three criteria.

As discussed in contribution \ref{C:hybrid}, we must distinguish
uniform/hybrid \emph{strategy} from uniform/hybrid \emph{style} of
definitional device. It may be possible to write a uniform-style device
(without subsidiary devices) for what is actually a hybrid strategy. We
provide examples in Section~\ref{sec:intuition-uniform-hybrid}. It may be also
possible to write a hybrid-style device (with spurious subsidiary devices) for
a uniform strategy.  We will show an example in
Section~\ref{sec:regimentation-hybrid} (page~\pageref{pag:IIS-hybrid}).

Hybrid-style definitional devices have appeared and continue to appear in the
literature, either under no denomination (\eg\
\cite{Rea89,Mac90,Pau96,Pie02}), or under names such as `levelled' or
`of-level-$n$' (\eg\ \cite{PR99,RP04}), `stratified' (\eg\ \cite{AP12,CG14}),
or `layered' (see \cite{GPN14} for references). There are many hybrid
evaluators and strategies (see Figure~\ref{fig:summary-end-figure-eval-apply}
on page~\pageref{fig:summary-end-figure-eval-apply}). The notion has
theoretical interest, to obtain properties such as completeness, spineness,
etc., and has proven useful in gradual typing \cite{GNS14}, solvability
\cite{RP04,CG14,GPN16,AG22}, machine-checked program derivations of
operational semantics devices \cite{BCZ17,BC19,BBCD20}, and reduction theories
for effect handlers \cite{SPB23}. The concept of hybrid style is also
intuitively connected to that of a staged eval-readback evaluator
\cite{Pau96,GL02}.  Making the connection precise is partly the motivation for
this paper.

In the introduction we gave an intuitive definition of uniform/hybrid strategy
in terms of in/dependence on less-reducing subsidiary strategies. In
\cite{GPN14} we formalised the notion of uniform/hybrid strategy (which we
redundantly called in that paper `uniform/hybrid strategy in its nature')
using small-step context-based reduction semantics \cite{Fel87,FH92,FFF09}.
This operational semantics style works with the notion of a one-step strategy
(Appendix~\ref{app:prelim:lambda}), namely, a total function (a subset of the
reduction relation) that takes an input term and locates and contracts one
redex. An evaluation sequence is obtained by iteration. A strategy can be
defined extensionally by its set of so-called `contexts' which are terms with
one hole where the latter marks the redex position. The extensional definition
is captured intensionally by a context-based reduction semantics which
basically defines the set of contexts and permissible redexes using
context-free grammars. Since we wish to improve on \cite{GPN14}, in this
section we temporarily leave the big-step style and switch to that small-step
definitional device.  We provide a quick overview of it in
Section~\ref{sec:context-based-red-semantics}. We provide the necessary
intuitions for the formal definition of uniform/hybrid strategy in
Section~\ref{sec:intuition-uniform-hybrid} using illustrative examples. We
arrive at the formal definition in Section~\ref{sec:def-uniform-hybrid}.

\subsubsection{Small-step context-based reduction semantics}
\label{sec:context-based-red-semantics}
A \emph{context} is a term with one hole, the latter written $\hole$. For
example, $\lambda x.\hole x$ is a context. The set of contexts $\NT{C}$ is
defined by adding a hole to terms: $\NT{C} ::= \hole\ |\ \lambda \Var.\NT{C}\
|\ \NT{C}\,\Lambda\ |\ \Lambda\,\NT{C}$.\footnote{Recall that throughout the
  paper we overload non-terminal symbols to denote the sets generated by them,
  \eg~$\NT{C}$ also denotes the set generated by that non-terminal.}  We use
$\ctx_1$, $\ctx_2$, etc., for contexts in $\NT{C}$. Given a context $\ctx_1$
we use the function application notation $\ctx_1(R)$ and $\ctx_1(\ctx_2)$
respectively for the operation of replacing the hole by a redex $R$ and by
another context $\ctx_2$, irrespectively of variable captures. The first
operation delivers a term and the second delivers a context. Because the
function application notation is uncurried, we also use composition
$\ctx_1\circ\ctx_2$. For example, let $\ctx_1 = \lambda x.\hole x$, $\ctx_2 =
y\hole$, and $R$ a redex, then $\ctx_1(R) = \lambda x.R x$, $\ctx_2(R) = yR$,
$\ctx_1(\ctx_2) = \lambda x.y \hole x$, $\ctx_2(\ctx_1) = y(\lambda x. \hole
x)$, $(\ctx_1(\ctx_2))(R) = (\ctx_1\circ \ctx_2)(R) = \lambda x.yRx$, and
$(\ctx_2(\ctx_1))(R) = (\ctx_2\circ \ctx_1)(R) = y(\lambda x. R x)$.

A strategy $\st$ is defined by a set of contexts $\NT{ST}\in\NT{C}$ and a set
of permissive redexes within holes $\NT{R}$, both defined by context-free (in
our case, EBNF) grammars with non-terminal axioms $\NT{ST}$ and $\NT{R}$, such
that for every term $M$ there is at most one context $\ctx_1 \in \NT{ST}$
(derivable from non-terminal $\NT{ST}$) such that $M = \ctx_1(R)$ and
$R\in\NT{R}$ is the next redex to contract. The redex is contracted and the
process is repeated to produce an evaluation sequence that stops when there is
no derivable context.  Spurious strategies like $\id$ and strategies that have
no standalone hole context ($\ctx_1 = \hole$) for the outermost redex are
discarded.

To illustrate, take the following sets of contexts and of redexes for the
applicative order strategy:
\begin{displaymath}
  \begin{array}{lll}
    \NT{AO} & ::= & \hole\; |\; \lambda \Var.\NT{AO}\; |\; \NT{AO}{} \;
      \Lambda\; |\; \NF \; \NT{AO} \\
    \NT{R} & ::= & (\lambda \Var.\NF)\NF
  \end{array}
\end{displaymath}
The set of contexts is generated by $\NT{AO}$, and the set of permissible
redexes is generated by $\NT{R}$. Reading $\NT{AO}$'s grammar from left to
right, a redex can occur in outermost position (standalone hole), under
lambda, in the operator, or in the operand if the operator is in
$\NF$. Contraction takes place iff the redex has the body and the operand in
$\NF$. Given a term $M$, there is at most one context $\ctx_1$ derivable from
non-terminal $\NT{AO}$ such that $M = \ctx_1(R)$ and $R\in\NT{R}$. For
example, given $(\lambda x.(\lambda y.z)x)N$ with arbitrary $N$, the
uniquely-derivable context is $(\lambda x.\hole)N$ with $(\lambda y.z)x \in
\NT{R}$. Contraction takes place delivering the contractum $(\lambda x.z)N$.
The process is repeated to produce an evaluation sequence that stops when
there is no derivable context.

\subsubsection{Intuition of uniform/hybrid strategy}
\label{sec:intuition-uniform-hybrid}
To provide the intuition we use the normal order strategy because it is the
most well-known strategy of the pure lambda calculus. We reuse the context
grammars presented in \cite{GPN14} without redundant productions (\eg\
redundant holes as discussed in \cite{GPN14}). The details of the grammars are
unimportant. The point is to show \emph{alternative} grammars (styles) for the
\emph{same} hybrid strategy and illustrate from those examples the notions of
dependency and inclusion that makes `hybrid' a style-independent property. We
start with a hybrid-style grammar for normal order which explicitly relies on
a subsidiary call-by-name grammar.  We then show two alternative hybrid-style
grammars which `hide' the call-by-name grammar and where the second grammar
has a mutual dependency. We finally show a uniform-style grammar for normal
order.

First, we start with a hybrid-style grammar of normal order contexts $\NT{NO}$
that uses subsidiary call-by-name contexts $\NT{BN}$ explicitly. The $\NT{NE}$
contexts are for reducing neutrals to $\NF$. The set of redexes is $\NT{R}$.
We omit it in subsequent grammars because it remains the same throughout the
discussion. In \cite{GPN14} we showed that an implementation of the
context-based reduction semantics with this grammar derives by program
transformation to the normal order ($\no$) evaluator of
Figure~\ref{fig:strategies1}.
\begin{displaymath}
  \begin{array}{lr}
    \begin{array}{lll}
      \NT{BN}  & ::= & \hole\ |\ \NT{BN} \ \Lambda \\
      \NT{NO}  & ::= & \NT{BN}\ |\ \lambda \Var.\NT{NO}\ |\ \NT{NE} \\
      \NT{NE}  & ::= & \NT{NE} \ \Lambda\ |\ \Var \ \{\NF\}^* \ \NT{NO} \\

      \NT{R}   & ::= & (\lambda \Var.\Lambda)\Lambda
    \end{array}
  \end{array}
\end{displaymath}
Next, we show a hybrid-style grammar that hides $\NT{BN}$. The subsidiary is
now $\NT{NW}$ which contains the hole, the call-by-name contexts now
represented by $\NT{NW}\,\Lambda$, and the contexts that evaluate neutrals to
$\WNF$.
\begin{displaymath}
  \begin{array}{lll}
    \NT{NW}  & ::= & \hole\ |\ \NT{NW}\ \Lambda\
                     |\ \Var\ \{\NF\}^*\ \NT{NW}\ \{\Lambda\}^* \\
    \NT{NO}  & ::= & \NT{NW}\ |\ \lambda \Var.\NT{NO}\
                     |\ \Var\ \{\NF\}^*\ \NT{NO}\ \{\Lambda\}^*
  \end{array}
\end{displaymath}
Next, we show a hybrid-style grammar which has a \emph{mutual dependency} on a
different non-terminal $\NT{NN}$ that has the hole, the call-by-name contexts,
and the contexts that evaluate neutrals to $\NF$. This is the `preponed'
grammar in \cite[p.\,195]{GPN14}.
\begin{displaymath}
\begin{array}{lll}
    \NT{NN}  & ::= & \hole\ |\ \NT{NN} \ \Lambda\
                   |\ \Var\ \{\NF\}^*\ \NT{NO}\ \{\Lambda\}^* \\
    \NT{NO}  & ::= & \NT{NN}\ |\ \lambda \Var.\NT{NO}
  \end{array}
\end{displaymath}
Finally, we show a uniform-style grammar, it has only one
non-terminal.\footnote{The one non-terminal grammar cannot be written in
  context-free grammars like BNF that lack the sequence notation
  $\{\NT{X}\}^*$. However, the extra productions would only be for sequences,
  \ie\ $\NT{A}\, ::=\, \epsilon\, |\, \NT{X}\, \NT{A}$. A small-step device
  related to the one non-terminal grammar is the uniform-style structural
  operational semantics of normal order given in \cite[p.\,502]{Pie02}. }
\begin{displaymath}
  \begin{array}{lll}
  \NT{NO}  & ::= & \hole\ \{\Lambda\}^*\ |\ \lambda\Var.\NT{NO}\ |\
    \Var\ \{\NF\}^*\ \NT{NO}\ \{\Lambda\}^*
  \end{array}
\end{displaymath}
All the grammars generate the same set of normal order contexts. All
$\NT{BN}$, $\NT{NW}$, and $\NT{NN}$ contexts are in $\NT{NO}$. It is the
inclusion (\ie\ composition) of contexts what determines a dependency, not the
expression of the inclusion using explicit non-terminals, which is a matter of
style. The dependency is inextricable when contexts are not closed under
inclusion.  Let us rephrase the argument in \cite[Sec.\,4]{GPN14}.  Take the
call-by-name context $\hole N$. If we place in $\hole N$ another call-by-name
context, say $\hole M$, then we get a call-by-name context $\hole M N$. This
is the case for any call-by-name context. Call-by-name is closed under context
inclusion.  If we place in $\hole N$ a normal order context $\lambda x.\hole$
then we get $(\lambda x.\hole)N$ which is not a call-by-name context.
Call-by-name does not include (depend on) normal order.

In contrast, take the context $\lambda x.\hole N$ which is a normal order
context. If we replace the hole with the normal order context $\lambda
y.\hole$ then we get $\lambda x.(\lambda y.\hole)N$ which is not a normal
order context. Normal order is not closed under context inclusion. If we replace
the hole with any call-by-name context, say $\hole M$, then we get a normal
order context $\lambda x.\hole M N$. Normal order depends on call-by-name.

$\NT{NO}$ defines a hybrid strategy that depends on what can be identified (or
as we say in \cite{GPN14}, `unearthed') as call-by-name contexts. It is made
to explicitly depend on $\NT{BN}$ in the first definiens, on $\NT{NW}$ in the
second, on $\NT{NN}$ (mutually) in the third, and with no explicit dependency
(uniform-style) in the last.

\subsubsection{Definition of uniform/hybrid strategy}
\label{sec:def-uniform-hybrid}
The following definitions formalise the concepts of uniform and hybrid
strategy using closeness under context inclusion (or context composition) and
dependency on subsidiary. There may be many possible choices of subsidiaries
each of which determines a particular definiens.
\begin{defi}
  \label{def:nature}
  \begin{enumerate}[(i)]
  \item A strategy $\NT{ST}$ is uniform iff for any $\ctx_1,\ctx_2 \in \NT{C}$
    it is the case that $\ctx_1(\ctx_2) \in \NT{ST}$ iff $\ctx_1\in\NT{ST}$
    and $\ctx_2\in\NT{ST}$.
  \item A strategy is hybrid iff it is not uniform.
  \end{enumerate}
\end{defi}
\begin{defi}
  \label{def:dependency}
  A strategy $\NT{HY}$ \emph{depends on} a different subsidiary strategy
  $\NT{SU}$ iff there exists a context $\ctx_1\in\NT{HY}$ such that for any
  context $\ctx_2\in\NT{C}$, it is the case that $\ctx_1(\ctx_2) \in \NT{HY}$
  iff $\ctx_2\in\NT{SU}$.
\end{defi}
These two definitions supersede the definitions in \cite[Sec.\,4]{GPN14}.
Notice that Definition~\ref{def:nature}(ii) implies
Definition~\ref{def:dependency}.
\begin{prop}
  Definition~\ref{def:nature}(i) and Definition~\ref{def:dependency} cannot
  be the case simultaneously.
\end{prop}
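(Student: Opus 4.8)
The plan is to argue by contradiction. Suppose some strategy, call it $\NT{ST}$, satisfies both conditions at once: it is uniform in the sense of Definition~\ref{def:nature}(i) and, taking $\NT{HY} := \NT{ST}$ in Definition~\ref{def:dependency}, it depends on a \emph{different} subsidiary strategy $\NT{SU}$. The key observation is that both definitions are biconditionals whose left-hand side is the same membership statement $\ctx_1(\ctx_2)\in\NT{ST}$, so the proof reduces to confronting the two biconditionals on the very witness context that dependency supplies.

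First I would extract from Definition~\ref{def:dependency} the witness context $\ctx_1\in\NT{ST}$ for which, for every $\ctx_2\in\NT{C}$, one has $\ctx_1(\ctx_2)\in\NT{ST}$ iff $\ctx_2\in\NT{SU}$. Next I would instantiate the uniformity biconditional of Definition~\ref{def:nature}(i) at this same $\ctx_1$: for every $\ctx_2\in\NT{C}$, $\ctx_1(\ctx_2)\in\NT{ST}$ iff ($\ctx_1\in\NT{ST}$ and $\ctx_2\in\NT{ST}$). Since $\ctx_1\in\NT{ST}$ already holds, the conjunct $\ctx_1\in\NT{ST}$ is true and the right-hand side collapses to $\ctx_2\in\NT{ST}$, yielding $\ctx_1(\ctx_2)\in\NT{ST}$ iff $\ctx_2\in\NT{ST}$ for every $\ctx_2$.

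Chaining these two equivalences through their common left-hand side $\ctx_1(\ctx_2)\in\NT{ST}$ then gives $\ctx_2\in\NT{ST}$ iff $\ctx_2\in\NT{SU}$ for every context $\ctx_2\in\NT{C}$. Because strategies are identified extensionally with their sets of contexts (Section~\ref{sec:context-based-red-semantics}), this pointwise equivalence forces $\NT{ST}=\NT{SU}$, contradicting the requirement in Definition~\ref{def:dependency} that the subsidiary $\NT{SU}$ be \emph{different} from the depending strategy. Hence Definition~\ref{def:nature}(i) and Definition~\ref{def:dependency} cannot hold simultaneously.

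The argument is short, and the only point needing care is the licence to pass from the set-level equivalence $\ctx_2\in\NT{ST}\Leftrightarrow\ctx_2\in\NT{SU}$ to genuine strategy equality $\NT{ST}=\NT{SU}$; this is exactly the extensional reading of strategies as context sets under which ``different subsidiary'' means a distinct set. No case analysis on the grammars or on the shape of $\ctx_1$ is required, so I anticipate no real obstacle beyond keeping the two biconditionals oriented correctly.
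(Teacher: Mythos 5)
Your proof is correct and follows essentially the same route as the paper's: both derive that uniformity collapses the dependency biconditional into $\ctx_2\in\NT{ST}$ iff $\ctx_2\in\NT{SU}$ for all contexts, forcing $\NT{SU}=\NT{ST}$ and contradicting the requirement in Definition~\ref{def:dependency} that the subsidiary be different. Your write-up is merely a more explicit, step-by-step rendering of the paper's terser closure argument.
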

\begin{proof}
  Definition~\ref{def:nature}(i) requires that $\NT{ST}$ contexts be closed
  under context inclusion.  Definition~\ref{def:dependency} requires that
  there exist $\NT{SU}$ contexts included in $\NT{ST}$. Because all $\NT{ST}$
  contexts are closed under inclusion then all $\NT{SU}$ contexts are also
  closed under inclusion and $\NT{SU} = \NT{ST}$.
\end{proof}
As indicated in contribution \ref{C:hybrid}, a hybrid strategy uses the
subsidiary to evaluate the operator $M$ in applications $MN$ to obtain a redex
$(\lambda x.B)N$. A balanced hybrid is either a non-strict hybrid (does not
evaluate $N$) or is a strict hybrid that uses the subsidiary to evaluate $N$.
All the balanced hybrid evaluators discussed in the survey do exactly this. In
contrast, the unbalanced hybrid evaluators use themselves recursively on $N$.
\begin{defi}
  \label{def:balanced}
  A hybrid strategy is \emph{balanced} when it has the same set of permissible
  redexes than its subsidiary.
\end{defi}

\subsection{Structuring the uniform space}
\label{sec:sub:cube}
We structure the strategy space starting with uniform strategies because they
are simpler. The following provisos gleaned from the eval-apply instantiation
table in Figure~\ref{fig:generic-eval-apply-3} characterise uniform-style
eval-apply evaluators (`uniform evaluators', for short):
\begin{enumerate}[label=(\textsc{u}$_{\arabic*}$)]
\item \label{prov:U-opone-optwo} Uniform evaluators call themselves
  recursively on the shared $\opone$ parameter of \textsc{con} and
  \textsc{neu}, and thus call the identity in \textsc{neu}'s $\optwo$.

\item \label{prov:U-la-arone-artwo} Uniform evaluators differ only on the
  \R{weakness} ($\la$), \E{strictness} ($\arone$), and \B{headness} ($\artwo$)
  parameters, where they either call themselves recursively or call the
  identity.
\end{enumerate}
\begin{prop}
  \label{prop:uniform-evaluator-strategy}
  An evaluator that satisfies the uniform provisos defines a uniform strategy.
\end{prop}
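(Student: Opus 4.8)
The plan is to exhibit the set of small-step evaluation contexts $\NT{ST}$ induced by the uniform evaluator, read off directly from its \textsc{abs}, \textsc{con}, and \textsc{neu} rules, and then to verify the closure condition of Definition~\ref{def:nature}(i) for this set. First I would extract the context grammar by tracing how the big-step evaluator $\ea$ locates the first redex of a term: descending under a binder when $\la=\ea$, always descending into the operator of an application because $\opone=\ea$, and descending into an operand once the operator is a value when $\arone=\ea$ (redex operand) or $\artwo=\ea$ (neutral operand). This yields a single-non-terminal grammar of the shape
\begin{displaymath}
  \NT{ST} ::= \hole \mid \lambda\Var.\NT{ST} \mid \NT{ST}\,\Lambda
              \mid \Val_{\lambda}\,\NT{ST} \mid \Val_{\Neu}\,\NT{ST}
\end{displaymath}
where $\lambda\Var.\NT{ST}$ is present exactly when the strategy is non-weak, $\Val_{\lambda}\,\NT{ST}$ exactly when strict, and $\Val_{\Neu}\,\NT{ST}$ exactly when non-head, with $\Val_{\lambda}$ and $\Val_{\Neu}$ the abstraction-valued and neutral-valued subsets of the strategy's value set $\Val$ (instantiating to $\WNF$, $\WHNF$, $\NF$, etc.). The decisive feature, guaranteed by provisos~\ref{prov:U-opone-optwo} ($\opone=\ea$, $\optwo=\id$) and~\ref{prov:U-la-arone-artwo} (each of $\la,\arone,\artwo$ is $\ea$ or $\id$), is that in \emph{every} production the hole-carrying occurrence is the same non-terminal $\NT{ST}$, never a subsidiary.

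With the grammar in hand I would prove both implications of Definition~\ref{def:nature}(i) by induction on the outer context $\ctx_1$, casing on the production that generates it. For the forward direction ($\ctx_1,\ctx_2\in\NT{ST}$ imply $\ctx_1(\ctx_2)\in\NT{ST}$), each case replaces the recursive $\NT{ST}$ occurrence by $\ctx_2\in\NT{ST}$ and reassembles via the same production: for instance $\ctx_1=\ctx_1'\,N$ gives $\ctx_1(\ctx_2)=\ctx_1'(\ctx_2)\,N$, which lies in $\NT{ST}$ by the induction hypothesis and the production $\NT{ST}\,\Lambda$, while the frozen operand $N$ (arbitrary) and any frozen operator value $\Val_{\lambda}$ or $\Val_{\Neu}$ are untouched and remain, respectively, a term and a value. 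For the reverse direction I would use that the \emph{position of the hole} in $\ctx_1(\ctx_2)$ determines uniquely which production applies, peel it off, and invoke the induction hypothesis; the base case $\ctx_1=\hole$ yields $\ctx_2\in\NT{ST}$ directly.

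I expect the reverse direction, specifically the operator-descent case, to be where uniformity is genuinely used and hence the conceptual crux. When $\ctx_1(\ctx_2)=\ctx_1'(\ctx_2)\,N$ with the hole in the operator, the only production admitting a hole there is $\NT{ST}\,\Lambda$, forcing $\ctx_1'(\ctx_2)\in\NT{ST}$ and, by the induction hypothesis, $\ctx_1'\in\NT{ST}$ and $\ctx_2\in\NT{ST}$. This works precisely because $\opone=\ea$: the evaluator descends into operators using the \emph{full} strategy. Were $\opone$ a subsidiary (the hybrid case), the operator would be generated by the subsidiary's non-terminal and one could not recover $\ctx_1'(\ctx_2)\in\NT{ST}$, so closure would fail — which is exactly how Definition~\ref{def:nature} separates hybrid strategies, cf.\ that $(\lambda y.\hole)\,N$ is not a normal-order ($\no$) context in $\NT{NO}$.

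The main \emph{bookkeeping} obstacle, rather than a conceptual one, is the first step: justifying that the contexts of the one-step strategy induced by the big-step evaluator are exactly those generated by the grammar above, i.e.\ bridging the big-step and small-step presentations. This is routine given the syntax-directed, deterministic descent of $\ea$ and the correspondence already used in~\cite{GPN14}, but it must be stated carefully (including determinism, so that the parse of each context is uniquely determined by its hole position) so that the closure induction rests on a faithful grammar. I would also remark in passing that the same structure establishes the negative direction for hybrid evaluators, thereby situating the present proposition against the hybrid case treated later.
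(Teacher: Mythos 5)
Your proposal is correct and takes essentially the same approach as the paper's proof: both rest on the observation that, because a uniform evaluator only ever calls itself (or the identity) in its premises, the induced context set is generated by a single non-terminal and is therefore closed under context inclusion in the sense of Definition~\ref{def:nature}(i). You merely make explicit, via the extracted grammar and the two-directional induction on $\ctx_1$, what the paper's proof states in one informal paragraph, and the big-step-to-context bookkeeping you flag is glossed over to the same degree in the paper's own argument.
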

\begin{proof}
  In all the natural semantics rules, each premise at most calls the evaluator
  recursively so that every term is evaluated by the same evaluator. There can
  be no inclusion of two contexts other than those generated by the evaluator.
\end{proof}
Recall from Section~\ref{sec:eval-apply-natural-sem} and
Figure~\ref{fig:example-nat-sem} that the set of contexts generated by the
evaluator can be obtained from the \textsc{con} nodes by replacing the
contracta with holes in the evaluation sequences obtained from derivation
trees.

A uniform evaluator inexorably defines a uniform strategy. In contrast, as
shown in Section~\ref{sec:intuition-uniform-hybrid}, a uniform-style
small-step context-based reduction semantics can define a hybrid strategy.

To continue the analysis of uniform strategies without explicit reference to
the template and instantiation table of Figure~\ref{fig:generic-eval-apply-3}
(page~\pageref{fig:generic-eval-apply-3}), we introduce a convenient
systematic notation for encoding uniform evaluators/strategies as a triple of
letters where each letter can either be $\stgy{S}$ for `self' (recursive) or
$\stgy{I}$ for `identity'. The following table shows the encoding for the
uniform strategies in the survey.
\begin{displaymath}
  \begin{array}{lcccc}
    \hline
    \text{Name} & \la & \arone & \artwo & \text{Systematic notation} \\
    \hline
    \text{Call-by-value}~(\bv) & \id & \bv & \bv & \stgy{ISS} \\
    \text{Call-by-name}~(\bn)  & \id & \id & \id & \stgy{III} \\
    \text{Applicative order}~(\ao) & \ao & \ao & \ao & \stgy{SSS} \\
    \text{Head spine}~(\he) & \he & \id & \id & \stgy{SII}  \\
    \text{Head applicative order}~(\ho) & \ho & \ho & \id & \stgy{SSI} \\
    \IIS & \id & \id & \IIS & \IIS \\
    \SIS & \SIS & \id & \SIS & \SIS \\
    \ISI & \id & \ISI & \id & \ISI \\
    \hline
  \end{array}
  \label{sec:sub:cube:table}
\end{displaymath}
The last four are novel, and the last three are unnamed. We name them using
their encoding. The first one is $\IIS$ (weak, non-strict, non-head) which
evaluates operands of neutrals that are not abstractions. The \texttt{ea}
evaluator of Section~\ref{sec:eval-readback:equiv} is exactly $\IIS$. This
strategy is interesting because it is complete for $\WNF$ in the pure lambda
calculus and can be used to evaluate general recursive functions as follows:
\begin{center}
  \begin{tabular}[t]{ccccc}
    \hline
    Strategy & Result & Combinator & Style & Thunking \\
    \hline
    $\stgy{IIS}$ & $\WNF$ & $\CH{Y}$ & direct & none \\
    \hline \\
  \end{tabular}
\end{center}
The remaining two strategies are $\SIS$ (non-weak, non-strict, non-head and
delivers $\NF$s) and $\ISI$ (weak, strict, head and delivers $\WHNF$s) which
are uninteresting by themselves but are useful for hybrid and eval-readback
evaluator definitions (Section~\ref{sec:conclusions}).

By considering $\stgy{I}=0$, $\stgy{S}=1$, and the obvious partial order on
boolean triples, from least reducing $\stgy{III}$ (weak, non-strict, head) to
most reducing $\stgy{SSS}$ (non-weak, strict, non-head), the uniform strategy
space is neatly structured by \emph{Gibbons's Beta Cube} lattice shown below
left.\footnote{As mentioned in the introduction, we name the cube after Jeremy
  Gibbons who suggested to us the use of booleans to construct a lattice.}
Non-strict strategies are on the front face and strict strategies are on the
rear face. The `more-reducing' relation is reflected on the inclusion relation
of final forms along the \R{weakness} and \B{headness} axes, shown on the
right. The arrows in the opposite direction indicate the inclusion.  For
instance, $\NF$s are valid $\HNF$s so the latter include the former.
\begin{center}
  \begin{tikzpicture}
    [ arrows=-stealth 
    , thick 
    , y = {(-3.85mm,-3.85mm)}, z = {(0cm,1cm)} 
    , xscale = 2.2, yscale = 2.2 
    ]
    \node (000) at (0,0,0)  {$\III$} ; 
    \node (001) at (0,0,1)  {$\IIS$} ; 
    \node (010) at (0,-1,0) {$\ISI$} ; 
    \node (011) at (0,-1,1) {$\ISS$} ; 
    \node (100) at (1,0,0)  {$\SII$} ; 
    \node (101) at (1,0,1)  {$\SIS$} ; 
    \node (110) at (1,-1,0) {$\SSI$} ; 
    \node (111) at (1,-1,1) {$\SSS$} ; 
    \draw [\myred]          (000) -- (100) ;
    \draw [dotted,\myred]   (010) -- (110) ;
    \draw [\myred]          (001) -- (101) ;
    \draw [\myred]          (011) -- (111) ;
    \draw [dotted,\mygreen] (000) -- (010) ;
    \draw [\mygreen]        (100) -- (110) ;
    \draw [\mygreen]        (101) -- (111) ;
    \draw [\mygreen]        (001) -- (011) ;
    \draw [\myblue]         (000) -- (001) ;
    \draw [\myblue]         (100) -- (101) ;
    \draw [\myblue]         (110) -- (111) ;
    \draw [dotted,\myblue]  (010) -- (011) ;
  \end{tikzpicture}\hspace{.1\linewidth}
  \begin{tikzpicture}
    [ arrows=-stealth 
    , thick 
    , y = {(-3.85mm,-3.85mm)}, z = {(0cm,1cm)} 
    , xscale = 2.2, yscale = 2.2 
    ]
    \node (000) at (0,0,0)  {$\WHNF$};
    \node (001) at (0,0,1)  {$\WNF$} ;
    \node (010) at (0,-1,0) {$\WHNF$};
    \node (011) at (0,-1,1) {$\WNF$} ;
    \node (100) at (1,0,0)  {$\HNF$} ;
    \node (101) at (1,0,1)  {$\NF$}  ;
    \node (110) at (1,-1,0) {$\HNF$} ;
    \node (111) at (1,-1,1) {$\NF$}  ;
    \draw [\myred]          (100) -- (000) ;
    \draw [dotted,\myred]   (110) -- (010) ;
    \draw [\myred]          (101) -- (001) ;
    \draw [\myred]          (111) -- (011) ;
    \draw [\myblue]         (001) -- (000) ;
    \draw [\myblue]         (101) -- (100) ;
    \draw [\myblue]         (111) -- (110) ;
    \draw [dotted,\myblue]  (011) -- (010) ;
  \end{tikzpicture}
\end{center}
Notice that no uniform strategy is complete for normal forms. This requires a
hybrid strategy (contribution~\ref{C:struct},
Section~\ref{sec:regimentation-hybrid},
Appendix~\ref{app:completeness-leftmost-spine}, \cite{GPN14}).

The inclusion relation of final forms suggests the possibility of
absorption, which indeed holds for strategies on the left `L' of the front
(non-strict) face of the cube.
\begin{defi}[Absorption]
  A strategy $\st_2$ absorbs a strategy $\st_1$ iff $\st_2 \circ \st_1 =
  \st_2$.
\end{defi}
Absorption is a big-step property. It is not necessary for $\st_1$ to deliver
a prefix of $\st_2$'s evaluation sequence. Absorption does not hold when
$\st_2$ evaluates less than $\st_1$.
\begin{prop}
  Both $\IIS$ and head spine $(\he)$ each absorb call-by-name $(\bn)$.
\end{prop}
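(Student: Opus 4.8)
The plan is to reduce both absorptions to two facts: call-by-name is idempotent, and each of $\IIS$ and $\he$ factors as a readback applied after $\bn$. First I would record that $\bn\circ\bn = \bn$. This is immediate: $\bn$ always returns a $\WHNF$ (or diverges), it is the identity on every $\WHNF$ (trivially on $\lambda x.B$ by \textsc{abs}, and on a neutral $x\,N_1\cdots N_n$ by induction on $n$, since its operator $x\,N_1\cdots N_{n-1}$ is again a neutral and \textsc{neu} leaves the last operand untouched), and every evaluator is strict; hence $\bn(\bn(M)) = \bn(M)$ whether or not $\bn(M)$ converges.

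Next I would exhibit, for each absorbing strategy, a readback meeting the provisos of Section~\ref{sec:eval-readback:provisos} that factors it through $\bn$. For $\IIS$ this is already available: it is exactly the decomposition $\ea = \rb\circ\bn$ of Section~\ref{sec:eval-readback:equiv}, whose readback $\rb$ leaves abstraction bodies untouched (weakness) and distributes $\rb\circ\bn$ over the operands of neutrals (non-headness). For $\he$ I would use the readback $\rho$ given by $\rho(\lambda x.B) = \lambda x.(\rho\circ\bn)(B)$, with $\rho$ the identity on variables and the recursive call $\rho(M)\,N$ on an application $MN$ (so operands of neutrals are left unevaluated); this $\rho$ distributes $\rho\circ\bn$ under the top abstractions while keeping the head, and it meets the readback provisos because its body premise supplies both the eval and the readback calls. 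That $\he=\rho\circ\bn$ is the one nontrivial step, taken up below.

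Granting the two factorisations, both absorptions follow by associativity of composition and idempotence of $\bn$:
\begin{displaymath}
  \IIS\circ\bn \;=\; \rb\circ\bn\circ\bn \;=\; \rb\circ\bn \;=\; \IIS,
  \qquad
  \he\circ\bn \;=\; \rho\circ\bn\circ\bn \;=\; \rho\circ\bn \;=\; \he.
\end{displaymath}
Equivalently, to keep the argument self-contained one can prove each equation $\st(M) = \st(\bn(M))$ directly, for $\st\in\{\IIS,\he\}$, by induction on the derivation of $\bn(M)=V$ for the convergent case and by the contrapositive induction (on the derivation of $\st(M)=W$) for the divergent case; in the \textsc{con} step $\bn(P)=\lambda x.B$ gives, by the induction hypothesis, $\st(P)=\st(\lambda x.B)$.

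I expect the only genuine difficulty to lie in the $\he$ case, localised in that \textsc{con} step, and it is the same difficulty whether one argues by factorisation or directly. Because $\he$ is non-weak, $\st(P)=\he(\lambda x.B)=\lambda x.\he(B)$ rather than $\lambda x.B$: where $\bn$ contracts $(\lambda x.B)\,Q$, the evaluator $\he$ contracts $(\lambda x.\he(B))\,Q$. Closing the induction therefore requires the commutation lemma
\begin{displaymath}
  \he\bigl(\cas{Q}{x}{\he(B)}\bigr) \;=\; \he\bigl(\cas{Q}{x}{B}\bigr),
\end{displaymath}
that is, pre-evaluating an abstraction body does not change the head normal form reached after substitution. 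This is where the weight of the proof sits. It holds because $\he$ realises the deterministic head(-spine) reduction to a canonical head normal form and $B \twoheadrightarrow_{\beta} \he(B)$ consists only of head contractions, so the two contractums are $\beta$-equal and are driven to the same head normal form by $\he$; I would establish it by induction on the $\he$-evaluation, the case of a neutral $B$ being trivial (there $\he(B)=B$) and the redex case appealing to the induction hypothesis on the contractum. No such lemma is needed for $\IIS$: its weakness makes the contracted redexes identical to $\bn$'s, which is exactly why the $\IIS$ absorption is the easier of the two.
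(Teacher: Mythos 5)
Your route is genuinely different from the paper's: the paper proves absorption purely at the big-step level, observing that all three strategies are quasi-leftmost and hence complete for their respective final results, and that $\WNF\subset\WHNF$ and $\HNF\subset\WHNF$, so $\bn$ finds the $\WHNF$ and $\IIS$/$\he$ then finds the $\WNF$/$\HNF$ --- no factorisation, no induction, no substitution lemma. Your $\IIS$ half is correct and arguably sharper than the paper's (the factorisation $\IIS=\rb\circ\bn$ plus idempotence of $\bn$ gives the identity of evaluators, not merely of final results), and your observation that weakness is what makes this half easy is exactly right.

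The $\he$ half, however, has a genuine gap, in two places. First, the readback $\rho$ you define (distribute $\rho\circ\bn$ under abstractions, leave operands of neutrals alone) composed with $\bn$ yields \emph{head reduction} $\hr$, not head spine $\he$: this is precisely the entry $(\Rb\Ev)\I\circ\stgy{III}\mapsto\HII\hyb\stgy{III}=\hr$ in Figure~\ref{fig:summary-end-figure-eval-readback}. So $\he=\rho\circ\bn$ is false as a one-step identity; $\hr$ and $\he$ differ exactly in how eagerly operators are evaluated. (The argument is salvageable because absorption is a big-step property and $\hr$ and $\he$ are big-step equivalent, but you would have to invoke that equivalence explicitly rather than the false factorisation.) Second, your justification of the commutation lemma $\he(\cas{Q}{x}{\he(B)})=\he(\cas{Q}{x}{B})$ --- ``the two contractums are $\beta$-equal and are driven to the same head normal form'' --- does not go through: head normal forms are not canonical, and $\he$ is not constant on $\beta$-equivalence classes. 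For instance $x\,((\lambda y.y)z)$ and $x\,z$ are $\beta$-equal, both are $\HNF$s, and $\he$ is the identity on each, returning different terms. The lemma itself is plausibly true, but proving it requires tracking that the reductions $B\twoheadrightarrow\he(B)$ are spine reductions whose residuals under substitution remain spine reductions, and that $\he$ is invariant under contraction of such redexes --- a residual/standardisation argument, not a consequence of $\beta$-equality. As written, the step you yourself identify as carrying ``the weight of the proof'' is unsupported.
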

\begin{proof}
The three strategies are quasi-leftmost, which subsumes the spine and
leftmost-outermost strategies
(Appendix~\ref{app:completeness-leftmost-spine}), and are complete for their
respective final results, which satisfy the inclusions $\WNF\subset\WHNF$ and
$\HNF\subset\WHNF$. If a term has a $\WNF$/$\HNF$ then it also has a
$\WHNF$. Call-by-name ($\bn$) finds the $\WHNF$ and $\IIS$/$\he$ finds the
$\WNF$/$\HNF$.
\begin{center}
  \begin{tikzpicture}
    [ arrows=-stealth 
    , thick 
    , y = {(-3.85mm,-3.85mm)}, z = {(0cm,1cm)} 
    , xscale = 2.2, yscale = 2.2 
    ]
    \node (000) at (0,0,0)  {$\III$} ; 
    \node (001) at (0,0,1)  {$\IIS$} ; 
    \node (100) at (1,0,0)  {$\SII$} ; 
    \draw [\myred]          (000) -- (100) ;
    \draw [\myblue]         (000) -- (001) ;
  \end{tikzpicture}\hspace{.1\linewidth}
  \begin{tikzpicture}
    [ arrows=-stealth 
    , thick 
    , y = {(-3.85mm,-3.85mm)}, z = {(0cm,1cm)} 
    , xscale = 2.2, yscale = 2.2 
    ]
    \node (000) at (0,0,0)  {$\WHNF$};
    \node (001) at (0,0,1)  {$\WNF$} ;
    \node (100) at (1,0,0)  {$\HNF$} ;
    \draw [\myred]          (100) -- (000) ;
    \draw [\myblue]         (001) -- (000) ;
  \end{tikzpicture}
\end{center}
\vspace{-2em}\qedhere
\end{proof}
\begin{prop}
  Absorption does not hold with $\SIS$.
\end{prop}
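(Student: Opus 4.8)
The plan is to refute the natural front-face absorption $\SIS\circ\bn=\SIS$ with a single counterexample, exploiting that $\SIS$ is \emph{incomplete}. First I would record a structural fact: whenever $\SIS$ converges it delivers a normal form (an easy induction on its rules, since in the neutral case it returns a normalised neutral and in the abstraction case a normalised body). Because $\bn(T)$ is always a reduct of $T$, confluence and uniqueness of normal forms force any two \emph{convergent} runs $\SIS(T)$ and $\SIS(\bn(T))$ to agree. Hence $\SIS\circ\bn$ can differ from $\SIS$ \emph{only} through a divergence mismatch, and it suffices to exhibit a term $T$ with a normal form for which $\SIS(T)=\bot$ while $(\SIS\circ\bn)(T)\neq\bot$.

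For the witness I would take $T\equiv(\lambda x.\,x\,(\lambda w.\OMEGA))\,(\lambda y.\CH{I})$, where $\CH{I}\equiv\lambda z.z$ and $\OMEGA\equiv(\lambda u.\,u\,u)(\lambda u.\,u\,u)$. This $T$ has normal form $\CH{I}$, since $T\rel(\lambda y.\CH{I})(\lambda w.\OMEGA)\rel\CH{I}$, the divergent subterm being discarded because $y$ does not occur in $\CH{I}$. The key is that $\SIS$, being uniform, evaluates the operator first via $\opone=\SIS$; being non-weak it normalises the operator's body $x\,(\lambda w.\OMEGA)$, a neutral; and being non-head it then evaluates that neutral's operand $\lambda w.\OMEGA$, descending into $\OMEGA$ and looping --- all \emph{before} the outer redex can be contracted. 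Thus $\SIS(T)=\bot$. By contrast $\bn$ leaves the operator abstraction untouched (weak), contracts the outer redex non-strictly to $(\lambda y.\CH{I})(\lambda w.\OMEGA)$ and then to $\CH{I}$, so $\bn(T)=\CH{I}$ and $\SIS(\bn(T))=\SIS(\CH{I})=\CH{I}$. Therefore $(\SIS\circ\bn)(T)=\CH{I}\neq\bot=\SIS(T)$, and $\SIS$ does not absorb $\bn$.

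I expect the only delicate point to be the bookkeeping that certifies $\SIS(T)=\bot$: one must check that the uniform operator-first regime ($\opone$ equal to the strategy itself) forces full normalisation of the operator abstraction, and that non-weakness together with non-headness route evaluation into $\OMEGA$ before reaching any redex that would erase it --- this is exactly the incompleteness of $\SIS$ read off from its position as the top corner of the non-strict face. I would close by remarking that the same $T$ equally refutes $\SIS\circ\IIS=\SIS$ and $\SIS\circ\he=\SIS$, since both $\IIS$ and $\he$ already reduce $T$ to $\CH{I}$; so the absorption enjoyed along the left ``L'' of the front face genuinely breaks at its fourth corner $\SIS$.
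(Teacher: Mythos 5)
Your proof is correct and takes essentially the same route as the paper's: a single counterexample on which $\SIS$ diverges (non-weakness plus non-headness driving evaluation into $\OMEGA$ inside the operator) while the weak or head strategies on the `L' converge to the normal form. The paper uses the slightly simpler witness $(\lambda k.k\OMEGA)(\lambda x.y)$, but your term works for the same reason, and your preliminary confluence remark is sound though not needed.
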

\begin{proof}
  A counter-example term is $(\lambda k.k\OMEGA)(\lambda x.y)$. Indeed, $\SIS$
  is non-weak and non-head, and diverges on that term whereas the other
  strategies in the `L' are either weak or head and deliver $y$.
\end{proof}
\begin{prop}
  Absorption does not hold between strategies that differ in strictness.
\end{prop}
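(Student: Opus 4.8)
The plan is to defeat absorption in both directions at once with a single term that isolates the defining difference between strict and non-strict strategies: a redex whose operand diverges but is discarded by the contraction. Concretely, I would take $t = (\lambda y.z)\OMEGA$, where $z$ is a free variable distinct from $y$, so that $\cas{\OMEGA}{y}{z} = z$ and the divergent operand $\OMEGA$ never survives substitution. This reuses the divergence-versus-convergence idea behind the counter-examples for the $\SIS$ propositions, but now pivoting on the strictness axis rather than on weakness or headness.

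The core of the argument is two claims about the behaviour of the strategies on $t$, each holding uniformly over the orthogonal \R{weakness} and \B{headness} choices. First, every strict strategy diverges on $t$: the operator $\lambda y.z$ evaluates to an abstraction regardless of weakness (it is already one, and under non-weak evaluation its body $z$ reduces to $z$), so the \textsc{con} rule is entered, and strictness ($\arone$ a recursive call) forces evaluation of the operand $\OMEGA$, which loops. Second, every non-strict strategy converges on $t$ to $z$: the \textsc{con} rule is reached as before, non-strictness ($\arone = \id$) passes $\OMEGA$ unevaluated, the contractum is $\cas{\OMEGA}{y}{z} = z$, and $z$ is a normal form on which every strategy is the identity by the \textsc{var} rule.

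Given any pair with $\st_1$ non-strict and $\st_2$ strict, both directions of absorption then fail on $t$. For $\st_2 \circ \st_1 = \st_2$: the left-hand side gives $\st_2(z) = z$, while $\st_2(t)$ diverges, so the two functions disagree. For $\st_1 \circ \st_2 = \st_1$: since $\st_2(t)$ diverges and every evaluator is strict ($\st_1(\bot) = \bot$), the composition diverges on $t$, whereas $\st_1(t) = z$. Hence neither strategy absorbs the other. The only point needing care---the main, and rather minor, obstacle---is confirming that the two claims really are independent of the weakness and headness parameters; this reduces to the observation already made, namely that $\lambda y.z$ is an abstraction under any of those choices so the \textsc{con} rule is always reached, and the surviving contractum $z$ is a variable untouched by further evaluation, leaving only the strictness parameter to govern the outcome on $t$.
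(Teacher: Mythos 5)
Your proposal is correct and takes essentially the same approach as the paper, which uses the same counter-example term (the paper writes it as $(\lambda x.y)\OMEGA$) and the same one-line observation that a strict strategy diverges on it while a non-strict strategy converges. Your additional checks---that the outcome is independent of the weakness and headness parameters, and that both directions of the absorption equation fail---are sound elaborations of what the paper leaves implicit.
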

\begin{proof}
  The counter-example term is $(\lambda x.y)\OMEGA$ on which a strict strategy
  diverges but a non-strict strategy converges.
\end{proof}
\begin{prop}
  Absorption does not hold on the strict face.
\end{prop}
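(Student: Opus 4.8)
The plan is to refute absorption by counterexample, exploiting the fact that all four strict-face strategies $\ISI$, $\ISS$, $\SSI$, and $\SSS$ are strict and \emph{incomplete}. On the non-strict `L' absorption held precisely because the subsidiaries were complete for their nested final forms, so running the less-reducing strategy first could never derail the more-reducing one; on the strict face that argument is unavailable, and I expect every absorption suggested by the inclusion of final forms to fail. The candidate absorptions are exactly those in which a more-reducing strategy would absorb a less-reducing one. Since any two \emph{comparable} strict-face strategies differ along the \R{weakness} axis, the \B{headness} axis, or both, it suffices to exhibit one counterexample for a pure weakness difference and one for a pure headness difference, and to observe that $\SSS$ versus $\ISI$ (differing in both) is broken by either; the incomparable pair $\ISS$ and $\SSI$ is not a candidate at all (neither final form includes the other) and fails in both directions by the already-noted principle that absorption breaks when the outer strategy evaluates less than the inner one.

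First I would fix the counterexample template $T \equiv (\lambda x.z)\,N$ with $x \not\equiv z$, so that contracting the redex \emph{discards} $N$. The idea is to plant a divergent subterm inside $N$ at a position that only the more-reducing strategy $\st_2$ reaches, while the less-reducing $\st_1$ stops short, evaluates $N$ to a harmless form, contracts, and throws it away. For a headness difference I would take $N \equiv y\,\OMEGA$: the head strategy leaves the neutral operand $\OMEGA$ untouched and returns $y\,\OMEGA$, whereas the non-head strategy evaluates $\OMEGA$ and diverges. For a weakness difference I would take $N \equiv \lambda w.\OMEGA$: the weak strategy returns the abstraction intact, whereas the non-weak strategy descends under $\lambda w$ into $\OMEGA$ and diverges. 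In both cases a short trace gives $\st_1(T) = z$, hence $(\st_2 \circ \st_1)(T) = \st_2(z) = z$, while $\st_2(T)$ diverges because strictness (the $\arone$ parameter being a self-call) forces $\st_2$ to evaluate $N$ before contracting. Thus $\st_2 \circ \st_1 \neq \st_2$ for each comparable pair: $\ISS$/$\ISI$ and $\SSS$/$\SSI$ by the first term, $\SSI$/$\ISI$ and $\SSS$/$\ISS$ by the second, and $\SSS$/$\ISI$ by either.

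The main obstacle is not the algebra of the traces but isolating the divergence correctly. Because \emph{both} strategies are strict, the divergence must not arise from strictness, which they share and which would make both sides diverge and yield no counterexample; it must come solely from the weakness/headness gap. This forces the divergent occurrence to sit exactly where the more-reducing strategy probes and the less-reducing one does not — under a binder for the weakness axis, inside a neutral's operand for the headness axis — and simultaneously to be erasable by contraction, so that $\st_1(T)$ reduces to the clean variable $z$ on which $\st_2$ acts as the identity. The one place I would check the natural-semantics rules of Figures~\ref{fig:strategies1} and~\ref{fig:generic-eval-apply-3} line by line is to confirm that the redex operand $N$ is genuinely evaluated by $\st_2$ (via $\arone$) before contraction, yet genuinely skipped-then-discarded by $\st_1$; once that is verified, no absorption survives on the strict face.
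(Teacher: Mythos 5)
Your proof is correct and follows essentially the same route as the paper: for each candidate absorption on the strict face, exhibit a redex that discards an operand containing $\OMEGA$ placed where only the more-reducing strategy (by non-weakness or non-headness) reaches it, so the less-reducing strategy converges on a variable while the more-reducing one diverges. The only difference is bookkeeping — you use two template operands ($y\,\OMEGA$ and $\lambda w.\OMEGA$) organised by axis, whereas the paper handles all three $\ao$ cases at once with the single operand $\lambda k.k\OMEGA$, which needs both non-weakness and non-headness to expose $\OMEGA$.
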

\begin{proof}
  The counter-examples are easily constructed using weakness and headness.
  \begin{itemize}
  \item The counter-example that $\ao$ does not absorb $\ISI$, nor $\bv$, nor
    $\ho$, is the term $(\lambda x.y)(\lambda k.k\OMEGA)$. This is the
    counter-example for $\SIS$ above but with operator and operand flipped
    around. Clearly, $\ao$ diverges on the operand with $\OMEGA$. But $\ISI$,
    $\bv$, and $\ho$ are weak or head and do not evaluate it.
  \item The counter-example that $\bv$ does not absorb $\ISI$ is $(\lambda
    x.y)(x\Omega)$.
  \item The counter-example that $\ho$ does not absorb $\ISI$ is
    $(\lambda x.y)(\lambda x.\Omega)$. \qedhere
  \end{itemize}
\end{proof}

\subsection{Structuring the hybrid space}
\label{sec:regimentation-hybrid}
The following table, extracted from the instantiation table in
Figure~\ref{fig:generic-eval-apply-3}
(page~\pageref{fig:generic-eval-apply-3}), collects the hybrid evaluators
discussed in the survey. The calls to subsidiaries are highlighted with the
remaining parameters greyed out.
\begin{displaymath}
\centering
  \begin{array}[t]{lccccc}
    \hline
    \text{Name} & \la & \opone & \arone & \optwo & \artwo \\
    \hline
    \text{Normal order}~(\no) &
                \D{\no} & \bn & \D{\id} & \D{\no} & \D{\no} \\
    \text{Head reduction}~(\hr) &
                \D{\hr} & \bn & \D{\id} & \D{\hr} & \D{\id} \\
    \text{Strict normalisation}~(\sn)&
                \D{\sn} & \bv & \bv     & \D{\sn} & \D{\sn} \\
    \text{Hybrid normal order}~(\hn) &
                \D{\hn} & \he & \D{\id} & \D{\hn} & \D{\hn} \\
    \text{Hybrid applicative order}~(\ha) &
                \D{\ha} & \bv & \underline{\D{\ha}} & \D{\ha} & \D{\ha} \\
    \text{Ahead machine}~(\am) &
                \D{\am} & \bv & \bv     & \D{\am} & \bv \\
    \text{Spine applicative order}~(\so) &
                \D{\so} & \ho & \underline{\D{\so}} & \D{\so} & \D{\so} \\
    \text{Balanced spine applicative order}~(\bs) &
                \D{\bs} & \ho & \ho & \D{\bs} & \D{\bs} \\
    \hline \\
  \end{array}
\end{displaymath}
The only unbalanced hybrid evaluators in the survey are hybrid applicative
order ($\ha$) and spine applicative order ($\so$), which call themselves
recursively in $\arone$ (underlined in the table). Any strategy could be used
on any of the five parameters but in practice the same uniform subsidiary is
used on some parameters to satisfy the desired properties (completeness,
spineness, etc.). Since Definition~\ref{def:nature}(ii) implies
Definition~\ref{def:dependency}, it is always possible to fix a subsidiary
strategy on which a hybrid strategy depends. In particular, the subsidiary
evaluator is called to evaluate operators in $\opone$, and the instantiations
for the other parameters must be such that the hybrid evaluator evaluates more
redexes than the subsidiary. The following provisos gleaned from the table
characterise hybrid-style eval-apply evaluators (`hybrid evaluators', for
short). The provisos assume that the subsidiary uniform evaluator upholds
provisos \ref{prov:U-opone-optwo} and \ref{prov:U-la-arone-artwo}.
\begin{enumerate}[label=(\textsc{h}$_{\arabic*}$)]
\item \label{prov:H-opone-optwo} On the shared $\opone$ parameter of
  \textsc{con} and \textsc{neu}, the hybrid evaluator calls the subsidiary
  evaluator. On \textsc{neu}'s $\optwo$ parameter, the hybrid evaluator calls
  itself recursively.

\item \label{prov:H-la-artwo} On the $\la$ and $\artwo$ parameters, the hybrid
  evaluator must evaluate strictly more than the subsidiary evaluator, namely,
  at least one of the parameters must be a call to the hybrid, and there must
  be at least one call to the subsidiary or to the hybrid where the homonymous
  parameter in the subsidiary is the identity.

\item \label{prov:H-arone} On the $\arone$ parameter, the hybrid evaluator
  calls the identity when non-strict, and evaluates at least as much as the
  subsidiary when strict. For balanced hybrid evaluators, the $\arone$
  parameter has the same value as the homonymous parameter of the subsidiary.
\end{enumerate}
Evaluating `more' and `at least as much' is succinctly expressed by $\{ \id \}
\leq \{ \id, \su, \hy \}$ and $\{ \su \} \leq \{ \su, \hy \}$ where the left
sets have the permissible values of an homonymous subsidiary parameter and the
right sets have the permissible values of the hybrid parameter given that
homonymous subsidiary parameter.

For illustration, proviso \ref{prov:H-la-artwo} holds when the hybrid
evaluator calls itself on one of $\la$, $\artwo$ where the subsidiary calls
$\id$, and on the other parameter it has the same value as the homonymous
parameter of the subsidiary. For example, ahead machine ($\am$) calls itself
on $\la$ whereas its subsidiary call-by-value ($\bv$) calls the identity on
$\la$, and ahead machine calls $\bv$ on $\artwo$ exactly as $\bv$ does on its
$\artwo$. Proviso \ref{prov:H-la-artwo} also holds when the hybrid calls the
subsidiary on one of $\la$, $\artwo$ where the subsidiary calls $\id$, and on
the other parameter it calls itself recursively. For example, normal order
($\no$) calls itself on $\la$ whereas its subsidiary call-by-name ($\bn$)
calls the identity, and normal order calls itself recursively on $\artwo$.

The provisos are for evaluators, that is, for hybrid \emph{style}. As we
anticipated in Section~\ref{sec:uniform-vs-hybrid} and will show in
Section~\ref{sec:hy-systematic}, a hybrid evaluator that upholds all the
provisos can still define a uniform strategy. The aim of the provisos is not
to match evaluators and strategies. The provisos express obvious syntactic and
reasonable operational restrictions under which a hybrid-style evaluator can
define a hybrid strategy. The provisos are gleaned from the evaluators in the
survey, with an eye on the connection with the eval-readback provisos to prove
one-step equivalence, including the cases when the eval-readback and the
hybrid eval-apply evaluator define the same uniform strategy. (We compare the
provisos for both styles after the following proposition.)
\begin{prop}
  Let $\hy$ be a hybrid eval-apply evaluator that uses a uniform eval-apply
  evaluator $\su$ that upholds provisos \textnormal{\ref{prov:U-opone-optwo}}
  \textnormal{\ref{prov:U-la-arone-artwo}}. The $\hy$ evaluator cannot define
  a hybrid strategy according to Definitions~\ref{def:nature}(ii)
  and~\ref{def:dependency} if it does not uphold the hybrid provisos
  \textnormal{\ref{prov:H-opone-optwo}}, \textnormal{\ref{prov:H-la-artwo}}
  \textnormal{\ref{prov:H-arone}}. If the strategy is balanced then $\hy$
  upholds the `balanced' part of proviso \textnormal{\ref{prov:H-arone}}.
\end{prop}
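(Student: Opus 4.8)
The plan is to prove the contrapositive in the form of a \emph{necessity} argument: assuming $\hy$ defines a hybrid strategy in the sense of Definition~\ref{def:nature}(ii) that depends on $\su$ in the sense of Definition~\ref{def:dependency}, I would show that $\hy$ must instantiate the template of Figure~\ref{fig:generic-eval-apply-3} so as to uphold \ref{prov:H-opone-optwo}, \ref{prov:H-la-artwo}, and \ref{prov:H-arone}. Equivalently, for each proviso I would argue that its failure forces $\hy$ to be uniform or to depend on something other than $\su$. The setup mirrors the proof of Proposition~\ref{prop:uniform-evaluator-strategy}: I read the set of contexts $\NT{HY}$ off the \textsc{con} and \textsc{neu} nodes of the evaluation trees, recording which subterm position is governed by which parameter. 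The structural fact I would exploit throughout is that the operator is the \emph{only} position whose nested contexts are governed by a parameter shared between \textsc{con} and \textsc{neu} (namely $\opone$), so any genuine dependency of $\hy$ on a subsidiary can surface only there.

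For \ref{prov:H-opone-optwo} I would argue as follows. By Definition~\ref{def:dependency}, dependency on $\su$ requires a context $\ctx_1\in\NT{HY}$ whose hole, once filled, lies in $\NT{HY}$ exactly for $\ctx_2\in\su$; by the structural fact this context places its hole in operator position, forcing $\opone=\su$. If instead $\opone=\hy$, the operator contexts are themselves $\hy$-contexts and $\hy$ is closed under inclusion there, so together with the remaining self/identity parameters Proposition~\ref{prop:uniform-evaluator-strategy} yields a uniform strategy, contradicting hybridity; and if $\opone$ were a third strategy, $\hy$ would depend on that strategy rather than on $\su$. The $\optwo=\hy$ clause I would then obtain from consistency of a single deterministic strategy: once $\su$ has exposed a neutral, its operator must be driven to the same degree of evaluation as any other operator, which is $\hy$ itself.

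For \ref{prov:H-la-artwo} and \ref{prov:H-arone} the argument turns on ``strictly more reducing''. If neither $\la$ nor $\artwo$ called $\hy$, then $\hy$ would evaluate no more than $\su$ on bodies and on operands of neutrals; combined with $\opone=\su$, with $\optwo=\hy$ acting idempotently on $\su$-results, and with $\arone$ matching $\su$, this would make $\hy$ extensionally equal to $\su$ and hence uniform, so the ``at least one call to $\hy$'' half of \ref{prov:H-la-artwo} is necessary for hybridity. For \ref{prov:H-arone} I would split on strictness: a non-strict $\hy$ with $\arone\neq\id$ would contract a redex after evaluating its operand, contradicting non-strictness, while a strict $\hy$ needs $\arone$ to reach at least $\su$'s redexes for the contractum to be well-formed; and for the balanced case I would invoke Definition~\ref{def:balanced} directly, since equal permissible redexes forces $\arone$ to coincide with $\su$'s homonymous parameter (otherwise $\hy$ contracts operand-redexes $\su$ does not, i.e.\ it is unbalanced like $\ha$ or $\so$).

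The main obstacle I expect is the $\optwo=\hy$ clause of \ref{prov:H-opone-optwo}. Unlike the other parameters, taking $\optwo=\id$ still leaves $\opone=\su$ in operator position, so a dependency on $\su$ persists and the ``not hybrid'' conclusion does \emph{not} follow from Definition~\ref{def:dependency} alone. The delicate step is therefore to show that $\optwo\neq\hy$ either leaves the operator of a neutral under-reduced relative to the operand evaluated by $\artwo$ — so that no single well-defined deterministic strategy (a function on $\NT{HY}$-contexts) results — or collapses the neutral case back onto $\su$, so that the object obtained is not a proper hybrid strategy evaluating strictly more than $\su$. Making this dichotomy precise, in terms of the uniqueness-of-context requirement underlying the context-based semantics, is where I would spend most of the effort.
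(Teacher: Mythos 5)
Your overall route is the paper's: argue the contrapositive and use Definitions~\ref{def:nature}(ii) and~\ref{def:dependency}, together with reading the context set off the evaluation trees as in Proposition~\ref{prop:uniform-evaluator-strategy}, to show each proviso is forced by hybridity. In most places you are considerably more detailed than the paper, whose proof is a single short paragraph. Two points need repair. First, your ``structural fact'' --- that a dependency on a subsidiary can surface \emph{only} in operator position because $\opone$ is the only parameter shared between \textsc{con} and \textsc{neu} --- is too strong. Proviso~\ref{prov:H-la-artwo} explicitly permits $\la$ or $\artwo$ to be a call to the subsidiary (e.g.\ $\SIH\hyb\III$ calls $\bn$ on $\la$), and each such call yields a witness context for Definition~\ref{def:dependency} (namely $\lambda x.\hole$ or $M'\hole$), so the dependency definition does not single out $\opone$. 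The paper does not derive $\opone=\su$ from the dependency definition at all; it takes it from the hypothesis that $\hy$ \emph{uses} $\su$, so that a derivation tree consisting only of subsidiary contexts is stacked on the $\opone$ premise, and then reads the dependency off that. You would do better to argue in that direction than to try to prove the converse.

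Second, the step you flag as the main obstacle --- forcing $\optwo=\hy$ --- is indeed the weakest link, but the paper does not supply the uniqueness-of-context dichotomy you plan to spend your effort on. It simply subsumes $\optwo=\hy$ under the requirement, coming from the context-inclusion property, that $\hy$ evaluate strictly more redexes than $\su$, and asserts that this clause together with \ref{prov:H-la-artwo} and the first part of \ref{prov:H-arone} ``cover all possible stackings of derivation trees and contexts where $\hy$ evaluates more than $\su$''. So your proposal is not missing anything the paper actually provides; but as a standalone argument it is incomplete at exactly this point, and you should either adopt the paper's coarser ``strictly more reducing'' framing or be prepared to find that a tight necessity argument for the $\optwo$ clause does not follow from Definitions~\ref{def:nature} and~\ref{def:dependency} alone. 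The balanced clause you handle exactly as the paper does, via Definition~\ref{def:balanced} and equality of the sets of permissible redexes.
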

\begin{proof}
  In contrapositive form, if the intended strategy is hybrid then the
  evaluator upholds the provisos. The $\su$ evaluator defines a uniform
  strategy by Proposition~\ref{prop:uniform-evaluator-strategy}.  By proviso
  \ref{prov:U-opone-optwo}, the hybrid strategy depends on $\su$ because it is
  used (at least) on the $\opone$ parameter, and there is a derivation tree
  stacked on that premise with only subsidiary contexts.  By the context
  inclusion property of the hybrid strategy, it evaluates more redexes than
  the subsidiary. This is satisfied by calling the hybrid evaluator in
  $\optwo$ and by proviso \ref{prov:H-la-artwo} and the first part of
  \ref{prov:H-arone} which cover all possible stackings of derivation trees
  and contexts where $\hy$ evaluates more than $\su$. If the hybrid strategy
  is balanced, the set of permissible redexes must be the same as $\su$'s.
  This is satisfied by $\hy$ in its use of $\su$ in $\opone$ to uphold proviso
  \ref{prov:H-opone-optwo} and by the balanced part of
  \ref{prov:H-arone}.
\end{proof}
A comparison between the hybrid provisos and the eval-readback provisos of
Section~\ref{sec:eval-readback:provisos} provides intuitions for the role of
the provisos in determining the conditions for one-step equivalence.
\begin{itemize}
\item The readback evaluator comes after the eval evaluator, and readback
  evaluates more than eval. The hybrid evaluator comes after the subsidiary
  evaluator (on operators), and the hybrid evaluates more than the subsidiary.
\item Eval calls itself on $\opone$. The subsidiary calls itself on
  $\opone$. Readback calls itself on $\optwo$. The hybrid calls itself on
  $\optwo$.
\item The provisos \ref{prov:ER-la-artwo} and
  \ref{prov:H-la-artwo}\,\ref{prov:H-arone} are related: a balanced hybrid
  evaluator uses the same parameter as its subsidiary for $\arone$, and in the
  other two parameters ($\la$ and $\artwo$) one of the following must be the
  case:
  \begin{itemize}
  \item The hybrid calls the identity, or the subsidiary, or itself, if the
    subsidiary calls the identity in the homonymous parameter.
  \item The hybrid calls the subsidiary or itself, if the subsidiary calls
    itself in the homonymous parameter.
  \end{itemize}
\end{itemize}

\subsubsection{A systematic notation for hybrid evaluators}
\label{sec:hy-systematic}
To continue the analysis of hybrid evaluators without explicit reference to
the template and instantiation table of Figure~\ref{fig:generic-eval-apply-3}
(page~\pageref{fig:generic-eval-apply-3}), we introduce a convenient
systematic notation $\stgy{\R{X}\E{X}\B{X}} \hyb \stgy{\R{Y}\E{Y}\B{Y}}$ for
encoding hybrid evaluators by means of two triples. The right triple
$\stgy{\R{Y}\E{Y}\B{Y}}$ is the Beta Cube triple of the uniform evaluator used
as a subsidiary (table in Section~\ref{sec:sub:cube},
page~\pageref{sec:sub:cube:table}). We may on occasion abbreviate it using the
uniform's name when it has one (\eg~$\bv$). The left triple
$\stgy{\R{X}\E{X}\B{X}}$ encodes the behaviour of the hybrid evaluator on its
$\la$, $\arone$, and $\artwo$ parameters with the following values: $\stgy{I}$
for call to the identity, $\stgy{S}$ for call to the subsidiary, and
$\stgy{H}$ for recursive call to the hybrid. Conveniently, $\stgy{S}$ can be
read as `subsidiary' across the encoding because in the right triple it stands
for `self' which is a recursive call of the subsidiary. The diagram in
Figure~\ref{fig:hy-systematic} summarises. As an aid to the reader,
Figure~\ref{fig:encoding-to-natural-semantics-example} shows how to obtain the
natural semantics of a hybrid evaluator from its systematic notation by
instantiating the eval-apply evaluation template of
Figure~\ref{fig:generic-eval-apply-3}. The example is for the hybrid evaluator
${\stgy{\stgy{\R{H}\E{I}\B{S}}}}\hyb \stgy{\stgy{\R{S}\E{I}\B{I}}}$ which is
not discussed in the survey. We invite readers to take a couple of notations
from the above table, obtain the natural semantics in the same way as
illustrated in Figure~\ref{fig:encoding-to-natural-semantics-example}, and
compare the result (modulo evaluator names) to the natural semantics in the
survey.

\begin{figure}[htb]
  \centering
  \begin{displaymath}
    \centering
    \begin{array}{ccccc}
      \multicolumn{5}{c}{\overbrace{\rule{5cm}{0pt}}^{\mathrm{hybrid~parameters}}} \\
      \la & \arone & \artwo & & \opone \\
      \hline
      \stgy{\R{X}} & \stgy{\E{X}} & \stgy{\B{X}} & \hyb &
      \underbrace{\stgy{\R{Y}}~~~~~\stgy{\E{Y}}~~~~~\stgy{\B{Y}}}_{\mathrm{uniform~parameters}} \\
    \end{array}
    \hspace{1cm}
    \begin{array}{ccl}
      \stgy{X} & \in & \{ \stgy{I}, \stgy{S}, \stgy{H} \} \\
      \stgy{Y} & \in & \{ \stgy{I}, \stgy{S} \}
    \end{array}
  \end{displaymath}
  \vspace{0.2cm}
  \begin{displaymath}
    \centering
    \text{Fixed parameters}\left\{
      \begin{array}{lll}
        \mathrm{Hybrid:}  & \optwo  = \stgy{H} &
        \text{by proviso \ref{prov:H-opone-optwo}} \\
        \\
        \mathrm{Uniform:} & \opone = \stgy{S} &
        \text{by proviso \ref{prov:U-opone-optwo}} \\
        & \optwo =  \stgy{I} &
        \text{by proviso \ref{prov:U-opone-optwo}}
      \end{array}\right.
  \end{displaymath}
  \caption{Systematic notation for encoding hybrid evaluators.}
  \label{fig:hy-systematic}
\end{figure}

\begin{figure}
  \centering
  \def\HY{\stgy{HIS}} %
  \def\SU{\stgy{SII}} %
  \def\HYSU{(\HY\!\hyb\!\SU)} %
{\small\begin{mathpar} %
  \inferrule*[left=abs] %
  {\R{\SU}(B) = B'}%
  {\SU(\lambda x.B) = \lambda x.B'} %
  \and %
  \inferrule*[left=con] %
  {\SU(M) = \lambda x.B \quad \stgy{\E{id}}(N) = N \quad \SU(\cas{N}{x}{B})=B'} %
  {\SU(MN) = B'} %
  \and %
  \inferrule*[left=neu] %
  {\SU(M) = M' \quad M' \not\equiv \lambda x.B \quad \id(M') = M' \quad
    \stgy{\B{id}}(N) = N} %
  {\SU(MN) = M'N} %
  \\
  \\
  \inferrule*[left=abs] %
  {\R{\HYSU} (B) = B'}%
  {\HYSU (\lambda x.B) = \lambda x.B'} %
  \and %
  \inferrule*[left=con] %
  {\SU (M) = \lambda x.B \quad \stgy{\E{id}}(N) = N \quad \HYSU
    (\cas{N}{x}{B})=B'} %
  {\HYSU (MN) = B'} %
  \and %
  \inferrule*[left=neu] %
  {\SU(M) = M' \quad M' \not\equiv \lambda x.B \quad \HYSU (M') = M''
    \quad \B{\SU} (N) = N'} %
  {\HYSU(MN) = M''N'} %
\end{mathpar}}
\caption{Natural semantics obtained by instantiating $\stgy{\R{H}\E{I}\B{S}}
  \hyb \stgy{\stgy{\R{S}\E{I}\B{I}}}$ on the eval-apply evaluation template of
  Figure~\ref{fig:generic-eval-apply-3}. The instantiations of parameters are
  coloured according to the colour of the parameter.}
\label{fig:encoding-to-natural-semantics-example}
\end{figure}%

The following table lists the encodings for the hybrid evaluators in the
survey. We omit the colours.
\begin{displaymath}
\centering
  \begin{array}[t]{lcr}
    \hline
    \text{Name} & \text{Systematic} & \text{Abbreviated} \\
    \hline
    \text{Normal order}~(\no)   & \HIH \hyb \stgy{III} & \HIH \hyb \bn \\
    \text{Head reduction}~(\hr) & \HIH \hyb \stgy{III} & \HII \hyb \bn \\
    \text{Strict normalisation}~(\sn) & \HSH \hyb \stgy{ISS} & \HSH \hyb \bv \\
    \text{Hybrid normal order}~(\hn) & \HIH \hyb \stgy{SII} & \HIH \hyb \he \\
    \text{Hybrid applicative order}~(\ha) & \HHH \hyb \stgy{ISS} & \HHH \hyb \bv \\
    \text{Ahead machine}~(\am) & \HSS \hyb \stgy{ISS} & \HSS \hyb \bv \\
    \text{Spine applicative order}~(\so) & \HHH \hyb \stgy{SSI} & \HHH \hyb \ho \\
    \text{Balanced spine applicative order}~(\bs) & \HSH \hyb \stgy{SSI} &
    \HSH \hyb \ho \\
    \hline \\
  \end{array}
\end{displaymath}
We can use the notation to name hybrid evaluators for strategies not shown in
the survey. For example, the following evaluators use the subsidiary on the
$\la$ parameter unlike the evaluators in the survey, which call themselves
recursively on that parameter.
\begin{displaymath}
  \begin{array}[t]{cccccc}
    \hline
    \text{Systematic} & \la & \opone & \arone & \optwo & \artwo \\
    \hline
    \SIH\hyb\he & \he & \he & \D{\id} &
                \D{\SIH\hyb\SII} & \D{\SIH\hyb\SII} \\
    \SSH\hyb\ho & \ho & \ho & \ho &
                \D{\SSH\hyb\SSI} & \D{\SSH\hyb\SSI} \\
    \hline
  \end{array}\\
\end{displaymath}
The notation is semantically restricted by the hybrid provisos. When the left
triple is identical to the right triple (\eg~$\SIS\hyb\SIS$) then proviso
\ref{prov:H-la-artwo} fails and the evaluator defined is actually uniform
(\eg~$\SIS$) because the evaluation trees differ only in the evaluator name at
the roots. Also, evaluators such as $\HIH\hyb\SIS$, $\HSI\hyb\stgy{SSI}$,
$\IHH\hyb\stgy{ISS}$, $\stgy{HHI}\hyb\stgy{SSI}$, etc., are spurious hybrid
evaluators because they do not evaluate more than the subsidiary: the hybrid
simply calls itself on the same parameters evaluated by the subsidiary and
does not evaluate more on the other parameters.

Proviso \ref{prov:H-arone} dictates that balanced hybrid evaluators have the
form $\stgy{\R{X}} \stgy{I} \stgy{\B{X}} \hyb \stgy{\R{Y}} \stgy{I}
\stgy{\B{Y}}$ (non-strict) and $\stgy{\R{X}} \stgy{S} \stgy{\B{X}} \hyb
\stgy{\R{Y}} \stgy{S} \stgy{\B{Y}}$ (strict), with the other two parameters
$\stgy{\R{X}}$ and $\stgy{\B{Y}}$ restricted by proviso~\ref{prov:H-la-artwo}.

We now show \label{pag:IIS-hybrid} the example anticipated in
Section~\ref{sec:uniform-vs-hybrid} of a hybrid evaluator that actually
defines a uniform strategy. The uniform strategy is $\IIS$, defined by the
uniform evaluator encoded by the notation. This strategy can be defined by the
one-step equivalent hybrid evaluator $\IIH \hyb \bn$.  The equivalence is easy
to see with the notation: the hybrid evaluator is constructed by moving
non-headness from the uniform triple $\stgy{II\B{S}}$ to a hybrid triple
$\stgy{II\B{H}}$, leaving a head uniform subsidiary ($\bn = \stgy{II\B{I}}$).
This is the same move shown in Section~\ref{sec:eval-readback:equiv} with the
eval-apply $\ea$ evaluator and the eval-readback $\rb \circ \bn$ evaluator.
The $\ea$ evaluator is exactly $\IIS$, and the eval-readback evaluator is
obtained by moving non-headness to the readback stage. Indeed, the hybrid
evaluator $\IIH \hyb \bn$ and the eval-readback evaluator $\rb \circ \bn$ of
Section~\ref{sec:eval-readback:equiv} are one-step equivalent
(Figure~\ref{fig:summary-end-figure-eval-readback}). Both define the uniform
strategy $\IIS$ which is better defined in uniform style.

Notice that there are shallow hybrid evaluators such as $\stgy{SIH} \hyb
\stgy{III}$ which uphold the provisos. This evaluator calls its subsidiary
$\bn$ on $\la$ for a shallow non-weak evaluation, and calls itself on $\artwo$
to perform that very shallow evaluation on operands of neutrals. Such shallow
hybrid evaluators are suitable non-uniform eval candidates for eval-readback
evaluators.

\subsubsection{(Non-)Equivalences within the style}
\label{sec:hyb:equiv}
With the systematic notation we can conveniently show (non-)equivalences among
hybrid evaluators. A simple example is the non-equivalence of $\HIS\hyb\III$
and $\HIS\hyb\IIS$. As their common left triple indicates, both hybrids use
the subsidiary to evaluate neutrals ($\stgy{HI\B{S}}$). But $\bn\ (=
\stgy{III})$ is head and $\IIS$ is non-head. A counter-example neutral that
disproves the equivalence is trivial: $x(xR)$ where $R$ is a redex. This redex
is evaluated by $\HIS\hyb\IIS$ but not by $\HIS\hyb\III$.


An equivalence can be obtained when the hybrid calls itself recursively on the
neutral instead of calling the subsidiary, that is, the left triple is not
$\stgy{HI\B{S}}$ but $\stgy{HI\B{H}}$. The hybrids are $\HIH\hyb\III$ (normal
order) and $\HIH\hyb\IIS$.  However, the equivalence is one-step modulo
commuting redexes because $\IIS$ is weak and non-head, so it evaluates
operands of neutrals that are not abstractions. A counter-example neutral is
$x(\lambda x.R_1)R_2R_3$ where $R_i$ are redexes. Normal order evaluates the
redexes from left to right whereas $\HIH\hyb\IIS$ calls $\IIS$ on the operator
$x(\lambda x.R_1)R_2$ and thus $R_2$ is the first evaluated redex.

As another example, we show the non-equivalence of normal order
($\HIH\hyb\III$) respectively with hybrid normal order ($\HIH\hyb\SII$) and
head reduction ($\HII\hyb\III$). In the first case, the encodings differ only
on the $\la$ value of the uniform triple ($\stgy{\R{I}II}$ vs
$\stgy{\R{S}II}$), which marks the difference between delivering a $\WHNF$ or
a $\HNF$ by the subsidiary. The hybrid triple $\HIH$ indicates that evaluation
continues (and in this case, completes) with non-weak and non-head evaluation
to $\NF$. In the second case of normal order vs head reduction, the common
subsidiary delivers a $\WHNF$ and the encodings differ only on the $\artwo$
parameter of the hybrid triple ($\stgy{HI\B{H}}$ vs $\stgy{HI\B{I}}$), which
marks the difference between delivering a $\NF$ or a $\HNF$ as a final result.

There are no equivalences on the strict space because a strict subsidiary is
used by the hybrid to evaluate operators and, being strict, the subsidiary
evaluates the operands of redexes within the operators. The weakness and
headness is tied-in with the evaluation of those operands. For example, all
the hybrids with respective subsidiaries $\ISI$ (head) and $\bv$
$(= \stgy{ISS})$ (non-head) are not equivalent. For instance $\HSS\hyb\ISI$
and $\HSS\hyb\stgy{ISS}$, or $\HSH\hyb\ISI$ and $\HSH\hyb\stgy{ISS}$ (strict
normalisation). A counter-example is easy to find by placing a redex $R$
within a neutral that is the operand of a redex that is the operator first
evaluated by the subsidiaries: $((\lambda x.B)(xR))N$.

\subsection{Structuring the eval-readback space}
\label{sec:regimentation-eval-readback}
The provisos for eval-readback evaluators have been introduced in
Section~\ref{sec:eval-readback:provisos}. We introduce a systematic notation
$\stgy{\R{X}\B{X}} \circ \stgy{\R{Y}\E{Y}\B{Y}}$ for encoding eval-readback
evaluators by means of a pair and a triple. The right triple
$\stgy{\R{Y}\E{Y}\B{Y}}$ is the Beta Cube triple of the uniform evaluator used
as eval. We may on occasion abbreviate it using the uniform's name when it has
one. The left pair $\stgy{\R{X}\B{X}}$ encodes the behaviour of the readback
on its $\la$ and $\artwo$ parameters with the following values: $\stgy{I}$ for
call to the `identity', $\Ev$ for call to `eval', $\Rb$ for call to
`readback', and $(\Rb\Ev)$ for the composition of readback after eval. Mind
the parenthesis: $(\Rb\Ev)$ is not a pair. Valid pairs are $\Ev\Rb$, $\Rb\Ev$,
and $(\Rb\Ev)\stgy{X}$ and $\stgy{X}(\Rb\Ev)$ for any valid value of
$\stgy{X}$.  The following table shows the encodings for the eval-readback
evaluators in Figure~\ref{fig:generic-readback-3} (colours omitted).
\begin{displaymath}
  \begin{array}[t]{lrll}
    \hline
    \text{Name} & \multicolumn{3}{c}{\text{Systematic}} \\
    \hline
    \mathtt{byValue}        & (\Rb\Ev)\Rb      & \circ & \bv \\
    \mathtt{byName}         & \Rb(\Rb\Ev)      & \circ & \he \\
    \text{Normal order}     & (\Rb\Ev)(\Rb\Ev) & \circ & \bn \\
    \text{Ahead machine}    & (\Rb\Ev)\I       & \circ & \bv \\
    \mathit{unnamed}        & \Ev\Rb           & \circ & \bv \\
    \hline
    \hline \\
  \end{array}
\end{displaymath}
The proviso \ref{prov:ER-la-artwo} dictates that one element $\stgy{X}$ of the
readback pair must be $\Ev$ or $(\Rb\Ev)$. By unfolding the proviso we obtain
some new eval-readback evaluators:
\begin{itemize}
\item One element of the pair is $(\Rb\Ev)$ and the other is $\I$. For
  instance, $\I(\Rb\Ev)\circ\bn$, $(\Rb\Ev)\I\circ\bn$, and
  $(\Rb\Ev)\I\circ\bv$.
\item One element of the pair is $\Ev$ and the other is $\Rb$ when the
  corresponding premises in eval are $\I$. For instance, $\Ev\Rb\circ\IIS$,
  $\Rb\Ev\circ\he$, and $\Ev\Rb\circ\bv$.
\end{itemize}
The $\I(\Rb\Ev)\circ\bn$ and $(\Rb\Ev)\I\circ\bn$ evaluators are respectively
one-step equivalent to the uniform evaluators $\IIS$ and head reduction
$(\hr)$. The $(\Rb\Ev)\I\circ\bv$ evaluator is one-step equivalent to the
hybrid evaluator for the ahead machine $(\am)$. These and more equivalences
are collected in Figure~\ref{fig:summary-end-figure-eval-readback}
(page~\pageref{fig:summary-end-figure-eval-readback}) and follow from the LWF
proof given in the following section.

\section{Equivalence proof by LWF transformation}
\label{sec:hybrid-evalreadback}
Our equivalence proof uses LWF to fuse the eval and readback stages into the
single-function balanced hybrid eval-apply evaluator. The proof does
\emph{not} apply LWF on a generic eval-readback evaluator. It applies LWF on a
\emph{plain but arbitrary} eval-readback evaluator where its uniform eval
evaluator is a fixed-point of generic \texttt{gen\_eval\_apply}
(Figure~\ref{fig:generic-eval-apply-3},
page~\pageref{fig:generic-eval-apply-3}) and its readback evaluator is a
fixed-point of generic \texttt{gen\_readback}
(Figure~\ref{fig:generic-readback-3}, page~\pageref{fig:generic-readback-3}).
The arbitrariness is achieved by considering some parameters as `free'
(existentially quantified in the sense of logic programming) by giving them a
well-typed \texttt{undefined} value. The undefined identifiers provide
genericity in the form of arbitrariness without the need for the higher-order
parametrisation of generic evaluators. The provisos and the LWF steps impose
constraints on the possible concrete values that can be passed to the
parameters. The result of the transformation is the one-step equivalent
\emph{plain} balanced hybrid eval-apply evaluator and a set of constraints
that capture the possible substitutions. The subsidiary and the hybrid
evaluators are fixed-points of \texttt{gen\_eval\_apply}. The following
sections elaborate.

\subsection{Plain but arbitrary eval-readback evaluators}
\label{sec:plain-arbitrary-eval-readback}
Take the plain \texttt{byValue} evaluator of
Figure~\ref{fig:generic-readback-3} (page~\pageref{fig:generic-readback-3}):
\begin{code}
  eval_readback_byValue :: Red
  eval_readback_byValue =  bodies <=< bv
    where bv     :: Red
          bv     =  gen_eval_apply \R{id} bv \E{bv} id \B{bv}
          bodies :: Red
          bodies =  gen_readback \R{(bodies <=< bv)} \B{bodies}
\end{code}
Both \texttt{bv} and \texttt{bodies} are locally-defined Haskell identifiers.
The first is a fixed-point of \texttt{gen\_eval\_apply} that satisfies the
provisos~\ref{prov:U-opone-optwo} and~\ref{prov:U-la-arone-artwo} for uniform
evaluators. The second is a fixed-point of \texttt{gen\_readback} that
satisfies the provisos~\ref{prov:ER-optwo} and~\ref{prov:ER-la-artwo} for
readback evaluators.

We generalise to a plain but arbitrary eval-readback evaluator that implements
a strategy $\stgy{xx}$ by using type-checked but undefined identifiers for the
\emph{actual} parameters of \texttt{gen\_eval\_apply} and
\texttt{gen\_readback} that are not fixed by the provisos. The undefined
identifiers are named by combining the parameter's name with a letter
\texttt{e} and \texttt{r} to distinguish between the eval and the readback
parameter. They stand for \emph{actual, not formal}, parameters.
\begin{code}
  eval_readback_xx :: Red
  eval_readback_xx =  reb_xx <=< evl_xx
    where evl_xx :: Red
          evl_xx =  gen_eval_apply \R{lae_xx} evl_xx \E{ar1e_xx} id \B{ar2e_xx}
          reb_xx :: Red
          reb_xx =  gen_readback \R{lar_xx} \B{ar2r_xx}

  \R{lae_xx}   :: Red
  \R{lae_xx}   =  undefined
  ...
  \B{ar2r_xx} :: Red
  \B{ar2r_xx} =  undefined
\end{code}
The original \texttt{eval\_readback\_byValue} evaluator is obtained by editing
the code and rewriting \texttt{\R{lae\_xx}} to \texttt{\R{id}},
\texttt{\R{lar\_xx}} to \texttt{\R{(reb\_xx <=< evl\_xx)}}, etc.

The provisos for the uniform evaluator and the readback evaluator constrain
the possible values of the actual parameters. Let us transcribe the provisos
in code:
\label{prov:U-ER-code}
\begin{enumerate}[label=(\textsc{u}$_{\arabic*}$)]
\item The second and fourth actual parameters of \texttt{gen\_eval\_apply} are
  respectively \texttt{evl\_xx} and \texttt{id}.

\item The identifiers \texttt{\R{lae\_xx}}, \texttt{\E{ar1e\_xx}}, and
  \texttt{\B{ar2e\_xx}} stand for either \texttt{eval\_xx} or \texttt{id}.
\end{enumerate}
\begin{enumerate}[label=(\textsc{er}$_\arabic*$)]
\item Function \texttt{gen\_readback} only takes two actual parameters:
  \texttt{\R{lar\_xx}} and \texttt{\B{ar2e\_xx}}.

\item In the other two parameters \texttt{\R{lar\_xx}} and \texttt{\B{ar2r\_xx}}:
  \begin{itemize}
  \item[(rb-ev)] At least one stands for \texttt{evl\_xx} where the homonymous
    parameter (\texttt{\R{lae\_xx}} or \texttt{\B{ar2e\_xx}}) stands for
    \texttt{id}.
  \item[(rb-rb)] At least one stands for \texttt{reb\_xx}.
  \end{itemize}
\end{enumerate}
Since \texttt{reb\_xx} can only be called on a parameter after
\texttt{evl\_xx} has been already called on the term affected by that
parameter, readback may call the identity, itself, or the eval-readback
composition as specified by the following table:
\begin{flushleft}
  \begin{tabular}{|l|l|}
    \hline
    Eval value & Readback value \\
    \hline
    \texttt{id} & \texttt{id} or \texttt{eval\_xx} or
    \texttt{reb\_xx}~\texttt{<=<}~\texttt{evl\_xx} \\
    \texttt{evl\_xx} & \texttt{id} or \texttt{reb\_xx} \\
    \hline
  \end{tabular}
\end{flushleft}
If one of the parameters has no call to \texttt{evl\_xx} (resp.\
\texttt{reb\_xx}) then the other parameter must contain a call to
\texttt{evl\_xx} (resp.\ \texttt{reb\_xx}). These constraints will be
elaborated during the LWF transformation.

\subsection{Plain but arbitrary balanced hybrid eval-apply evaluators}
\label{sec:plain-arbitrary-balanced-hybrid}
Take the plain balanced hybrid `strict normalisation' evaluator of
Figure~\ref{fig:generic-eval-apply-3}
(page~\pageref{fig:generic-eval-apply-3}):
\begin{code}
  bal_hybrid_sn :: Red
  bal_hybrid_sn =  sn
    where bv :: Red
          bv =  gen_eval_apply \R{id} bv \E{bv} id \B{bv}
          sn :: Red
          sn =  gen_eval_apply \R{sn} bv \E{bv} sn \B{sn}
\end{code}
Both \texttt{bv} and \texttt{sn} are locally-defined Haskell identifiers. The
first has the same code as in Section~\ref{sec:plain-arbitrary-eval-readback}.
The second is a fixed-point of \texttt{gen\_eval\_apply} that satisfies the
provisos~\ref{prov:H-opone-optwo}, \ref{prov:H-la-artwo}, and
\ref{prov:H-arone} for balanced hybrid evaluators.

We generalise to a plain but arbitrary balanced hybrid evaluator that
implements a strategy $\stgy{xx}$ by using type-checked undefined identifiers
as before:
\begin{code}
  bal_hybrid_xx  :: Red
  bal_hybrid_xx  =  hyb_xx
    where sub_xx :: Red
          sub_xx =  gen_eval_apply \R{las_xx} sub_xx \E{ar1s_xx} id     \B{ar2s_xx}
          hyb_xx :: Red
          hyb_xx =  gen_eval_apply \R{lah_xx} sub_xx \E{ar1s_xx} hyb_xx \B{ar2h_xx}

  \R{las_xx}  :: Red
  \R{las_xx}  =  undefined
  ...
  \B{ar2h_xx} :: Red
  \B{ar2h_xx} =  undefined
\end{code}
The undefined identifiers are named by combining the parameter's name with a
letter \texttt{s} and \texttt{h} to distinguish between the subsidiary and the
hybrid parameter.

Let us transcribe in code \label{prov:U-H-code} the provisos for the uniform
evaluator and the balanced hybrid evaluator:
\begin{enumerate}[label=(\textsc{u}$_{\arabic*}$)]
\item The second and fourth actual parameters of the first
  \texttt{gen\_eval\_apply} are respectively \texttt{sub\_xx} and \texttt{id}.

\item The identifiers \texttt{\R{las\_xx}}, \texttt{\E{ar1s\_xx}}, and
  \texttt{\B{ar2s\_xx}} stand for either \texttt{sub\_xx} or \texttt{id}.
\end{enumerate}
\begin{enumerate}[label=(\textsc{h}$_{\arabic*}$)]
\item The second and fourth actual parameters of the second
  \texttt{gen\_eval\_apply} are respectively \texttt{sub\_xx} and
  \texttt{hyb\_xx}.

\item \label{prov:H2-code} At least one of \texttt{\R{lah\_xx}} and
  \texttt{\B{ar2h\_xx}} stand for \texttt{hyb\_xx}, and one must stand for
  \texttt{sub\_xx} or \texttt{hyb\_xx} where the homonymous parameter
  (\texttt{\R{las\_xx}} or \texttt{\B{ar2s\_xx}}) stands for \texttt{id}.

\item The third actual parameter of both \texttt{gen\_eval\_apply} functions
  is \texttt{\E{ar1s\_xx}} which stands for either \texttt{id} or
  \texttt{sub\_xx}.
\end{enumerate}
Let us unpack \ref{prov:H2-code} focusing on \texttt{\R{lah\_xx}}. The same
unpacking applies to \texttt{\B{ar2h\_xx}}. Let \texttt{\R{lah\_xx}} stand for
\texttt{hyb\_xx}. If \texttt{\R{las\_xx}} stands for \texttt{sub\_xx} then
\texttt{\B{ar2h\_xx}} may either stand for \texttt{sub\_xx} or for
\texttt{hyb\_xx}. If \texttt{\R{las\_xx}} stands for \texttt{id} then
\texttt{\B{ar2h\_xx}} may stand for the same as \texttt{\B{ar2s\_xx}}, or for
\texttt{sub\_xx}, or for \texttt{hyb\_xx}. In sum, for \texttt{\R{lah\_xx}}
and \texttt{\B{ar2h\_xx}} the hybrid may call the identity, the subsidiary, or
itself as specified by the following table:
\begin{flushleft}
  \begin{tabular}{|l|l|}
    \hline
    Subsidiary value & Hybrid value \\
    \hline
    \texttt{id} & \texttt{id} or \texttt{sub\_xx} or
    \texttt{hyb\_xx} \\
    \texttt{sub\_xx} & \texttt{sub\_xx} or \texttt{hyb\_xx} \\
    \hline
  \end{tabular}
\end{flushleft}

\subsection{Statement of one-step equivalence}
\label{sec:one-step-equiv-thm}
In Section~\ref{sec:lwf-steps} we show the LWF transformation of
\verb!eval_readback_xx! into \verb!bal_hybrid_xx! when they uphold their
respective provisos in code stated in
Sections~\ref{sec:plain-arbitrary-eval-readback}
(page~\pageref{prov:U-ER-code}) and \ref{sec:plain-arbitrary-balanced-hybrid}
(page~\pageref{prov:U-H-code}). At the end of the transformation we collect
the possible substitutions for the undefined parameters in a list of
equations.  The equations are also a means to produce one evaluator style
given actual code for the other.
\begin{prop}
  \label{thm:one-step-equiv-thm}
  Let \verb!rb! \verb!<=<!\ \verb!ev! be a concrete eval-readback evaluator
  defined by concretising the arbitrary parameters of \verb!eval_readback_xx!
  such that \verb!ev! and \verb!rb!  uphold their respective provisos in code
  (page~\pageref{prov:U-ER-code}).  Let \verb!hy! be a concrete balanced
  hybrid eval-apply evaluator defined by concretising the arbitrary parameters
  of \verb!bal_hybrid_xx! such that \verb!su! and \verb!hy! uphold their
  respective provisos in code (page~\pageref{prov:U-H-code}).

  The evaluators \verb!rb! \verb!<=<!  \verb!ev! and \verb!hy! are one-step
  equivalent (modulo commuting redexes when $\mathtt{\B{ar2e\_xx}}$ is not the
  identity) when the following equations hold for the concrete values provided
  for the undefined parameters in the equations:
  \begin{displaymath}
    \begin{array}{rcl}
      \R{\mathtt{lae\_xx}} &=& \R{\mathtt{las\_xx}}  \\
      \E{\mathtt{ar1e\_xx}} &=& \E{\mathtt{ar1s\_xx}} \\
      \B{\mathtt{ar2e\_xx}} &=& \B{\mathtt{ar2s\_xx}} \\
      \R{\mathtt{lar\_xx}\ \textnormal{\texttt{<=<}}\ \mathtt{lae\_xx}}
         &=& \R{\mathtt{lah\_xx}} \\
      \B{\mathtt{ar2r\_xx}\ \textnormal{\texttt{<=<}}\ \mathtt{ar2e\_xx}}
         &=& \B{\mathtt{ar2h\_xx}}
    \end{array}
  \end{displaymath}
\end{prop}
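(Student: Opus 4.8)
The plan is to apply the LWF transformation of Appendix~\ref{app:LWF} directly to the composition \verb!reb_xx <=< evl_xx! and to read off, branch by branch, the equational constraints under which the fused function coincides with \verb!hyb_xx!. Since both \verb!evl_xx! and \verb!reb_xx! are fixed-points of the generic evaluators, LWF here amounts to unfolding the consumed fixed-point \verb!evl_xx!, pushing the readback continuation \verb!reb_xx! inside each arm of the resulting \verb!case!, rearranging by monad associativity, and recognising the reappearing composition \verb!reb_xx <=< evl_xx! as the recursive knot of a new fixed-point. First I would unfold \verb!(reb_xx <=< evl_xx) t! on the three shapes of \verb!t! imposed by \verb!gen_eval_apply!. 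The \verb!Var! case returns the variable and matches the identity arm of \verb!hyb_xx! with no constraint. For \verb!Lam x b!, \verb!evl_xx! yields \verb!Lam x b'! with \verb!b' <- lae_xx b!, and \verb!reb_xx! rebuilds it with \verb!b'' <- lar_xx b'!; by associativity the body pipeline is \verb!lar_xx <=< lae_xx!, which must equal the hybrid's \verb!lah_xx! (the fourth equation).

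For \verb!App m n! I would use associativity to hoist \verb!m' <- evl_xx m! to the front and then distribute \verb!reb_xx! into the two arms of the inner \verb!case m'!. In the redex arm (\verb!op2 = id! for the uniform eval, so the neutral join is trivial) the tail \verb!do t' <- evl_xx (subst n' x b); reb_xx t'! is again the composition \verb!reb_xx <=< evl_xx!, i.e.\ the fused function itself, so the arm becomes \verb!do n' <- ar1e_xx n; hyb_xx (subst n' x b)!; matching the \textsc{con} rule of \verb!hyb_xx! forces \verb!ar1e_xx = ar1s_xx! (the second equation). Comparing the uniform subsidiary \verb!sub_xx! and the eval \verb!evl_xx! — both fixed-points of \verb!gen_eval_apply! with \verb!op1! the self-call and \verb!op2 = id! by provisos~\ref{prov:U-opone-optwo} — the \verb!Lam! and redex branches together force \verb!lae_xx = las_xx!, \verb!ar1e_xx = ar1s_xx!, and \verb!ar2e_xx = ar2s_xx! (the first three equations), which is exactly the assertion \verb!su = ev!.

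The neutral arm is where I expect the real work. Here \verb!evl_xx m! returns a non-abstraction \verb!m'!, the operand is bound by \verb!n' <- ar2e_xx n!, and the pushed readback gives \verb!reb_xx (App m' n')!, which unfolds (using proviso~\ref{prov:ER-optwo}, that readback recurses on the operator) to \verb!do m'' <- reb_xx m'; n'' <- ar2r_xx n'; return (App m'' n'')!. The operand pipeline \verb!ar2r_xx <=< ar2e_xx! must match the hybrid's \verb!ar2h_xx! (the fifth equation). The operator is the subtle point and the main obstacle: the fused code runs \verb!evl_xx! then \verb!reb_xx! on \verb!m!, whereas the \textsc{neu} rule of \verb!hyb_xx! runs \verb!sub_xx! on \verb!m! and then recurses with \verb!hyb_xx!. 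To reconcile them I would invoke idempotence of a uniform evaluator on its own outputs: since \verb!m'! is already a result of \verb!evl_xx!, we have \verb!evl_xx m' = return m'!, hence \verb!hyb_xx m' = reb_xx (evl_xx m') = reb_xx m'!, so \verb!do m' <- evl_xx m; reb_xx m'! agrees with \verb!do m' <- sub_xx m; hyb_xx m'! once \verb!sub_xx = evl_xx!. This step also explains the ``modulo commuting redexes'' caveat: when \verb!ar2e_xx! is not the identity, the eval stage already evaluates operands of neutrals during \verb!evl_xx!, in its own spine order, before the operator is read back, whereas the hybrid interleaves operator and operand differently; because the operands $N_1,\dots,N_n$ of a neutral $x\,N_1\cdots N_n$ are pairwise non-overlapping (commuting) redexes, the big-step results coincide but the point of divergence may shift, exactly as described in Section~\ref{sec:contribs}.C6. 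When \verb!ar2e_xx! is the identity, all neutral-operand work is deferred to readback and the equivalence is exact.

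Finally I would assemble the fused function from the four matched branches and observe that, under the five displayed equations, it is literally the fixed-point of \verb!gen_eval_apply! with parameters \verb!lah_xx!, \verb!sub_xx!, \verb!ar1s_xx!, \verb!hyb_xx!, \verb!ar2h_xx!, that is, \verb!hyb_xx! itself; the hybrid provisos \ref{prov:H-opone-optwo}, \ref{prov:H-la-artwo}, \ref{prov:H-arone} and the readback provisos \ref{prov:ER-optwo}, \ref{prov:ER-la-artwo} guarantee that the constrained values are among the admissible ones. Conversely, each equation was \emph{forced} by the corresponding branch of the fusion, so the list of five equations is not merely sufficient but complete, establishing one-step equivalence (modulo commuting redexes precisely when \verb!ar2e_xx! is non-identity). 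The routine parts — the monad-associativity rewrites and the verification that the idempotence hypothesis holds for every uniform evaluator in the Beta Cube — I would relegate to a lemma rather than grind through inline.
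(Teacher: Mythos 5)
Your proposal is correct and follows essentially the same route as the paper's proof: LWF fusion of \verb!reb_xx <=< evl_xx! by unfolding the eval fixed-point, pushing readback into each branch, using idempotence of the uniform eval on its own outputs to reconcile the operator pipeline in the neutral arm (the paper's replacement of \verb!id m'! by \verb!evl_xx m'!), attributing the modulo-commuting-redexes caveat to the reordering of operator and operand evaluation in neutrals (the paper's explicit swap in Step~4), and recognising the reappearing composition as the fixed-point of \verb!gen_eval_apply! yielding the five equations. The only presentational difference is that the paper carries this out as a verbatim sequence of named intermediate programs (\verb!eval_readback_xx1! through \verb!xx9!) rather than a branch-by-branch matching argument, but the key ideas and the resulting constraints are identical.
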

\begin{figure}[htb]
  \centering
  \small
  \begin{displaymath}
    \begin{array}{lrcl}
      \textup{\bf Normal order:} \quad & (\RE)(\RE)\circ\stgy{III} & &
                                     \HIH\hyb\stgy{III} \\
      &\R{\texttt{id}}  &=& \R{\texttt{id}} \\
      &\E{\texttt{id}}  &=& \E{\texttt{id}} \\
      &\B{\texttt{id}}  &=& \B{\texttt{id}} \\
      &\R{\texttt{(reb <=< evl) <=< id}} &=& \R{\texttt{hyb}} \\
      &\B{\texttt{(reb <=< evl) <=< id}} &=& \B{\texttt{hyb}} \\

      \multicolumn{4}{l}{%
        (\RE)(\RE)\circ\stgy{III}\ \textup{is one-step equivalent to}\
         \HIH\hyb\stgy{III}\ \textup{as}\ (\RE)\ \textup{after}\ \I\
       \textup{is equal to}\ \stgy{H}.} \\

      \\
      \textup{\bf Hybrid normal order:} \quad & \Rb(\RE)\circ\stgy{SII} & &
                                            \HIH\hyb\stgy{SII} \\
      & \R{\texttt{evl}}                  &=& \R{\texttt{sub}} \\
      & \E{\texttt{id}}                   &=& \E{\texttt{id}}  \\
      & \B{\texttt{id}}                   &=& \B{\texttt{id}}  \\
      & \R{\texttt{reb <=< evl}}          &=& \R{\texttt{hyb}} \\
      & \B{\texttt{(reb <=< evl) <=< id}} &=& \B{\texttt{hyb}} \\

      \multicolumn{4}{l}{%
        \Rb(\RE)\circ\stgy{SII}\ \textup{is one-step equivalent to}\
         \HIH\hyb\stgy{SII}\  \textup{as both}\ \stgy{R}\ \textup{after}\
         \stgy{S}\ \textup{and}\ \RE\ \textup{after}\ \I} \\
      \multicolumn{4}{l}{\textup{are equal to}\ \stgy{H}.} \\

      \\

      \textup{\bf Head reduction:} \quad & (\RE)\stgy{I}\circ\stgy{III} & &
                                       \HII\hyb\stgy{III} \\
      &\R{\texttt{id}}  &=& \R{\texttt{id}}\\
      &\E{\texttt{id}}  &=& \E{\texttt{id}}\\
      &\B{\texttt{id}}  &=& \B{\texttt{id}}\\
      &\R{\texttt{reb <=< evl}} &=& \R{\texttt{hyb}} \\
      &\B{\texttt{id <=< id}}   &=& \B{\texttt{id}} \\

      \multicolumn{4}{l}{%
        (\RE)\I \circ
        \stgy{III}\ \textup{is one-step equivalent to}\ \HII\hyb\stgy{III}\
        \textup{as}\ \RE\ \textup{is equal to}\ \stgy{H}.} \\

      \\

      \textup{\bf Strict normalisation:} \quad & (\RE)\Rb\circ\stgy{ISS} & &
                                             \HSH\hyb\stgy{ISS}\\
      &\R{\texttt{id}}  &=& \R{\texttt{id}} \\
      &\E{\texttt{evl}}  &=& \E{\texttt{sub}} \\
      &\B{\texttt{evl}}  &=& \B{\texttt{sub}} \\
      &\R{\texttt{(reb <=< evl) <=< id}}  &=& \R{\texttt{hyb}} \\
      &\B{\texttt{reb <=< evl}}           &=& \B{\texttt{hyb}} \\

      \multicolumn{4}{l}{%
        (\RE)\stgy{R}\circ\stgy{ISS}\ \textup{is one-step equivalent to}\
        \HSH\hyb\stgy{ISS}\ \textup{as both}\ \RE\ \textup{after}\ \I\
        \textup{and}\ \Rb\ \textup{after}\ \stgy{S}} \\
      \multicolumn{4}{l}{\textup{are equal to}\ \stgy{H}.} \\

      \\

      \textup{\bf Ahead machine:} \quad & (\RE)\stgy{I}\circ\stgy{ISS} & &
                                      \HSS\hyb\stgy{ISS} \\
      &\R{\texttt{id}} &=& \R{\texttt{id}}  \\
      &\E{\texttt{evl}} &=& \E{\texttt{sub}} \\
      &\B{\texttt{evl}} &=& \B{\texttt{sub}} \\
      &\R{\texttt{(reb <=< evl) <=< id}} &=& \R{\texttt{hyb}} \\
      &\B{\texttt{id <=< evl}}           &=& \B{\texttt{sub}} \\

      \multicolumn{4}{l}{%
        (\RE)\I\circ\stgy{ISS}\ \textup{is one-step equivalent to}\
         \HSS\hyb\stgy{ISS}\ \textup{as}\ \RE\ \textup{after}\ \I\
         \textup{is equal to}\ \stgy{H},\ \textup{and}} \\
      \multicolumn{4}{l}{\I\ \textup{after}\ \stgy{S}\
        \textup{is equal to}\ \stgy{S}.}
    \end{array}
  \end{displaymath}
  \caption{Equation instantiations for some example strategies.}
  \label{fig:inst-equations}
\end{figure}
Figure~\ref{fig:inst-equations} shows the instantiations of the equations, and
thus the evaluator equivalences, for some example strategies.
\begin{cor}
  $\mathtt{rb}$ \verb!<=<! $\mathtt{ev}$ $\mathtt{=}$ $\mathtt{hy}$.
\end{cor}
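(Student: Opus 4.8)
The plan is to obtain the equation as a direct consequence of Proposition~\ref{thm:one-step-equiv-thm}: the corollary asserts nothing more than \emph{big-step} equivalence of the two evaluators, and big-step equivalence is the weaker of the two notions secured by the proposition, since contracting the same redexes in the same order (one-step equivalence) a fortiori yields the same delivered result. First I would invoke the proposition under the five displayed equations to obtain that \verb!rb <=< ev! and \verb!hy! contract the same redexes of any input, either in identical order (when $\mathtt{\B{ar2e\_xx}}$ is the identity) or up to reordering of disjoint redexes inside the operands of a neutral (when $\mathtt{\B{ar2e\_xx}}$ is not the identity, per Section~\ref{sec:contribs}.C6). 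I would then split on this dichotomy.

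In the genuinely one-step-equivalent case the two evaluation traces coincide step for step, so on every input both evaluators either converge on the same normal form or both loop, each delivering $\bot$ under a strict monad instance; hence they agree everywhere, which is exactly the claimed equality of functions of type \verb!Term -> m Term!. In the modulo-commuting case the traces can differ only in the order in which certain non-overlapping redexes $N_i$ of a neutral $x\,N_1\cdots N_n$ are contracted. Since these redexes sit at disjoint positions they commute (the parallel-moves property for non-overlapping redexes), so convergent runs land on the same normal form; and since reaching a normal form requires fully evaluating every $N_i$, a run diverges under one ordering iff it diverges under the other, both then delivering $\bot$ and differing only in \emph{where} $\bot$ is reached, exactly as anticipated in Section~\ref{sec:contribs}.C6. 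Thus the functional identity holds in this case as well.

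The hard part will be the modulo-commuting case: one must argue rigorously that permuting disjoint redexes can neither change the converged result---this needs commutation of non-overlapping redexes---nor turn divergence into convergence or vice-versa---this relies on the fact that both evaluators must visit every operand of a neutral to produce a normal form. With those two facts in place the equality \verb!rb <=< ev = hy! is immediate; in the exact case it is essentially trivial given the proposition. Equivalently, since the proposition's LWF proof rewrites \verb!rb <=< ev! into \verb!hy! by semantics-preserving steps, the two evaluators compute the same function by construction, and the redex-order argument above merely makes this explicit.
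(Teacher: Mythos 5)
Your proposal is correct and takes essentially the same route as the paper, which in fact records no proof for this corollary at all: it is stated as an immediate consequence of Proposition~\ref{thm:one-step-equiv-thm}, on the grounds that one-step equivalence (even modulo commuting redexes) entails big-step equivalence, i.e.\ equality of $\rb \circ \ev$ and $\hy$ as functions. Where you go beyond the paper is in actually discharging the modulo-commuting case rather than taking it on faith: Section~\ref{sec:contribs}.C6 merely \emph{asserts} that equivalence modulo commuting redexes preserves the final result and at most shifts the point of divergence, whereas you supply the two facts needed to justify this --- commutation of non-overlapping redexes in the operands $N_i$ of a neutral for convergent runs, and the observation that both evaluators must visit every such operand to reach their final form, so divergence is preserved in both directions. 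That is a small but genuine addition of rigour over what the paper writes down, and nothing in it conflicts with the paper's argument; the only caveat worth flagging is that literal equality of the Haskell values in an effectful monad instance is a slightly stronger reading than the extensional ``same final term or both $\bot$'' equality you establish, a distinction the paper itself glosses over.
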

\begin{cor}
  The implemented strategies uphold $\ev = \su$ and $\rb \circ \ev = \hy$.
\end{cor}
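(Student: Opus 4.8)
The plan is to derive this corollary directly from Proposition~\ref{thm:one-step-equiv-thm} and the preceding corollary, after checking that each concretely implemented strategy does satisfy the five equations those results require. First I would dispatch the equality $\ev = \su$, which needs no program transformation at all. Both $\ev$ (the identifier \texttt{evl\_xx}) and $\su$ (the identifier \texttt{sub\_xx}) are defined as fixed-points of \texttt{gen\_eval\_apply}; by provisos~\ref{prov:U-opone-optwo} the fourth actual argument of each is \texttt{id} and the second is the self-reference through which the fixed-point is taken. The first three equations of Proposition~\ref{thm:one-step-equiv-thm}, namely $\R{\mathtt{lae\_xx}} = \R{\mathtt{las\_xx}}$, $\E{\mathtt{ar1e\_xx}} = \E{\mathtt{ar1s\_xx}}$ and $\B{\mathtt{ar2e\_xx}} = \B{\mathtt{ar2s\_xx}}$, make the remaining three arguments coincide as well. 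Hence the two defining functionals are identical and, by uniqueness of the fixed-point, $\ev = \su$. The companion equality $\rb \circ \ev = \hy$ is then exactly the preceding corollary, $\mathtt{rb}$ \verb!<=<! $\mathtt{ev}$ $=$ $\mathtt{hy}$, rewritten with $\circ$ for monadic composition.

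This reduces the corollary to a single residual obligation: every implemented strategy must actually satisfy all five equations, so that Proposition~\ref{thm:one-step-equiv-thm} and its corollary genuinely apply. I would discharge this strategy by strategy. For a given concrete pair I would replace the undefined identifiers of \texttt{eval\_readback\_xx} and \texttt{bal\_hybrid\_xx} by the concrete values read off from the systematic encodings of Sections~\ref{sec:regimentation-eval-readback} and~\ref{sec:hy-systematic}, and then verify each equation by unfolding the Kleisli compositions. The four composition identities that make this mechanical are: $(\RE)$ applied on a premise where eval was $\I$ equals the hybrid call $\stgy{H}$; a bare $\Rb$ applied on a premise where eval was $\Su$ also equals $\stgy{H}$; $\I$ after $\Su$ remains $\Su$; and $\I$ after $\I$ remains $\I$. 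These are precisely the admissible entries of the two tables in Sections~\ref{sec:plain-arbitrary-eval-readback} and~\ref{sec:plain-arbitrary-balanced-hybrid}, and they let the last two equations $\R{\mathtt{lar\_xx}} \circ \R{\mathtt{lae\_xx}} = \R{\mathtt{lah\_xx}}$ and $\B{\mathtt{ar2r\_xx}} \circ \B{\mathtt{ar2e\_xx}} = \B{\mathtt{ar2h\_xx}}$ be checked by inspection of the triples. Figure~\ref{fig:inst-equations} carries out this instantiation in full for normal order, hybrid normal order, head reduction, strict normalisation and the ahead machine; the remaining surveyed strategies follow the same pattern.

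The work here is bookkeeping rather than conceptual, and the main pitfall I would guard against is orienting the compositions correctly: $\RE$ must be read as readback \emph{after} eval, and on any premise where eval has already contracted redexes the readback must contribute a \emph{bare} $\Rb$ rather than a full $(\RE)$, since re-running eval would over-reduce. This is exactly the discipline enforced by the eval-readback provisos~\ref{prov:ER-optwo} and~\ref{prov:ER-la-artwo}, and it is what makes each undefined parameter receive a unique admissible value, so that the equations are well-posed for every implemented strategy. Finally I would stress that the two equalities asserted by the corollary are exact big-step equalities of evaluators: the \emph{modulo commuting redexes} qualification inherited from Proposition~\ref{thm:one-step-equiv-thm} concerns only the one-step order in which commuting redexes of neutral operands are contracted (relevant when $\B{\mathtt{ar2e\_xx}}$ is not the identity), and since both sides agree on the final result on convergent terms and both diverge on divergent terms, it does not disturb $\ev = \su$ or $\rb \circ \ev = \hy$.
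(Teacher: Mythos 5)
Your proposal is correct and takes essentially the same route as the paper, which states this corollary without an explicit proof for exactly the reasons you spell out: the first three equations force \texttt{evl\_xx} and \texttt{sub\_xx} to be fixed points of the \emph{same} instantiation of \texttt{gen\_eval\_apply} (hence $\ev = \su$), while $\rb \circ \ev = \hy$ is just the preceding corollary restated, with the concrete parameter values checked per strategy. Your strategy-by-strategy verification via the composition identities is precisely the bookkeeping the paper delegates to Figure~\ref{fig:inst-equations}.
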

\begin{cor}
  $\hy \circ \ev = \hy$ \quad $(\hy$ absorbs $\ev)$
\end{cor}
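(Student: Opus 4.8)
The plan is to obtain the absorption identity $\hy \circ \ev = \hy$ directly from the two preceding corollaries plus a single new ingredient, namely the idempotence of the uniform eval stage $\ev$. The first corollary already supplies the big-step function equality $\hy = \rb \circ \ev$; the residual modulo-commuting-redexes caveat of the main proposition is harmless here, since differing only on the point of divergence still leaves both sides with the same big-step behaviour (the same final form when convergent, and $\bot$ when divergent). I would therefore rewrite the left-hand side using associativity of composition,
\[
  \hy \circ \ev = (\rb \circ \ev) \circ \ev = \rb \circ (\ev \circ \ev),
\]
so that it suffices to establish $\ev \circ \ev = \ev$; then $\rb \circ (\ev \circ \ev) = \rb \circ \ev = \hy$ closes the argument.

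The crux is thus the lemma $\ev \circ \ev = \ev$. Because $\ev$ is a \emph{uniform} evaluator (a fixed-point of \texttt{gen\_eval\_apply} satisfying the uniform provisos), it delivers one of the characteristic final forms of the Beta Cube --- $\WNF$, $\WHNF$, or $\HNF$ according to its triple --- and I would show by structural induction on that final form that $\ev$ acts as the identity on it. An abstraction in the relevant weak or head final form is returned untouched by the \textsc{abs} rule; a bare variable by the \textsc{var} case; and a neutral $x\,N_1 \cdots N_n$, whose operators are not abstractions and whose operands already lie in the appropriate subform, is reassembled verbatim because each operand is fixed by the induction hypothesis and no redex is exposed. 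Hence $\ev(\ev(t)) = \ev(t)$ on every convergent input. On divergent inputs the identity follows from strictness of evaluators, $\ev\,\bot = \bot$: if $\ev(t) = \bot$ then $\ev(\ev(t)) = \ev(\bot) = \bot = \ev(t)$. Combining the two regimes gives $\ev \circ \ev = \ev$ as functions.

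I expect this idempotence lemma to be the only genuine obstacle, and specifically the need to phrase it relative to the \emph{particular} final form determined by $\ev$'s triple, since the eval stages occurring across the examples differ ($\bv$ reaching $\WNF$, $\he$ reaching $\HNF$, $\bn$ reaching $\WHNF$). The convergent case is routine once one observes that a uniform strategy's final form contains no redex $\ev$ would contract, so its traversal reconstructs the term; the divergent case is the only place where the partiality setting could intrude, but it is tamed by the already-recorded strictness of every evaluator on its term argument. With the lemma in hand the corollary is immediate, and it simultaneously yields $\hy \circ \su = \hy$ through the equality $\ev = \su$ of the second corollary, matching the absorption claim anticipated in Section~\ref{sec:contribs}.C6.
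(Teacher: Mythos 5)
Your proposal follows essentially the same route as the paper's own proof: substitute $\rb \circ \ev$ for $\hy$, apply associativity of composition, and reduce the claim to the idempotence $\ev \circ \ev = \ev$. The only difference is that the paper simply cites $\ev \circ \ev = \ev$ as a justification step in its equational calculation, whereas you supply the (correct) supporting argument that a uniform evaluator is the identity on its own final forms and strict on $\bot$.
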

\begin{proof}
    \begin{align*}
      &\hy \circ \ev = \hy \\
      &\mathord{\Leftrightarrow}\
      \{\ \rb \circ \ev = \hy\ \} \\
      &(\rb \circ \ev) \circ \ev = \rb \circ \ev \\
      &\mathord{\Leftrightarrow}\ \{\ \mathrm{associativity}\ \} \\
      &\rb \circ (\ev \circ \ev) = \rb \circ \ev \\
      &\mathord{\Leftrightarrow}\ \{\ \ev \circ \ev = \ev\ \} \\
      &\rb \circ \ev = \rb \circ \ev \\
      &\mathord{\Leftrightarrow}\ \{\ \mathrm{reflexivity}\ \}\\
      &\mathrm{true} \qedhere
    \end{align*}
\end{proof}

\subsection{Proof by LWF transformation}
\label{sec:lwf-steps}
Starting with \texttt{eval\_readback\_xx} we fuse \texttt{evl\_xx} and
\texttt{reb\_xx} into a single function to arrive at \texttt{bal\_hybrid\_xx}
according to the following LWF-steps \cite{OS07}.
\begin{enumerate}
\newpage
\item Start with \texttt{eval\_readback\_xx}:
\begin{code}
  eval_readback_xx :: Red
  eval_readback_xx =  reb_xx <=< evl_xx
    where evl_xx :: Red
          evl_xx =  gen_eval_apply \R{lae_xx} evl_xx \E{ar1e_xx} id \B{ar2e_xx}
          reb_xx :: Red
          reb_xx =  gen_readback \R{lar_xx} \B{ar2r_xx}
\end{code}

\item Inline \texttt{evl\_xx} in
  \texttt{reb\_xx}~\texttt{<=<}~\texttt{evl\_xx} and beta-convert the
  composition attending to the cases of an input term \texttt{t}. The result
  is \texttt{eval\_readback\_xx1} below where the calls to \texttt{reb\_xx}
  are all in tail position:
  \begin{code}
  eval_readback_xx1 :: Red
  eval_readback_xx1 t = case t of
    v@(Var _) -> return v
    (Lam s b) -> do b' <- \R{lae_xx} b
                    l  <- return (Lam s b')
                    reb_xx l
    (App m n) -> do m' <- evl_xx m
                    case m' of (Lam s b) -> do n' <- \E{ar1e_xx} n
                                               s  <- evl_xx (subst n' s b)
                                               reb_xx s
                               _         -> do m'' <- id m'
                                               n'  <- \B{ar2e_xx} n
                                               a   <- return (App m'' n')
                                               reb_xx a
    where evl_xx :: Red
          evl_xx =  gen_eval_apply \R{lae_xx} evl_xx \E{ar1e_xx} id \B{ar2e_xx}
          reb_xx :: Red
          reb_xx =  gen_readback \R{lar_xx} \B{ar2r_xx}
  \end{code}

\item Inline the first and last occurrences of \texttt{reb\_xx} in
  \texttt{eval\_readback\_xx1}. The result is \texttt{eval\_readback\_xx2}
  below:
  \begin{code}
  eval_readback_xx2 :: Red
  eval_readback_xx2 t = case t of
    v@(Var _) -> return v
    (Lam s b) -> do b'  <- \R{lae_xx} b
                    b'' <- \R{lar_xx} b'
                    return (Lam s b'')
    (App m n) -> do m' <- evl_xx m
                    case m' of (Lam s b) -> do n' <- \E{ar1e_xx} n
                                               s  <- evl_xx (subst n' s b)
                                               reb_xx s
                               _         -> do m''  <- id m'
                                               n'   <- \B{ar2e_xx} n
                                               m''' <- reb_xx m''
                                               n''  <- \B{ar2r_xx} n'
                                               return (App m''' n'')
    where evl_xx :: Red
          evl_xx =  gen_eval_apply \R{lae_xx} evl_xx \E{ar1e_xx} id \B{ar2e_xx}
          reb_xx :: Red
          reb_xx =  gen_readback \R{lar_xx} \B{ar2r_xx}
  \end{code}
\newpage
\item \label{lwf:step-swap} In \texttt{eval\_readback\_xx2}, swap the lines
  \begin{code}
    n'   <- \B{ar2e\_xx} n
    m''' <- reb\_xx m''
  \end{code}
  \vspace{-10pt} This swapping is required by the LWF transformation and
  alters the evaluation order in neutrals: the operator is evaluated entirely
  and then the operand entirely, instead of alternatingly. One-step
  equivalence is preserved except for strategies where \texttt{\B{ar2e\_xx}}
  is not the identity. In that case one-step equivalence is modulo commuting
  redexes. We identify such strategies in Section~\ref{sec:conclusions}.

  After swapping, replace the binding \mbox{\texttt{m'' <- id m'}} by the
  binding \mbox{\texttt{m'' <- evl\_xx m'}} which has no effect because
  \texttt{m'} is the term delivered by \texttt{evl\_xx} on the neutral's
  operator \texttt{m}, and certainly \texttt{id}~$\circ$~\texttt{evl\_xx} =
  \texttt{evl\_xx}~$\circ$~\texttt{evl\_xx}.

  Finally, simplify the \texttt{do} notation by applying the right identity of
  monadic bind and also Kleisli composition. The result is
  \texttt{eval\_readback\_xx3} below:
  \begin{code}
  eval_readback_xx3 :: Red
  eval_readback_xx3 t = case t of
    v@(Var _) -> return v
    (Lam s b) -> do b' <- \R{(lar_xx <=< lae_xx)} b
                    return (Lam s b')
    (App m n) -> do m' <- evl_xx m
                    case m' of
                      (Lam s b) -> do n' <- \E{ar1e_xx} n
                                      (reb_xx <=< evl_xx) (subst n' s b)
                      _         -> do m''  <- (reb_xx <=< evl_xx) m'
                                      n'   <- \B{(ar2r_xx <=< ar2e_xx)} n
                                      return (App m'' n')
    where evl_xx :: Red
          evl_xx =  gen_eval_apply \R{lae_xx} evl_xx \E{ar1e_xx} id \B{ar2e_xx}
          reb_xx :: Red
          reb_xx =  gen_readback \R{lar_xx} \B{ar2r_xx}
  \end{code}

\item Extrude the body of \texttt{eval\_readback\_xx3} by adding a new binding
  named \texttt{template} that takes three parameters \texttt{\R{la}},
  \texttt{\E{ar1}} and \texttt{\B{ar2}}. The result is
  \texttt{eval\_readback\_xx4} below:
  \begin{code}
  eval_readback_xx4 :: Red
  eval_readback_xx4 =
    template \R{(lar_xx <=< lae_xx)} \E{ar1e_xx} \B{(ar2r_xx <=< ar2e_xx)}
    where
      evl_xx :: Red
      evl_xx =  gen_eval_apply \R{lae_xx} evl_xx \E{ar1e_xx} id \B{ar2e_xx}
      reb_xx :: Red
      reb_xx =  gen_readback \R{lar_xx} \B{ar2r_xx}
      template :: Red -> Red -> Red -> Red
      template \R{la} \E{ar1} \B{ar2} t = case t of
        v@(Var _) -> return v
        (Lam s b) -> do b' <- \R{la} b
                        return (Lam s b')
        (App m n) -> do m' <- evl_xx m
                        case m' of
                          (Lam s b) -> do n' <- \E{ar1} n
                                          (reb_xx <=< evl_xx) (subst n' s b)
                          _         -> do m''  <- (reb_xx <=< evl_xx) m'
                                          n'   <- \B{ar2} n
                                          return (App m'' n')
  \end{code}

\item Replace the two calls to \texttt{(reb\_xx <=< evl\_xx)} in
  \texttt{template} by a call to a new binding \texttt{this} that is bound to
  the top-level call to \texttt{template}. The result is
  \texttt{eval\_readback\_xx5} below:
  \begin{code}
  eval_readback_xx5 :: Red
  eval_readback_xx5 =  this
    where evl_xx :: Red
          evl_xx =  gen_eval_apply \R{lae_xx} evl_xx \E{ar1e_xx} id \B{ar2e_xx}
          reb_xx :: Red
          reb_xx =  gen_readback \R{lar_xx} \B{ar2r_xx}
          template :: Red -> Red -> Red -> Red
          template \R{la} \E{ar1} \B{ar2} t = case t of
            v@(Var _) -> return v
            (Lam s b) -> do b' <- \R{la} b
                            return (Lam s b')
            (App m n) -> do m' <- evl_xx m
                            case m' of
                              (Lam s b) -> do n' <- \E{ar1} n
                                              this (subst n' s b)
                              _         -> do m''  <- this m'
                                              n'   <- \B{ar2} n
                                              return (App m'' n')
          this :: Red
          this =  template \R{(lar_xx <=< lae_xx)} \E{ar1e_xx} \B{(ar2r_xx <=< ar2e_xx)}
  \end{code}

\item Promote \texttt{template} to a global function that keeps the local
  bindings for \texttt{evl\_xx}, \texttt{reb\_xx}, and \texttt{this}. The
  result is \texttt{eval\_readback\_xx6} below:
  \begin{code}
  template :: Red -> Red -> Red -> Red
  template \R{la} \E{ar1} \B{ar2} t = case t of
    v@(Var _) -> return v
    (Lam s b) -> do b' <- \R{la} b
                    return (Lam s b')
    (App m n) -> do m' <- evl_xx m
                    case m' of (Lam s b) -> do n' <- \E{ar1} n
                                               this (subst n' s b)
                               _         -> do m''  <- this m'
                                               n'   <- \B{ar2} n
                                               return (App m'' n')
    where evl_xx :: Red
          evl_xx =  gen_eval_apply \R{lae_xx} evl_xx \E{ar1e_xx} id \B{ar2e_xx}
          reb_xx :: Red
          reb_xx =  gen_readback \R{lar_xx} \B{ar2r_xx}
          this   :: Red
          this   =  template \R{(lar_xx <=< lae_xx)} \E{ar1e_xx} \B{(ar2r_xx <=< ar2e_xx)}

  eval_readback_xx6 :: Red
  eval_readback_xx6 =  this
    where evl_xx :: Red
          evl_xx =  gen_eval_apply \R{lae_xx} evl_xx \E{ar1e_xx} id \B{ar2e_xx}
          reb_xx :: Red
          reb_xx =  gen_readback \R{lar_xx} \B{ar2r_xx}
          this   :: Red
          this   =  template \R{(lar_xx <=< lae_xx)} \E{ar1e_xx} \B{(ar2r_xx <=< ar2e_xx)}
  \end{code}

\item Add formal parameters \texttt{op1} and \texttt{op2} to \texttt{template}
  so that it has the same parameters as \texttt{gen\_eval\_apply}, and rename
  \texttt{template} to \texttt{template1}. In the call to \texttt{template1}
  instantiate \texttt{op1} to \texttt{evl\_xx} and \texttt{op2} to
  \texttt{this}. Remove from \texttt{template1} the local bindings in the
  \texttt{where} clause which are now unused. The result is
  \texttt{eval\_readback\_xx7} below:
  \begin{code}
  template1 :: Red -> Red -> Red -> Red -> Red -> Red
  template1 \R{la} op1 \E{ar1} op2 \B{ar2} t = case t of
    v@(Var _) -> return v
    (Lam s b) -> do b' <- \R{la} b
                    return (Lam s b')
    (App m n) -> do m' <- op1 m
                    case m' of (Lam s b) -> do n' <- \E{ar1} n
                                               this (subst n' s b)
                               _         -> do m''  <- op2 m'
                                               n'   <- \B{ar2} n
                                               return (App m'' n')
    where this :: Red
          this =  template1 \R{la} op1 \E{ar1} op2 \B{ar2}

  eval_readback_xx7 :: Red
  eval_readback_xx7 =  this
    where
      evl_xx :: Red
      evl_xx =  gen_eval_apply \R{lae_xx} evl_xx \E{ar1e_xx} id \B{ar2e_xx}
      reb_xx :: Red
      reb_xx =  gen_readback \R{lar_xx} \B{ar2r_xx}
      this   :: Red
      this   =
        template1
          \R{(lar_xx <=< lae_xx)} evl_xx \E{ar1e_xx} this \B{(ar2r_xx <=< ar2e_xx)}
  \end{code}

\item Replace \texttt{template1} by \texttt{gen\_eval\_apply} because they
  have identical definitions. The result is \texttt{eval\_readback\_xx8}
  below:
  \begin{code}
  eval_readback_xx8 :: Red
  eval_readback_xx8 =  this
    where
      evl_xx :: Red
      evl_xx =  gen_eval_apply \R{lae_xx} evl_xx \E{ar1e_xx} id \B{ar2e_xx}
      reb_xx :: Red
      reb_xx =  gen_readback \R{lar_xx} \B{ar2r_xx}
      this   :: Red
      this   =
        gen_eval_apply
          \R{(lar_xx <=< lae_xx)} evl_xx \E{ar1e_xx} this \B{(ar2r_xx <=< ar2e_xx)}
  \end{code}
  The provisos in code for \texttt{eval\_readback\_xx}
  (page~\pageref{prov:U-ER-code}) constrain the possible values for
  \texttt{\R{lar\_xx}} which can only stand for:
  \begin{enumerate}
  \item The identity \texttt{id}. In this case, the parameter
    \R{\texttt{(lar\_xx <=< lae\_xx)}} either stands for \R{\texttt{(id <=<
        id)}} or for \R{\texttt{(id <=< evl\_xx)}}, each equivalent
    respectively to \R{\texttt{id}} and to \R{\texttt{evl\_xx}.}

  \item The value \texttt{reb\_xx} if \texttt{\R{lae\_xx}} stands for
    \texttt{evl\_xx}, or the value \texttt{(reb\_xx <=< evl\_xx)} if
    \texttt{\R{lae\_xx}} stands for \texttt{id}. Then, the parameter
    \R{\texttt{(lar\_xx <=< lae\_xx)}} stands for \texttt{(reb\_xx <=<
      evl\_xx)} or for \texttt{((reb\_xx <=< evl\_xx) <=< id)}, both
    equivalent to \texttt{(reb\_xx <=< evl\_xx)} and thus equivalent to a
    recursive call to \texttt{this}.
  \end{enumerate}
  It is therefore safe to replace \R{\texttt{lar\_xx}} in
  \texttt{eval\_readback\_xx8} by a new parameter \R{\texttt{laer\_xx}} that
  stands for
  \begin{itemize}
  \item \texttt{id} when both \R{\texttt{lar\_xx}} and \R{\texttt{lae\_xx}}
    are \texttt{id},
  \item \texttt{evl\_xx} when \R{\texttt{lar\_xx}} is \texttt{id} and
    \R{\texttt{lae\_xx}} is \texttt{evl\_xx},
  \item \texttt{this} when \R{\texttt{lar\_xx}} is either \texttt{reb\_xx} or
    \texttt{(reb\_xx <=< evl\_xx)},
  \end{itemize}
  which meets the provisos for \texttt{bal\_hybrid\_xx}
  (page~\pageref{prov:U-H-code}).

  The same discussion applies to \B{\texttt{(ar2r\_xx <=< ar2e\_xx)}}. It is
  therefore also safe to replace \B{\texttt{ar2e\_xx}} by a new parameter
  \B{\texttt{ar2er\_xx}}. \label{lwf:template-to-wtr}

\item By the discussion on the parameters in Step~(\ref{lwf:template-to-wtr}),
  replace \R{\texttt{lar\_xx}} and \B{\texttt{ar2r\_xx}} with
  \R{\texttt{laer\_xx}} and \B{\texttt{ar2er\_xx}} respectively. Remove the
  binding for \texttt{reb\_xx} which is now unused. The result is
  \texttt{eval\_readback\_xx9} below.
  \begin{code}
  eval_readback_xx9 :: Red
  eval_readback_xx9 =  this
    where evl_xx :: Red
          evl_xx =  gen_eval_apply \R{lae_xx}  evl_xx \E{ar1e_xx} id   \B{ar2e_xx}
          this   :: Red
          this   =  gen_eval_apply \R{laer_xx} evl_xx \E{ar1e_xx} this \B{ar2er_xx}
  \end{code}
  Compare it to \texttt{bal\_hybrid\_xx}. We copy-paste the definition here
  for convenience:
  \begin{code}
    bal_hybrid_xx  :: Red
    bal_hybrid_xx  =  hyb_xx
      where sub_xx :: Red
            sub_xx =  gen_eval_apply \R{las_xx} sub_xx \E{ar1s_xx} id     \B{ar2s_xx}
            hyb_xx :: Red
            hyb_xx =  gen_eval_apply \R{lah_xx} sub_xx \E{ar1s_xx} hyb_xx \B{ar2h_xx}
  \end{code}
  The code for \texttt{eval\_readback\_xx9} and \texttt{bal\_hybrid\_xx}
  coincide when:
  \begin{itemize}
  \item The identifiers \E{\texttt{ar1e\_xx}} and \E{\texttt{ar1s\_xx}} both
    stand for \texttt{id}, or for \texttt{eval\_xx}, the latter the same
    evaluator defined by the hybrid's subsidiary evaluator \texttt{sub\_xx}.

  \item Each of the remaining eval-readback parameters (\R{\texttt{lae\_xx}},
    \B{\texttt{ar2e\_xx}}, \R{\texttt{laer\_xx}} and \B{\texttt{ar2er\_xx}})
    has the same value as its homonymous hybrid parameters
    (\R{\texttt{las\_xx}}, \B{\texttt{ar2s\_xx}}, \R{\texttt{lah\_xx}} and
    \B{\texttt{ar2h\_xx}}) as explained in
    Step~(\ref{lwf:template-to-wtr}).

  \item The local variables \texttt{sub\_xx} and \texttt{hyb\_xx} are
    substituted respectively for \texttt{eval\_xx} and \texttt{this}. This is
    a mere syntactic substitution.
  \end{itemize}
  \label{lwf:cosmetic}
\end{enumerate}
The possible substitutions are collected in the following equations:
\begin{displaymath}
  \begin{array}{rcl}
    \R{\mathtt{lae\_xx}} &=& \R{\mathtt{las\_xx}} \\
    \E{\mathtt{ar1e\_xx}} &=& \E{\mathtt{ar1s\_xx}} \\
    \B{\mathtt{ar2e\_xx}} &=& \B{\mathtt{ar2s\_xx}} \\
    \R{\mathtt{lar\_xx}}\ \R{\textnormal{\texttt{<=<}}}\ \R{\mathtt{lae\_xx}}
      &=& \R{\mathtt{lah\_xx}} \\
    \B{\mathtt{ar2r\_xx}}\ \B{\textnormal{\texttt{<=<}}}\ \B{\mathtt{ar2e\_xx}}
      &=& \B{\mathtt{ar2h\_xx}}
  \end{array}
\end{displaymath}
The equations are not a part of the LWF steps, they collect the required
substitutions of arbitrary parameters in Step~\ref{lwf:template-to-wtr}. When
instantiating the arbitrary evaluators with concrete parameters, the equations
must be checked considering \texttt{reb\_xx~<=<~evl\_xx} and \texttt{hyb\_xx}
are one-step equivalent modulo commuting redexes (recall the examples in
Figure~\ref{fig:inst-equations}).

\section{Summary and conclusions}
\label{sec:conclusions}
Figure~\ref{fig:summary-end-figure-eval-apply} (page
\pageref{fig:summary-end-figure-eval-apply}) provides an extensive list of
uniform and hybrid eval-apply evaluators.
Figure~\ref{fig:summary-end-figure-eval-readback}
(page~\pageref{fig:summary-end-figure-eval-readback}) provides an extensive
list of eval-readback evaluators and their LWF-equivalent balanced hybrid
eval-apply evaluators where `mcr' abbreviates `modulo commuting redexes'.

\begin{figure}[htb]
\centering  \scriptsize
\begin{displaymath}
    \begin{array}{llllll}
      \text{Name} & \text{Systematic} & \text{Equivalent} &
         \text{Strictness} & \text{Classif.} & \text{Result} \\
      \hline \hline
      \text{Call-by-name}~(\bn) & \stgy{III} & & \text{non-strict} & \text{uniform}
                   & \WHNF \\
      & \stgy{IIS} & \IIH\hyb\III & \text{non-strict} & \text{uniform}
                   & \WNF \\
      \text{Head spine}~(\he) & \stgy{SII} & & \text{non-strict} & \text{uniform}
                   & \HNF \\
      & \stgy{SIS} & & \text{non-strict} & \text{uniform}
                   & \NF \\
      \hline
      & \stgy{ISI} & & \text{strict} & \text{uniform}
                   & \WHNF \\
      \text{Call-by-value}~(\bv) & \stgy{ISS} & & \text{strict} & \text{uniform}
                   & \WNF \\
      \text{Head applicative order}~(\ho) & \stgy{SSI} & & \text{strict}
                   & \text{uniform} & \HNF \\
      \text{Applicative order}~(\ao) & \stgy{SSS} & & \text{strict} & \text{uniform}
                   & \NF \\
      \hline \hline
      & \IIH\hyb\III & \IIS & \text{non-strict} &\text{uniform}
                   & \WNF \\
      & \SIH\hyb\III & & \text{non-strict} & \text{hybrid bal.}
                   & \WNF \\
      \text{Head reduction}~(\hr) & \HII\hyb\III & & \text{non-strict}
                   & \text{hybrid bal.} & \HNF \\
      & \HIS\hyb\III & & \text{non-strict} & \text{hybrid bal.}
                   & \NT{VHNF} \\
      \text{Normal order}~(\no) & \HIH\hyb\III & \HIH\hyb\IIS~\text{(mcr)}
                   & \text{non-strict} & \text{hybrid bal.}
                   & \NF \\
      \hline
      & \SIH\hyb\IIS & & \text{non-strict} & \text{hybrid bal.}
                   & \WNF \\
      & \HIS\hyb\IIS & & \text{non-strict} & \text{hybrid bal.}
                   & \NT{VHNF} \\
      & \HIH\hyb\IIS & \no~\text{(mcr)} & \text{non-strict} & \text{hybrid bal.}
                   & \NF \\
      \hline
      & \SIH\hyb\SII &
                   & \text{non-strict} & \text{hybrid bal.}
                   & \WNF \\
      & \HIS\hyb\SII & & \text{non-strict} & \text{hybrid bal.}
                   & \HNF \\
      \text{Hybrid normal order}~(\hn) & \HIH\hyb\SII & & \text{non-strict} &
                   \text{hybrid bal.}
                   & \NF \\
      \hline \hline
      & \ISH\hyb\ISI &  & \text{strict} & \text{hybrid bal.}
                   & \WNF \\
      & \SSH\hyb\ISI &
                   & \text{strict} & \text{hybrid bal.}
                   & \WNF \\
      & \HSI\hyb\ISI & & \text{strict} & \text{hybrid bal.}
                   & \HNF \\
      & \HSS\hyb\ISI & & \text{strict} & \text{hybrid bal.}
                   & \NT{VHNF} \\
      & \HSH\hyb\ISI & & \text{strict} & \text{hybrid bal.}
                   & \NF \\
      \hline
      & \SSH\hyb\ISS & & \text{strict} & \text{hybrid bal.}
                   & \WNF \\
      \text{Ahead machine}~(\am) & \HSS\hyb\ISS &
                   & \text{strict} & \text{hybrid bal.}
                   & \NT{VHNF} \\
      \text{Strict normalisation}~(\sn) & \HSH\hyb\ISS &
                   & \text{strict} & \text{hybrid bal.}
                   & \NF \\
      \hline
       & \SSH\hyb\SSI &
                   & \text{strict} & \text{hybrid bal.}
                   & \WNF \\
      & \HSS\hyb\SSI & & \text{strict} & \text{hybrid bal.}
                   & \HNF \\
      \text{Balanced spine applicative order}~(\bs) & \HSH\hyb\SSI
                   & & \text{strict} & \text{hybrid bal.}
                   & \NF \\
      \hline \hline
      & \IHH\hyb\ISI & & \text{strict} & \text{hybrid}
                   & \WNF \\
      & \SHH\hyb\ISI &
                   & \text{strict} & \text{hybrid}
                   & \WNF \\
      & \HHI\hyb\ISI & & \text{strict} & \text{hybrid}
                   & \HNF \\
      & \HHS\hyb\ISI &
                   & \text{strict} & \text{hybrid}
                   & \NT{VHNF} \\
      & \HHH\hyb\ISI &
                   & \text{strict} & \text{hybrid}
                   & \NF \\
      \hline
      & \SHH\hyb\ISS & & \text{strict} & \text{hybrid}
                   & \WNF \\
      & \HHS\hyb\ISS &
                   & \text{strict} & \text{hybrid}
                   & \NT{VHNF} \\
      \text{Hybrid applicative order}~(\ha) & \HHH\hyb\ISS &
                   & \text{strict} & \text{hybrid}
                   & \NF \\
      \hline
      & \SHH\hyb\SSI &
                   & \text{strict} & \text{hybrid}
                   & \WNF \\
      & \HHS\hyb\SSI & & \text{strict} & \text{hybrid}
                   & \HNF \\
      \text{Spine applicative order}~(\so) & \HHH\hyb\SSI &
                   & \text{strict} & \text{hybrid}
                   & \NF \\
      \hline\hline
  \end{array}
\end{displaymath}
\caption{List of uniform and hybrid eval-apply evaluators.}
\label{fig:summary-end-figure-eval-apply}
\end{figure}

\begin{figure}[htbp]
\centering  \scriptsize
\begin{displaymath}
    \begin{array}[t]{lllll}
      \text{Name} & \text{Systematic} & \text{LWF-equivalent eval-apply} &
                   \text{Intermediate} & \text{Result}\\
      \hline \hline
      & \I(\Rb\Ev)\circ\III & \IIH\hyb\III\ (=\IIS) & \WHNF & \WNF \\
      & \Ev(\Rb\Ev)\circ\III & \SIH\hyb\III & \WHNF & \WNF \\
      \text{Head reduction} & (\Rb\Ev)\I\circ\III & \hr & \WHNF & \HNF \\
      & (\Rb\Ev)\Ev\circ\III & \HIS\hyb\III & \WHNF & \NT{VHNF}\\
      \text{Normal order} & (\Rb\Ev)(\Rb\Ev)\circ\III & \no & \WHNF & \NF \\
      \hline
      & \Ev\Rb\circ\IIS & \SIH\hyb\IIS~\text{(mcr)} & \WNF & \WNF \\
      & (\Rb\Ev)\I\circ\IIS & \HIS\hyb\IIS~\text{(mcr)} & \WNF & \NT{VHNF} \\
      & (\Rb\Ev)\Rb\circ\IIS & \HIH\hyb\IIS~\text{(mcr)} & \WNF & \NF \\
      \hline
      & \I(\Rb\Ev)\circ\SII & \SIH\hyb\SII & \HNF & \WNF \\
      & \Rb\Ev\circ\SII & \HIS\hyb\SII & \HNF & \HNF \\
      \texttt{byName} & \Rb(\Rb\Ev)\circ\SII & \hn & \HNF & \NF \\
      \hline \hline
      & \I(\Rb\Ev)\circ\ISI & \ISH\hyb\ISI & \WHNF & \WNF \\
      & \Ev(\Rb\Ev)\circ\ISI & \SSH\hyb\ISI & \WHNF & \WNF \\
      & (\Rb\Ev)\I\circ\ISI & \HSI\hyb\ISI & \WHNF & \HNF \\
      & (\Rb\Ev)\Ev\circ\ISI & \HSS\hyb\ISI & \WHNF & \NT{VHNF} \\
      & (\Rb\Ev)(\Rb\Ev)\circ\ISI & \HSH\hyb\ISI & \WHNF & \NF \\
      \hline
      & \Ev\Rb\circ\ISS & \SSH\hyb\ISS~\text{(mcr)} & \WNF & \WNF \\
      \text{Ahead machine} & (\Rb\Ev)\I\circ\ISS & \am~\text{(mcr)} & \WNF & \NT{VHNF} \\
      \texttt{byValue} & (\Rb\Ev)\Rb\circ\ISS & \sn~\text{(mcr)} & \WNF & \NF \\
      \hline
      & \I(\Rb\Ev)\circ\SSI & \SSH\hyb\SSI & \HNF & \WNF \\
      & \Rb\Ev\circ\SSI & \HSS\hyb\SSI & \HNF & \HNF \\
      \text{Balanced spine applicative order} & \Rb(\Rb\Ev)\circ\SSI & \bs
        & \HNF & \NF \\
      \hline \hline
    \end{array}
  \end{displaymath}
  \caption{List of eval-readback evaluators and their one-step equivalent
    eval-apply evaluators.}
\label{fig:summary-end-figure-eval-readback}
\end{figure}

There are $8$ uniform eval-apply evaluators and $33$ (four times more) hybrid
eval-apply evaluators. Of the latter, one evaluator defines a uniform strategy
($\IIH\hyb\bn = \IIS$), and two evaluators ($\no$ and
$\stgy{HIH}\hyb\stgy{IIS}$) define equivalent strategies modulo commuting
redexes (mcr). We find $4$ uniform evaluators/strategies in the literature
($\bn$, $\he$, $\bv$, $\ao$). The remaining $4$ are novel ($\ho$, $\IIS$,
$\SIS$, $\ISI$) where the first two have interesting properties and uses. We
find $6$ hybrid evaluators in the literature that define $6$ hybrid strategies
($\hr$, $\no$, $\hn$, $\am$, $\sn$, $\ha$). The remaining $26$ evaluators are
novel. Some define strategies with interesting properties and uses. In
particular, $\bs$ and $\so$ can be used for eager evaluation of general
recursive functions using delimited CPS and thunking protecting-by-variable.
The other evaluators and strategies remain to be explored.

There are $22$ eval-readback evaluators that are LWF-equivalent to $22$
balanced hybrid eval-apply evaluators, $6$ of them mcr. (There are more hybrid
evaluators than eval-readback evaluators.) The equivalences mcr inform the
equivalences mcr within the eval-readback style. For instance,
$(\Rb\Ev)\Rb\circ\IIS$ is equivalent mcr to $\HIH\hyb\IIS$ and the latter to
$\no$, thus $(\Rb\Ev)\Rb\circ\IIS$ is equivalent mcr to
$(\Rb\Ev)(\Rb\Ev)\circ\bn$. One eval-readback evaluator $\I(\Rb\Ev)\circ\bn$
defines the uniform strategy $\IIS$. We find $4$ eval-readback evaluators in
the literature (head reduction, normal order, \texttt{byName} / hybrid normal
order, \texttt{byValue} / strong reduction). We have not found any of the
remaining $18$ eval-readback evaluators in the literature, but we have not
searched exhaustively.

Figure~\ref{fig:tarpit} illustrates the position of our one-step equivalence
proof within the wider context of correspondences between operational
semantics devices. The top diagram in the figure is based on the diagrams in
\cite[p.\,545--547]{DM09} and shows the addition of the space of big-step
eval-readback evaluators that LWF-transform into the subspace of big-step
balanced hybrid eval-apply evaluators. The so-called `functional
correspondence' \cite{Rey72,ABDM03,ADM05,DM09,DJZ11} begins at the space of
direct-style big-step eval-apply evaluators (called `reduction-free
normalisers' in \cite{DM09,Dan05}) which can be transformed into abstract
machines by CPS transformation and defunctionalisation steps. Both steps are
invertible by refunctionalisation and direct-style transformation.  The
abstract machines are transformed (compiled) to unstructured programs using
\GOTO\ statements to implement the control structures of the structured
abstract machines.

\begin{figure}
  \scriptsize
  \begin{center}
    \begin{tikzpicture}
    \node[draw,circle,minimum size=2.2cm,very thick] (evalreadback) {};
    \node[below of=evalreadback,node distance=43,%
      text width=1.9cm,align=center]{eval-readback evaluators};

    \node[draw,circle,xshift=3.2cm,minimum size=2.2cm,fill=lightgray]
    (evalapply) {};
    \node[below of=evalapply,node distance=43,text width=1.9cm,align=center]
    {eval-apply evaluators};

    \node[draw,dashed,circle,xshift=3.2cm,minimum size=1.5cm,text width=1.1cm,
    align=center, very thick, fill=white] (balanced) {balanced hybrid};

    \draw[-stealth,very thick]
    (tangent cs:node=evalreadback,point={(balanced.south)},solution=2) --
    (tangent cs:node=balanced,point={(evalreadback.south)});
    \draw[-stealth,very thick]
    (tangent cs:node=evalreadback,point={(balanced.north)},solution=1) --
    (tangent cs:node=balanced,point={(evalreadback.north)},solution=2)
    node[pos=0.36,above,rotate=-6.2] {{\tiny LWF transformation}};

    \node[draw,circle,xshift=6.4cm,minimum size=2.2cm] (cps) {};
    \node[below of=cps,node distance=43,text width=1.9cm,align=center]
    {evaluators in\\CPS};

    \node[draw,dashed,circle,xshift=6.4cm,minimum size=1.5cm, align=center]
    (cpssmall) {};

    \draw[-stealth]
    (tangent cs:node=evalapply,point={(cpssmall.south)},solution=2) --
    (tangent cs:node=cpssmall,point={(evalapply.south)});
    \draw[-stealth]
    (tangent cs:node=evalapply,point={(cpssmall.north)},solution=1) --
    (tangent cs:node=cpssmall,point={(evalapply.north)},solution=2)
    node[pos=0.36,above,rotate=-6.2] {{\tiny CPS transformation}};

    \node[draw,circle,xshift=9.6cm,minimum size=2.2cm] (abstractmachines) {};
    \node[below of=abstractmachines,node distance=43,text width=5cm,align=center]
    {abstract machines\\(iterative structured programs)};

    \node[draw,dashed,circle,xshift=9.6cm,minimum size=1.5cm, align=center]
    (abstractsmall) {};

    \draw[-stealth]
    (tangent cs:node=cps,point={(abstractsmall.south)},solution=2) --
    (tangent cs:node=abstractsmall,point={(cps.south)});
     \draw[-stealth]
    (tangent cs:node=cps,point={(abstractsmall.north)},solution=1) --
    (tangent cs:node=abstractsmall,point={(cps.north)},solution=2)
    node[pos=0.39,above,rotate=-6.2] {{\tiny defunctionalisation}};

    \node[draw,circle,xshift=12.8cm,minimum size=2.2cm] (unstructured) {};
    \node[below of=unstructured,node distance=43,text width=1.9cm,align=center]
    {unstructured programs};

    \node[draw,dashed,circle,xshift=12.8cm,minimum size=1.5cm, align=center]
    (unstructuredsmall) {};

    \draw[-stealth]
    (tangent cs:node=abstractmachines,point={(unstructuredsmall.south)},solution=2) --
    (tangent cs:node=unstructuredsmall,point={(abstractmachines.south)});
    \draw[-stealth]
    (tangent cs:node=abstractmachines,point={(unstructuredsmall.north)},solution=1) --
    (tangent cs:node=unstructuredsmall,point={(abstractmachines.north)},solution=2)
    node[pos=0.45,above,rotate=-6.2] {{\tiny compiler}};
  \end{tikzpicture}

  \vspace{1cm}

  \begin{tikzpicture}
      \matrix (m) [matrix of nodes, row sep=2em, column sep=2.6cm,]
      {
        \parbox{5.7em}{\centering \textbf{Big-step} Evaluators \\
          \color{dimgray}{Natural Semantics}} &
        \parbox{4.3em}{\centering Abstract Machines \\
          \color{dimgray}{State trans.}} &
        \parbox{6.4em}{\centering Small-step Evaluators \\
          \color{dimgray}{Reduction Semantics}} &
        \parbox{5em}{\centering Search Functions \\
          \color{dimgray}{SOS}\\\color{white}{padding}}  \\
      };
      \draw[implies-implies,double]  (m-1-1) -- node[above] {Functional Corresp.}
      (m-1-2);

      \draw[-implies,double]
      (m-1-3) -- node[above] {Syntactic Corresp.} (m-1-2);

      \draw[-implies,double]
      (m-1-4) -- node[above] {} (m-1-3);
    \end{tikzpicture}
\end{center}
\caption{\emph{Top:} Functional correspondence diagram based on
  \cite[p.\,547]{DM09} and prefixed with the LWF transformation of
  eval-readback evaluators into balanced hybrid eval-apply evaluators.
  \emph{Bottom:} Correspondence diagram between operational semantics devices
  in \cite[p.\,177]{GPN14}. The one-step equivalence proof presented in this
  paper extends the functional correspondence to the left for big-step
  eval-readback evaluators.}
  \label{fig:tarpit}
\end{figure}

The functional correspondence applies to arbitrary direct-style eval-apply
evaluators, including uniform and unbalanced hybrid evaluators, which are
located within the grey ring. The transformations in the diagram inject a
source space into a target subspace, \eg\ not every evaluator in CPS
functionally-corresponds with a source direct-style eval-apply evaluator, but
those that do can be transformed back and forth.

As elegantly argued in \cite[pp.\,545--547]{DM09}, straying from the image of
the transformation (staying within the rings of more expressive power) with
good reasons is a clear indication that a useful control structure or operator
may be missing in the source space. For instance, programmers write evaluators
in CPS with non-tail calls `thereby introducing the notion of delimited
control [\ldots]; this effect can be obtained in direct style [in the source
space] by adding [\ldots] delimited-control operators [\ldots]'
\cite[p.\,546]{DM09}. Similarly in the space of structured programs, the case
against the \GOTO\ statement in \cite{Dij68} is not that it is unconditionally
harmful but that `straying with good reason is a clear indication that a
useful control construct is missing in the source language. For example, C and
Pascal programmers condone the use of GOTO for error cases because these
languages lack an exception mechanism' \cite[p.\,546]{DM09}.

There may be good reasons for straying from the image of the LWF
transformation in the case of unbalanced hybrid strategies, such as hybrid
applicative order ($\ha$) and spine applicative order ($\so$). Readback is a
selective-iteration-of-eval function. An unbalanced strategy requires more
work than eval on operands of redexes beyond distributing eval on abstractions
and neutrals. There may be a missing control structure $X$ that enables the
equivalence between eval-readback-with-$X$ and arbitrary hybrid eval-apply
evaluators.

The bottom diagram in Figure~\ref{fig:tarpit} shows a wider correspondence
diagram between operational semantics devices \cite[p.\,177]{GPN14}. The
transformation steps employed by the correspondences are semantics-preserving
and the implementation of the devices in code define the same evaluation
strategy. The figure shows on the left the functional correspondence zoomed
out. On the right it shows the so-called `syntactic correspondence' to
abstract machines from small-step evaluators that implement context-based
reduction semantics \cite{Dan05,BD07,DM08,DJZ11}. The small-step evaluator
(called a `reduction-based normaliser' in \cite{Dan05}) is a loop that
iterates the following steps: (i) the decomposition of the term into a unique
context with the permissible redex within the hole, (ii) the contraction, and
(iii) the recomposition of the contractum into a term. The evaluator
transforms to an abstract machine by the following non-invertible steps:
refocusing (which optimises the iteration loop), LWF to fuse decomposition and
recomposition, and inlining of iteration. Notice the different use of LWF to
which we have already pointed to in the introduction.  There is a further
unnamed transformation from small-step structural operational semantics (SOS)
to small-step context-based reduction semantics witnessed by the `search
function' that implements the compatibility rules of the SOS that traverse a
term to locate the redex. An implementation of a search function transforms to
the small-step evaluator by the following non-invertible steps:
CPS-transformation, simplification, defunctionalisation, and turning the
search function into a decomposition function which, additionally to the next
redex, delivers the context. The simplification and defunctionalisation of the
search function reveals the continuation stack, which is not revealed by a
whole implementation of a SOS that searches the input term, contracts the
redex, and delivers the next reduct.

Syntactic correspondences have been shown for specific uniform and hybrid
strategies, weak-reducing and full-reducing, and connected through the
functional correspondence with big-step eval-apply evaluators
\cite{Mun08,GPN13,DKM13,GNMN13,DJ13,GPN14,BCZ17}. The further connection with
eval-readback evaluators is provided by our one-step equivalence proof. The
eval-readback evaluators are obtained from balanced hybrid eval-apply
evaluators by instantiating the equations
(Proposition~\ref{thm:one-step-equiv-thm}).

\section{Related work}
\label{sec:related-work}
We have discussed related work throughout the paper, especially in the
introduction and Sections~\ref{sec:prelim:cbv}, \ref{sec:strategies},
\ref{sec:uniform-vs-hybrid}, and Appendix~\ref{app:lambda-value}. Big-step
evaluators are ubiquitous in the literature on lambda calculus and functional
programming.  We have not found higher-order generic evaluators like those of
Section~\ref{sec:generic-reducer} in the literature, but we believe they must
have been considered before.

The survey and structuring of the strategy space presented in this paper
significantly emend and amend the seminal survey and structuring presented in
\cite{GNG10,GP14} which in turn extend the survey and structuring in
\cite{Ses02}. The LWF transformation presented in this paper on plain
arbitrary evaluators that instantiate generic evaluators is simpler and more
direct than the LWF transformation in \cite{GP14} on generic evaluators that
work with boolean triples and the hybridisation operation ($\hyb$).

A recent work \cite{BCD22} presents a strategy space (called a zoo) that
includes call-by-name, call-by-value (call-by-$\WNF$ and lambda-value's), head
and full-reducing versions, and non-deterministic and right-to-left
versions. The work focuses on small-step context-based reduction semantics to
define strategies, using defunctionalised continuations that include contexts
and other useful information. It connects the uniform-style property (grammars
with one non-terminal, \ie\ one continuation stack) with determinism, and it
structures the space by building strategies from (or decomposing strategies
into) combinations of basic fine-grained sets of contexts (weak contexts, head
contexts, left or right contexts, etc.) and subsets of specific terms
(intermediate and final results). A hybrid strategy as presented in this paper
is such that it has contexts in one (maybe more) of its building fine-grained
contexts not closed under inclusion. The idea of combining fine-grained
contexts is related to the notion of dependence \cite{GPN14}. Despite the
differences between our approach and \cite{BCD22}, both find and structure
many identical left-to-right deterministic strategies: all the well-known
strategies and the novel uniform ones presented in \cite{GNG10,GP14}, \eg\
$\IIS$ and $\ISI$ are respectively $\mathit{low}$ and $\mathit{cbwh}$ in
\cite{BCD22}. The zoo does not consider many hybrid strategies ($\am$, $\ha$,
$\so$, $\bs$, and the unnamed hybrids in
Figure~\ref{fig:summary-end-figure-eval-apply}), but both strategy spaces
might be unified or superseded by extending the catalogue of basic
fine-grained contexts with those required to define, through their
combination, the full hybrid space. We take this as a confirmation that the
notion of uniform/hybrid strategy is a fruitful structuring criterion.

Hybrid strategies abound. They are essential to completeness and `completeness
for' in the non-strict and strict spaces. For the non-strict strategies, the
underlying explanatory technical notion is `needed reduction'
(Appendix~\ref{app:completeness-leftmost-spine}) which can be extended to
`completeness for' other kinds of final result than normal forms. We summarise
the completeness for some uniform and balanced hybrid strategies in
Figure~\ref{fig:summary-survey} (page~\pageref{fig:summary-survey}):
\begin{itemize}
\item $\bn$, $\IIS$, $\hr$, and $\no$ are respectively complete for $\WHNF$,
  $\WNF$, $\HNF$, and $\NF$ by the Quasi-Leftmost Reduction Theorem
  (Appendix~\ref{app:completeness-leftmost-spine}).

\item $\he$ and $\hn$ are respectively complete for $\HNF$ and $\NF$, with
  redexes in $(\lambda x.\HNF)\Lambda$ by the Needed Reduction Theorem
  (Appendix~\ref{app:completeness-leftmost-spine}).\footnote{By the properties
    of $\HNF$, it turns out that restricting the redexes to $(\lambda
    x.\HNF)\Lambda$ does not alter the completeness of $\he$ and $\hn$ with
    unrestricted redexes.}
\end{itemize}
Completeness results can be given for strict strategies (which are incomplete
in the pure lambda calculus) using notions of reduction other than the
$\beta$-rule, namely, using calculi with notions of reduction with permissible
redexes where the operand and/or the operator has to be in some final
form. Concretely, we conjecture the following:
\begin{itemize}
\item $\bv$, $\am$, and $\sn$ are respectively complete for $\WNF$,
  $\NT{VHNF}$, and $\NF$ with redexes in $(\lambda x.\Lambda)\WNF$.

\item $\ho$ and $\bs$ are respectively complete for $\HNF$ and $\NF$ with
  redexes in $(\lambda x.\HNF)\HNF$.
\end{itemize}
The conjectures rest on these principles:
\begin{enumerate}
\item The strategies above are outermost when taking into account the
  corresponding notion of permissible redex, \ie\ an outermost non-permissible
  redex may become the next permissible redex to be contracted iff contracting
  some inner permissible redex makes the outer one a permissible redex.

\item All the notions of permissible redexes are stable under contraction,
  \ie\ a permissible redex remains so when any of its permissible sub-redexes
  is contracted.

\item Contraction of permissible redexes is confluent, \ie\ given a term with
  two permissible redexes, the terms resulting after the contraction of the
  two permissible redexes in different order reduce to a common term.
\end{enumerate}
We finally discuss the connection between hybridity, outermost,
completeness-for, `refocusing', and `backward overlaps' \cite{DJ13}. A set of
reduction rules is backward overlapping if a subterm of the left hand side of
one rule unifies with the right hand side of the same or a different rule.
Intuitively, this means that a potential redex (the term that will unify with
the left hand side of the first rule) will turn into an actual redex after
contracting some sub-redex (the redex whose contractum is the right hand side
of the second rule) of the former one. As pointed out in \cite{DJ13}, in the
presence of backward overlaps the `refocusing' techniques that deforest the
recomposition and decomposition of a contractum in a context-based reduction
semantics may miss the outer redex when this context-based reduction semantics
implements an outermost strategy (\ie\ refocusing will carry on decomposing
the term at the scope of the inner redex, instead of recomposing the term up
to the scope of the outer redex). In order to prevent this, \cite{DJ13}
propose a backtracking step that will recompose the term just enough to reach
the scope of the outer redex.

All the strategies discussed in the paragraph above about `completeness-for'
are outermost, and the notions of reduction induced by the corresponding
permissible redexes are backward overlapping. (An outer non-permissible redex
may turn into a permissible redex by contracting some inner permissible
redex.) Some of the strategies are uniform, because their permissible redexes
do not contain any permissible sub-redexes to be contracted according to the
strategy, and the problem of missing an outer redex by decomposing at the
scope of the inner redex never occurs. This is easy to see in the non-strict
case, where the operator $(\lambda x.\Lambda)$ in $\WHNF$ is irreducible for
$\bn$ and $\IIS$, and the operator $(\lambda x.\HNF)$ in $\HNF$ is irreducible
for $\he$, and where the operands of a permissible redex are never reduced.
It can also be seen in the strict case, where the operator $(\lambda
x.\Lambda)$ in $\WNF$ and the operand in $\WNF$ are irreducible for $\bv$, and
where the operator $(\lambda x.\HNF)$ in $\HNF$ and the operand in $\HNF$ are
irreducible for $\ho$. The remaining outermost and complete-for strategies are
inexorably hybrid because their permissible redexes may contain permissible
sub-redexes. In this case, a less-reducing subsidiary strategy must be called
at the scope of the potential redex (at the operator in the non-strict case,
and at the operator and the operand in the strict case) in order to find the
inner redexes that will turn the outermost potential redex into an actual one,
but without reducing any further. In \cite{GPN13}, a refocusing solution based
on the shape invariant of the continuation stack used in the decomposition
step is shown to be applicable to such hybrid strategies. This solution is
also applicable to the outermost strategy presented in \cite{DJ13}, which is
hybrid.

\section{Future work}
\label{sec:future-work}
We list a few of many possibilities for future work. To analyse the strategies
that stray from the image of the LWF transformation and their possible
relation with control operators mentioned in the conclusion. To investigate
one-step equivalence when the subsidiary evaluator implements a shallow hybrid
strategy (Section~\ref{sec:hy-systematic}). To investigate the relationship
between, on the one side, non-strict leftmost-outermost and spine strategies,
and on the other side, strict leftmost-innermost and spine-ish strategies,
especially in the wider context of `needed reduction'
(Appendix~\ref{app:completeness-leftmost-spine}).  To (dis)prove the duality
conjecture (Section~\ref{sec:summary-survey}). To transcribe the structuring
of the strategy space to the search-function side of the syntactic
correspondence in order to derive plain but arbitrary small-step evaluators
and abstract machines that connect to the functional correspondence. To
investigate the connection with the small-step framework in \cite{BCD22}. To
mechanise the LWF proof involving constraints on parameters. To develop the
functional correspondence from parametric eval-apply evaluators into
parametric abstract machines. To apply the principles and criteria presented
in this paper to structuring the strategy space of calculi with explicit
substitutions.

\section*{Acknowledgements}
Our interest in strategies and evaluators was stimulated by the textbooks and
papers cited in the introduction. The following were particularly influential:
\cite[Chp.\,9]{FH88}, \cite[Chp.\,13]{Rea89}, \cite[Chps.\,11--12]{Mac90}, and
\cite{McC60,Lan64,Rey72,Plo75,Ses02,GL02,ABDM03}. We thank Emilio J.\ Gallego
Arias for contributing to \cite{GNG10}. We thank the following people for
their feedback and encouragement at talks and presentations on the survey and
structuring material in \cite{GNG10,GP14}: Olivier Danvy, Jan Midtgaard, Ian
Zerny, Peter Sestoft, Ruy Ley-Wild, Noam Zeilberger, Richard Bird, Jeremy
Gibbons, Ralf Hinze, Geraint Jones, Bruno Oliveira, Meng Wang, and Nicolas
Wu. We specially thank Jeremy Gibbons for permission to name the Beta Cube
after him. We thank Tomasz Drab for a discussion on the strategy space in
\cite{BCD22} and the strategy space presented in this paper. We finally thank
the three anonymous reviewers of \emph{LMCS} for accepting to review such a
long paper and for their insightful criticism and suggestions which have led
us to make important corrections, clarifications, and additions.

\appendix
\section{Pure lambda calculus concepts and notation}
\label{app:prelim:lambda}
The set $\Lambda$ of pure lambda calculus terms consists of variables,
abstractions, and applications. Uppercase, possibly primed, letters (\eg~$B$,
$M$, $N$, $M'$, $M''$, etc.) range over $\Lambda$.  Lowercase, also possibly
primed, letters (\eg~$x$, $y$, $z$, $x'$, $x''$, etc.) range over a
countably-infinite set of variables $\Var$. Lowercase variables act also as
\emph{meta-variables} ranging over $\Var$ when used in term forms. For
example, the concrete term $\lambda y.zy$ with parameter $y$ is of the form
$\lambda x.B$ where $x$ is a meta-variable. We say $x$ is the parameter and
$B$ is the \emph{body}. A \emph{free variable} is not a parameter of an
abstraction.  Terms with at least one free variable are \emph{open
  terms}. Terms without free variables are \emph{closed terms}. In an
application $MN$ we say $M$ is the \emph{operator} and $N$ is the
\emph{operand}.  The application operation is denoted by tiny whitespace or by
juxtaposition.  Abstractions associate to the right. Applications associate to
the left and bind stronger than abstractions. A few example terms: $x$, $xy$,
$xx$, $\lambda x.x$, $\lambda x.\lambda y.yx$, $x(\lambda x.x)$, $xy(\lambda
x.x)$, $(\lambda y.y)(xy)$, and $(\lambda x.y)(xyz)(\lambda x.\lambda
y.xy)$. Uppercase boldface letters stand for concrete terms. For example,
$\CH{I}$ abbreviates the identity term $\lambda x.x$, and $\CH{Y}$ abbreviates
the non-strict fixed-point combinator $\lambda f.(\lambda x.f(xx))(\lambda
x.f(xx))$. For more fixed-point combinators and their explanation see
\cite[p.\,130]{Ros50}, \cite[p.\,178]{CF58}, \cite[p.\,36]{HS08}, and
\cite[p.\,131]{Bar84}.

Every term of $\Lambda$ has the form $\lambda x_1 \ldots \lambda x_n.H N_1
\cdots N_m$ with $n,m\geq0$. The head term $H$ is either a variable (the `head
variable') or a redex (the `head redex'). A \emph{redex} is an application of
the form $(\lambda x.B)N$. We write an arrow `$\rel$' to denote one-step
beta-reduction. We write $\cas{N}{x}{B}$ for the \emph{capture-avoiding
  substitution} of the operand $N$ for the free occurrences (if any) of
parameter $x$ in the abstraction body $B$. The resulting term is called a
\emph{contractum}. The function notation $\cas{N}{x}{B}$ that wraps the
subject of the substitution $B$ in parentheses comes from \cite{Plo75}.  This
notation is itself based on the original notation $[N/x]B$ without parenthesis
of \cite{CF58}. Both notations flow in English when read from left to right:
`substitute $N$ for free $x$ in $B$'. The action is substitute-for, not
replace-by. Other notations are, of course, valid.

\emph{Normal forms}, collected in set $\NF$ (Figure~\ref{fig:lam-sets}), are
terms without redexes. They consist of variables, and abstractions and
applications in normal form.  \emph{Neutrals} have been introduced in
Section~\ref{sec:the-setting}. Examples of neutrals are $xy$, $x(\lambda
x.y)$, and $x((\lambda x.y)z)(y(\lambda x.x))$. The last neutral has a neutral
$y(\lambda x.x)$ as its second operand. A neutral cannot evaluate to a redex
unless a suitable operand can be substituted for the head variable. For this,
it is necessary for the neutral to occur within an abstraction that has the
head variable for parameter. For example, in the neutral $xN$ the head
variable is free and cannot be substituted. In $(\lambda x.xN)(yM)$, the
abstraction body $xN$ is a neutral, but the substitution of $yM$ for $x$
delivers $yMN$ which is itself a neutral, not a redex. In contrast, in
$(\lambda x.xN)(\lambda y.y)$ the contractum $(\lambda y.y)N$ is a redex.

An evaluation \emph{strategy} is defined generically as a function subset of
the reduction relation, \eg~\cite[p.\,53]{Bar84} \cite[p.\,130]{Ter03}. For
instance, a one-step strategy $\st$ is defined as a total function on terms
such that either $M\equiv\st(M)$ for all $M\in\NT{F}$ where $\NT{F}$ is a set
of final forms, or $M\rel\st(M)$ for all $M\not\in\NT{F}$. An iterative
multiple-step definition is obtained by transitive closure:
$M\rel^{+}\st^n(M)$ for $n>0$. For each input term the strategy determines a
unique \emph{evaluation sequence} up to, if finite, a final result in
$\NT{F}$. The strategy \emph{converges} (opposite, \emph{diverges}) when the
evaluation sequence is finite (opposite, infinite). The term is said to
converge (opposite, diverge) under the strategy. The big-step notation,
$\st(M) = N$, can be translated in terms of one-step $\st$ as follows: either
$M\in\NT{F}$ and $\st(M)\equiv M \equiv N$, or $M\not\in\NT{F}$ and
$M\rel^{+}\st^n(M)$ and $\st^n(M)\equiv N$ for $n>0$ and $N\in\NT{F}$.

A generic definition of a strategy is uninformative because the particular
redex contracted in the reduction is unspecified. We illustrate the evaluation
sequence of a strategy by underlining the chosen redex: $\underline{(\lambda
  x.\lambda y.xy)(\lambda x.z)}x \rel \underline{(\lambda y.(\lambda x.z)y)x}
\rel \underline{(\lambda x.z)x} \rel z$. The second term in the sequence has
another redex, $(\lambda y.\underline{(\lambda x.z)y})x$, which is not chosen
by the strategy. In this example a final result is a normal form. Operational
semantics styles \cite{Plo81,Lan64,Kah87,FH92} define strategies precisely by
stipulating how the chosen redex is located.

In the pure lambda calculus it is undecidable whether an arbitrary term can be
evaluated to normal form. A strategy that aims at delivering normal forms is
\emph{full-reducing}. A strategy that delivers a normal form when the latter
exists is \emph{normalising}. A strategy that diverges for terms for which a
normal form exists is \emph{not normalising}. The archetypal term without a
normal form is $(\lambda x.xx)(\lambda x.xx)$ which is abbreviated as
$\OMEGA$. The word \emph{complete} (opposite, \emph{incomplete}) is also used
in the literature as a synonym of `normalising'. We have not been able to
trace the genesis of `complete'. We think it comes from the idea that the
strategy's evaluation sequence completes to (reaches) the final form, or
perhaps from extending to the strategy the `complete' property of the
reduction relation which obtains when it is confluent and terminating
\cite[p.\,13]{Ter03}. A strategy determines a unique evaluation sequence so,
if it converges in a final form, then it may be called complete by extension.
We prefer `complete' because of its generality: final forms need not be
exclusively normal forms. Following an anonymous reviewer's suggestion, we say
that a strategy is `complete for' the particular set of terms.

\section{Call-by-value in the lambda-value calculus}
\label{app:lambda-value}
The motivation for the lambda-value calculus of \cite{Plo75} was to prove that
the \SECD\ abstract machine of the ISWIM language \cite{Lan65} was correct
with respect to a calculus with open and divergent terms and a reduction
relation with blocking, and provide a standardisation theorem for a complete
strategy with blocking that delivers values as results. The blocking of
neutrals guarantees the confluence of the reduction relation.

The correctness of \SECD\ is proven with respect to the big-step evaluator
`$\mathrm{eval}_V$' \cite[p.\,130]{Plo75} and the small-step evaluator
`$\rel_V$' \cite[p.\,136]{Plo75} by (1) a standardisation theorem
\cite[Thm.\,3,\,p.\,137]{Plo75} that proves that `$\rel_{\mathrm{V}}$' is
complete for values, (2) by the one-step equivalence between
`$\mathrm{eval}_V$' and `$\rel_{\mathrm{V}}$' (proven on paper
\cite[Thm.\,4,\,p.\,142]{Plo75}, not by program transformation), and (3) by
the one-step equivalence of `$\mathrm{eval}_V$' and \SECD\ (also proven on
paper \cite[Thm.\,1,\,p.\,131]{Plo75}, not by program transformation). The
proofs of (2) and (3) assume \emph{closed input terms}. Indeed, the evaluators
do not correspond for open input terms because `$\mathrm{eval}_V$' ignores
neutrals whereas `$\rel_V$' evaluates the first operand of a neutral in its
4th clause. Stuck terms do not arise with closed input terms. Under this
assumption, `$\mathrm{eval}_V$' and `$\rel_V$' implement call-by-$\WNF$
(Section~\ref{sec:prelim:cbv}). With open terms, call-by-$\WNF$ differs from
`$\rel_V$' and cannot even be compared to `$\mathrm{eval}_V$' because the
latter does not consider neutrals.

Discussions about the pure version of lambda-value, its operational and
denotational theory, and its full-reducing strategies can be found among
others in \cite{EHR91,EHR92,PR99,RP04,AP12,CG14,GPN16,KMRDR21,AG22}. Normal
forms with stuck terms and a standardisation theorem for a full-reducing and
complete strategy were not considered in \cite{Plo75}. The issues have been
addressed among others in \cite{PR99,RP04,AP12,CG14,GPN16,KMRDR21,AG22}. A
standardisation theorem for a full-reducing complete strategy that delivers
lambda-value normal forms is given in \cite[p.\,70]{RP04}. A standardisation
theorem for a variant full-reducing strategy that evaluates blocks
left-to-right is given in \cite[Sec.\,7.1]{GPN16} along with a standard theory
based on the operational relevance (`solvability') and sequentiality of some
open neutrals and stuck normal forms. The relation between lambda-value and
the sequentiality of the linear lambda calculus was first studied in
\cite{MOTW95}. For the most complete up-to-date work at the time of writing on
lambda-value solvability see \cite{AG22}, and the references on
standardisation there cited.

\section{Overview of LWF}
\label{app:LWF}
The LWF optimisation is outlined in \cite[p.\,144]{OS07} using code written in
the maths font we have used in this paper for lambda calculus terms. To avoid
confusion, in this appendix we use the same code symbols as \cite{OS07} but in
typewriter font, which is appropriate because LWF is applied to code.

Fusion a is program-transformation optimisation that combines a composition
$\mathtt{f} \circ \mathtt{g}$ into a single function $\mathtt{h}$. LWF is a
general fusion optimisation where $\mathtt{f}$ and $\mathtt{g}$ can be general
recursive functions, of any algebraic type, and with any number of curried or
uncurried parameters. LWF is lightweight because it consists of a few simple,
terminating, and automatable steps that perform local syntactic
transformations without the need for heuristics. Thus, LWF can be embedded in
the inlining optimisation of a compiler.

Given two general recursive functions $\mathtt{f} = \mathtt{fix}\:
\mathtt{f}. \lambda \mathtt{x}. \mathtt{E}_{\mathtt{f}}$ and $\mathtt{g} =
\mathtt{fix}\: \mathtt{g}. \lambda \mathtt{x}.  \mathtt{E}_{\mathtt{g}}$, the
composition $\mathtt{f} \circ \mathtt{g}$ is fused by promoting $\mathtt{f}$
through $\mathtt{g}$'s fixed-point operator $\mathtt{fix}$ in the following
steps.
\begin{enumerate}
\item Inline $\mathtt{g}$ in $\mathtt{f} \circ \mathtt{g}$ to obtain
  $\mathtt{f} \circ (\lambda \mathtt{x}. \mathtt{E}_{\mathtt{g}})$ and then
  beta-convert that to $\lambda \mathtt{x}. \mathtt{f}\,
  \mathtt{E}_{\mathtt{g}}$.
\item Transform the application $\mathtt{f} \, \mathtt{E}_{\mathtt{g}}$ to
  $\mathtt{E}_{\mathtt{g}}'$ by distributing $\mathtt{f}$ to tail position in
  $\mathtt{E}_{\mathtt{g}}$.
\item Inline $\mathtt{f}$ once in $\mathtt{E}_{\mathtt{g}}'$ and simplify to
  obtain $\mathtt{E}_{\mathtt{f,g}}$.
\item Replace the occurrences of $\mathtt{f} \circ \mathtt{g}$ in
  $\mathtt{E}_{\mathtt{f,g}}$ by a new function name $\mathtt{h}$ and generate
    a new binding $\mathtt{h} = \mathtt{fix}\: \mathtt{h}. \lambda
    \mathtt{x}. \mathtt{E}_{\mathtt{h}}$.
\end{enumerate}
Because the first steps pull out $\mathtt{g}$'s case expression, the
transformation requires $\mathtt{f}$ to be strict (to pattern-match) on its
first argument.

The soundness of LWF is proven in \cite[Sec.\,3]{OS07} assuming a `standard
call-by-name' denotational semantics for a small core functional programming
language where $\mathtt{f}$ and $\mathtt{g}$ are least fixed points in the
space of continuous functions. We believe that semantics is assumed for
theoretical simplicity and reasons of space in a conference paper because LWF
is used on code written in \emph{both} lazy (non-strict, call-by-\emph{need},
\eg\ Haskell) and eager (strict, call-by-\emph{value}, \eg\ Standard ML)
functional programming languages.  Because a denotational soundness proof for
the latter would depend on the particular call-by-value theory of choice (\eg\
Appendix~\ref{app:lambda-value}), the authors left that proof as future work
and suggested one possible choice in \cite[Sec.\,9]{OS07}.

An alternative would have been to prove the call-by-value operational (rather
than denotational) soundness of the steps, but that would have also enlarged
the paper. At any rate, such correctness is assumed by LWF's syntactic
simplicity and by the empirical evidence provided by the correct results
obtained in the `expected typical cases' \cite[p.\,144]{OS07}, where the fused
program $\mathtt{h}$ is syntactically identical to a known existing correct
solution $\mathtt{h}'$.  The code examples with correctly fused results in
\cite{OS07} are written in strict Standard ML which certainly has no
call-by-name denotational semantics. Applications of LWF in strict Standard ML
presented in subsequent work \cite{Dan05,DM08,GPN14} confirm the soundness of
the transformation by producing fused programs which are syntactically
identical to the expected correct known ones.

The only caveat on soundness, discussed in \cite[Sec.\,6.3]{OS07}, occurs when
there is an expression in between a binding $\mathtt{x} =
\mathtt{g}\:\mathtt{e}$ and the application $\mathtt{f}\:\mathtt{x}$. These
two must be placed together to fuse them into $\mathtt{h}\:\mathtt{e}$, and
this may require swapping the in-between expression, changing the evaluation
order. For pure code, whether in lazy or eager functional programming
languages, the swapping may affect divergence order and performance in
particular cases \cite[p.\,152]{OS07}. When the swapped expressions have
side-effects, $\mathtt{f} \circ \mathtt{g}$ and $\mathtt{h}$ would differ in
side-effect order. It is for this reason that \cite[p.\,151,\,154]{OS07} state
that LWF is unsound for \emph{strict and impure languages}. However, the
problem affects \emph{impure code} in any language (the authors seem to
consider only the pure subset of Haskell without impure features
\cite[p.\,151]{OS07}). When there are no in-between bindings or the impact of
the change in evaluation order (divergence and side-effects) is factored in
and taken into account, the LWF transformation is unproblematically applied in
lazy and eager languages for strict and non-strict functions.

In our proof by LWF (Section~\ref{sec:lwf-steps}) a change of evaluation order
does occur in the swapping of bindings for neutral terms in
Step~(\ref{lwf:step-swap}). This change only affects the evaluation order of
redexes in operands $N_i$ of neutrals $x N_1 \cdots N_n$. Such redexes are
non-overlapping (commuting) and the change in evaluation order means the
one-step equivalence is `modulo commuting redexes' for some strategies. The
use of impure \texttt{IO} in the Haskell code is consistent with this because
the result term is printed as a side-effect, showing an equivalence modulo
commuting those redexes.

The LWF transformation does not involve eta-conversion, $(\lambda
\mathtt{x}.\: \mathtt{f}\:\mathtt{x}) \rel \mathtt{f}$, nor eta-expansion,
$\mathtt{f} \rel (\lambda \mathtt{x}.\: \mathtt{f}\:\mathtt{x})$, which is
fortunate, as the first is sensitive to types \cite{JG95} and the second
introduces the equation $\bot = \lambda \mathtt{x}. \bot$.

\section{Haskell's operational semantics}
\label{app:haskell-semantics}
Haskell was deliberately conceived as a research language that should avoid
`success'. This included avoiding standardisation and a formal semantics that
would either tie up the language or would have to keep up with its fast
evolution \cite{HHPW07}. However, there is a Haskell 2010 standard, \emph{The
  Haskell 2010 Language Report} (available online), that includes a static
semantics and also an operational semantics, as we clarify below. The latest
version of the language is known as `GHC Haskell', namely, the latest version
compiled by the Glorious Haskell Compiler which is the \emph{de facto}
compiler. The static and operational semantics of GHC Haskell are scattered in
research papers, some of them listed on the Haskell website and wiki pages.

Haskell 2010 has a standard operational semantics. More precisely, any version
of Haskell has an implementation-based operational semantics. Haskell has a
stable small core language, System FC, which is a direct-style (in opposition
to CPS-style) version of the well-known explicitly-typed System F with extra
support for type casts and coercions \cite{Eis20}. Haskell code is statically
type-checked and desugared into System FC code on which GHC performs the
optimisations and generates target code for the STG abstract machine
\cite{PJ92}. This code is, in turn, finally compiled to C or to native machine
code.\footnote{See Chp.\,10 of the
  \href{https://downloads.haskell.org/ghc/latest/docs/users_guide/}{GHC Users
    Guide} (visited: July 2023).} System FC has an operational semantics given
by its compilation scheme and execution on the STG machine. Whatever the
version of Haskell used, what needs to be explained is the static semantics
and the translation to System FC. These are provided for Haskell 2010 by
\emph{The Haskell 2010 Language Report} and the GHC implementation of the 2010
standard. In this sense, Haskell 2010 has a well-defined operational
semantics.

With the exception of the `arbitrary rank polymorphism' (\texttt{RankNTypes})
extension, the Haskell code in this paper conforms to the Haskell 2010
standard. Type classes for ad-hoc polymorphism are supported by the standard,
with the \texttt{Monad} type class and the \texttt{IO} instance shipped with
the Haskell 2010 standard library. Type classes have a well-known and simple
dictionary semantics \cite{HHPJW96}. The \texttt{RankNTypes} extension enables
GHC to type-check the type of our plain and generic evaluators which have a
\texttt{forall} bounded on a monad type parameter. This extension has no
run-time effect as it only allows the code to type-\emph{check}. The Haskell
2010 standard uses a type-inference algorithm that cannot infer
rank-polymorphic types whose inhabitants are representable in System FC. Such
types are innocuous, type-checkeable, and translatable to System FC. At any
rate, in our case the extension is entirely optional and can be dropped if
desired.

The use of strict monads for strict evaluation does require some
elaboration. Haskell evaluates expressions to a weak head normal form. The
2010 standard provides features for strict evaluation to head normal form:
\texttt{seq}, strict function application (\texttt{\$!}), and strict value
constructor application (or strict data-types), all of them discussed in
Sections 4.2.1 and 6.2 of \emph{The Haskell 2010 Language Report}. These
features are partly reconcilable with parametricity theorems and the
correctness of some fusion transformations \cite{JV04}. GHC also offers
extensions for evaluation to normal form (\texttt{deepseq} and fully evaluated
data-types).  However, we have used a non-strict data-type \texttt{Term} and
the strict \texttt{IO} monad to fully evaluate Haskell expressions (evaluator
calls) to print terms.

\section{Completeness of leftmost and spine strategies}
\label{app:completeness-leftmost-spine}
The completeness and `completeness for' of the leftmost strategies is
supported by the Standardisation Theorem~\cite{CF58} which states that a
strategy that chooses the leftmost redex (the leftmost-outermost redex when
considering the abstract syntax tree of the term) is complete. Strict
strategies are thus incomplete because they evaluate the operand of a
leftmost-outermost redex.

The Standardisation Theorem is relaxed by the Quasi-Leftmost-Reduction Theorem
\cite[p.\,40]{HS08} which states that a strategy that \emph{eventually}
chooses the leftmost-outermost redex is complete. Such strategies are called
\emph{quasi-leftmost}. They subsume the leftmost and spine strategies, whose
completeness and completeness-for is supported by the theorem.

The Standardisation and Quasi-Leftmost-Reduction theorems are further relaxed
by the `needed reduction' theorem which states that a strategy that eventually
chooses the `needed' redexes is complete \cite[p.\,208]{BKKS87}. A needed
redex is a redex that is reduced in every reduction sequence to normal form
when the latter exists. Needed reduction eventually reduces all the needed
redexes (or its residuals). Intuitively, the critical point is in applications
$MN$ where the reduction of $M$ or $N$ may diverge. The operator $M$ should be
reduced enough to deliver an abstraction. The operand $N$ should not be
reduced because it could be discarded. Only terms that eventually appear in
operator position need be evaluated, and just enough (to head normal form) to
contract the redex.

\bibliographystyle{alphaurl}
\bibliography{bc}

\end{document}
